\newcommand{\fullv}[1]{#1} 
\newcommand{\confv}[1]{} 
\newcommand{\condInc}[1]{\fullv{#1}} 
\newtheorem{theorem}{Theorem}[section]
\newtheorem{lemma}[theorem]{Lemma}
\newtheorem{definition}[theorem]{Definition}
\newtheorem{example}[theorem]{Example}
\newcommand{\Gr}{G}
\newcommand{\Gmax}{\Gr-\text{max}}
\newcommand{\rmap}{\vartheta_G}
\newcommand{\qmap}{(\Gr,\rmap)}
\newcommand{\br}[1]{\overline{#1}}
\newcommand{\bsim}{\ \sim\ }
\newcommand{\join}{\lor}
\newcommand{\meet}{\land}
\newcommand{\spec}{f}
\newcommand{\sbst}{\le} 
\newcommand{\M}{\mathcal{M}}  
\newcommand{\Mp}{\mathcal{M'}}
\newcommand{\Mpp}{\mathcal{M''}}
\newcommand{\N}{\mathcal{N}}
\newcommand{\Np}{\mathcal{N'}}
\newcommand{\St}{\ensuremath{S}}  
\newcommand{\Stp}{\ensuremath{S'}}
\newcommand{\MSt}{\ensuremath{\M_{\St}}}  
\newcommand{\MpSt}{\ensuremath{\M'_{\St}}}
\newcommand{\iSt}{\ensuremath{S_0}}
\newcommand{\MiSt}{\ensuremath{\M_{\iSt}}}
\newcommand{\MpiSt}{\ensuremath{\M'_{\iSt}}}
\newcommand{\Tr}{\ensuremath{T}}  
\newcommand{\MTr}{\ensuremath{\M_\Tr}}
\newcommand{\MpTr}{\ensuremath{\M'_\Tr}}
\newcommand{\NTr}{\ensuremath{\N_\Tr}}
\newcommand{\Lb}{\ensuremath{L}}  
\newcommand{\MLb}{\ensuremath{\M_{\Lb}}} 
\newcommand{\MpLb}{\ensuremath{\M'_{\Lb}}} 
\newcommand{\NLb}{\ensuremath{\N_{\Lb}}} 
\newcommand{\AP}{\ensuremath{\ensuremath{AP}}}
\newcommand{\MAP}{\ensuremath{\ensuremath{\M_{\AP}}}}
\newcommand{\MpAP}{\ensuremath{\ensuremath{\M'_{\AP}}}}
\newcommand{\Mq}{\br{\M}}
\newcommand{\MqSt}{\Mq_S}
\newcommand{\MqTr}{\Mq_T}
\newcommand{\Nq}{\br{\N}}
\newcommand{\sq}{\br{s}}
\newcommand{\tq}{\br{t}}
\newcommand{\kdef}{\ensuremath{\left( \St, \iSt, \Tr, \Lb, \AP \right)}\xspace}
\newcommand{\Prg}{P}     
\newcommand{\Prc}{P}     
\newcommand{\PrgR}{P'}   
\newcommand{\PrcR}{P'}    
\newcommand{\Ac}{AC}     
\newcommand{\action}[1]{action{(#1)}}  
\newcommand{\NR}{\mathcal{N'}}
\newcommand{\NRSt}{\mathcal{N'}_{\St}}
\newcommand{\CTL}{{\rm CTL}}
\newcommand{\ctlmod}[1]{\textup{\textsf{#1}}}   
\newcommand{\true}{\ensuremath{\mathsf{true}}\xspace}
\newcommand{\false}{\ensuremath{\mathsf{false}}\xspace}
\newcommand{\A}{\ctlmod{A}}
\newcommand{\E}{\ctlmod{E}}
\newcommand{\U}{\ctlmod{\,U\,}}
\newcommand{\V}{\ctlmod{\,R\,}}
\newcommand{\R}{\ctlmod{\,R\,}}
\newcommand{\AG}{\ctlmod{AG}}
\newcommand{\EG}{\ctlmod{EG}}
\newcommand{\AF}{\ctlmod{AF}}
\newcommand{\EF}{\ctlmod{EF}}
\newcommand{\AV}{\ctlmod{AR}}
\newcommand{\EV}{\ctlmod{ER}}
\newcommand{\AX}{\ctlmod{AX}}
\newcommand{\EX}{\ctlmod{EX}}
\renewcommand{\tt}{\ensuremath{\mathit{tt}}}     
\newcommand{\ff}{\ensuremath{\mathit{ff}}}
\newcommand{\vphi}{\varphi}
\newcommand{\AND}{\bigwedge}
\newcommand{\PA}{/\!/}		
\newcommand{\ar}{\rightarrow}	
\newcommand{\fa}{\forall}
\newcommand{\ind}{\hspace*{2em}}
\newcommand{\la}[1]{\mbox{$\, \stackrel{#1}{\rightarrow} \,$}}
\newcommand{\pl}{\!\parallel\!}
\newcommand{\pj}{\ensuremath{\!\upharpoonright\!}}
\newcommand{\sat}{\models}
\newcommand{\set}[1]{\ensuremath{\{#1\}}}
\newcommand{\stt}{~|~}
\newcommand{\sub}{\subseteq}
\newcommand{\un}{\cup}
\newcommand{\ie}{i.e.,\xspace}
\newcommand{\eg}{e.g.,\xspace}
\newcommand{\wrt}{w.r.t.\xspace}
\newcommand{\etal}{et.~al.\xspace}
\newcommand{\remove}[1]{}
\title{Model Repair via Symmetry}
\author{Paul Attie\footnote{School of Computer and Cyber Sciences, Augusta University, Augusta GA}\,\,\, and William Cocke\footnote{Correspoding author, School of Computer and Cyber Sciences, Augusta University, Augusta GA, wcocke@augusta.edu}}
\begin{document}
\maketitle

\begin{abstract}
The symmetry of a Kripke structure $\M$ has been exploited to replace a
model check of $\M$ by a model check of the potentially smaller structure
$\N$ obtained as the quotient of $\M$ by its symmetry group $\Gr$.
We extend previous work to model repair: identify a substructure that
satisfies a given temporal logic formula. We show that the
substructures of $\M$ that are preserved by $\Gr$ form a lattice that maps to the substructure lattice of $\N$.
We also show the existence of a monotone Galois
connection between the lattice of substructures of $\N$ and the
lattice of substructures of $\M$ that are "maximal"
w.r.t. an appropriately defined group action of $\Gr$ on $\M$.
These results enable us to repair $\N$ and then to lift the repair to $\M$. We can thus repair symmetric finite-state concurrent
programs by repairing the corresponding $\N$, thereby effecting program
repair while avoiding state-explosion.
\end{abstract}
\section{Introduction}

The model checking \cite{CES86} problem for (finite-state) concurrent programs 
$\Prg = \Prg_1 \cdots \pl \Prg_n$ \wrt a temporal logic formula $\spec$ is to verify that the Kripke structure $\M$ that is generated by the execution of $\Prg$ satisfies $\spec$. 
A major obstacle has been \emph{state explosion}: in general, the size of $\M$ is exponential in the number of processes $n$.
When $\M$ and $\spec$ both have a high degree of symmetry in the process index set $\set{1,\ldots,n}$, the use of \emph{symmetry reduction} to ameliorate state-explosion  can yield significant reduction in the complexity of model checking $\M \sat \spec$.

An approach that has been thoroughly investigated is the computation of the quotient structure $\Mq = \M/\Gr$ of $\M$ by the symmetry group $\Gr$ of symmetries of both $\M$ and $\spec$. $\Mq$ can be computed directly from $\Prg$, thereby avoiding the expensive computation of the large structure $\M$. Since model checking 
$\Mq \sat \spec$ is linear in the size of $\Mq$
\cite{CES86}, this provides significant savings if 
$\Mq$ is small. A related approach is the use of structural methods to express symmetric designs, \eg parameterized systems, where processes are all instances of a common template (possibly with a distinguished controller process), and rings of processes, where all communication is between a process and its neighbors in the ring.

\subsection{Our contributions} 
We present a theory of substructures of Kripke structures which provides a correspondence theorem between certain substructures of the Kripke structure $\M$ and the substructures of the quotient structure $\Mq$. Substructures of $\M$ correspond to possible repairs of $\M$, while substructures of $\Mq$ correspond to repairs of $\Mq$. Because the correspondence between the two lattices of substructures preserves semantic valuation of certain formulae, we can associate certain repairs of $\Mq$ with repairs of $\M$. Syntactically the correspondence of Kripke substructures lattices is of independent mathematical interest as an example of a monotone Galois connection.

We then build on this theory to extend symmetry-based model checking to 
\emph{concurrent program repair}: given a concurrent program $\Prg$
that may not satisfy $\spec$, modify $\Prg$ to produce a program that does satisfy $\spec$.
Given $\Prg$, $\spec$, and a group $\Gr$ of symmetries of both $\Prg$ and $\spec$, our method directly computes the quotient $\M/\Gr$ (following \cite{EmSi96}), then repairs $\M/\Gr$, using the algorithm of \cite{ABS18}, and finally, extracts a correct program from the repaired structure. 
\section{Related Work}

As discussed above, group theoretic approaches to symmetry-reduction  \cite{EN95,EmSi96,ES97,CEJS98,ET98,ET99,EHT00,EW05,CJEF96}
compute the quotient $\M/\Gr$ and model check that, instead of the original (much larger) structure $\M$. A key consideration is to devise a symmetry group $\Gr$ that is as large as possible, so as to achieve the greatest possible reductions. 
Parametrized systems \cite{EN96,EN98,EK00,EK02,EN03,EK03,EK03c,CGJ97,JJW0S20,CTV06} are those which consist of an arbitrary number of isomorphic instances of a single process template. The usual approach is to establish a \emph{cutoff}: a small value $n$ such that correctness of a system with $n$ instances implies correctness of arbitrarily large systems.
Special subclasses of parametrized systems include 
token-passing systems \cite{AJKR14,CTTV04}, which  circulate one or more tokens amongst isomorphic processes, and systems based on rendezvous/broadcast/multicast \cite{GS92,EFM99,EK03b}. 
Bloem \etal \cite{ParametrizedSurvey15} surveys the decidability of verification problems for parametrized systems.

Buccafurri \etal \cite{BEGL99} posed the repair problem for CTL and solved it
using abductive reasoning to generate repair suggestions that are verified by
model checking. 
Jobstmann \etal \cite{JGB05} and Staber \etal \cite{SJB05}
present game-based repair methods for programs and circuits.
Their method is complete for invariants only.
%
Carrillo and Rosenblueth \cite{CR09} presents a syntax-directed repair method that uses
``protections'' to deal with conjunction, and a representation of
Kripke structures as ``tables''. 
%
Chatzieleftheriou \etal \cite{CBSK12} repair abstract structures, using Kripke modal
transition systems and 3-valued CTL semantics.
von Essen and Jobstmann \cite{EJ15} present a game-based repair method which attempts to keep the repaired program close to the original faulty program, by 
also specifying a set of traces that the repair must 
leave intact.

\section{Preliminary Definitions}
In this section we include the formal definitions necessary for structure repair via symmetry. The section is broken into two parts: the first part covers the basic definitions of Kripke structures and CTL as can be found in \cite{EC82}; the second part covers the basic connections with group theory.

\subsection{Kripke Structures and CTL}
\label{sec:ctl}
Given a set $\AP$ of atomic propositions we define the propositional
branching-time temporal logic \CTL \cite{Eme90,EC82} by the
following grammar:
 \[
 \vphi ::= \true \mid \false \mid p \mid \neg \vphi \mid \vphi \land \vphi \mid \vphi \lor \vphi \mid 
           \AX \vphi \mid \EX \vphi \mid \A[\vphi \V \vphi] \mid \E[\vphi \V \vphi]
 \]
where $p \in \AP$, and $\true, \false$ are constant propositions whose
interpretation is the semantic truth values $\tt, \ff$ respectively.
A propositional formula is a \CTL formula not containing the $\AX$, $\EX$, $\AV$, $\EV$ modalities.
%
The semantics of $\CTL$ are defined with respect to a Kripke structure:

\begin{definition}[Kripke structure]
\label{defn:structure}
\sloppy A \textbf{Kripke structure} $\M$ is a tuple $\kdef$ where $\St$ is a finite nonempty set of states, $\iSt \sub \St$ is a nonempty set of initial states, 
$\Tr \subseteq (\St \times \St)$ is a 
transition relation, $\AP$ is a set of atomic propositions, and 
$\Lb: \St \ar 2^{\AP}$ is a labeling function that associates each state $s \in \St$ with a subset of atomic
propositions, namely those that hold in state $s$.  
We require that $\M$ be total, i.e., 
$\forall s \in \St, \exists t \in \St: (s,t) \in \Tr$. 
\end{definition}

When referring to the constituents of $\M = \kdef$, we write 
$\MSt, \MiSt, 
\MTr, \MLb,$ and $\MAP$ respectively. 
State $t$ is a \emph{successor} of state $s$ in $M$ iff $(s,t) \in \Tr$.
 A path $\pi$ in $\M$ is a (finite or infinite) sequence
of states, $\pi =s_{0}, s_{1}, \ldots$, such that $\forall i \geq 0: (s_i,s_{i+1}) \in \Tr$.  A
fullpath is an infinite path. 
State $t$ is reachable from state $s$ iff there is a path from $s$ to $t$.
State $t$ is reachable iff there is a path from an initial state to $t$.
Without loss of generality, we assume that a Kripke structure does not
contain unreachable states, \ie every $s \in \St$ is reachable.

Throughout the rest of the paper we will refer to two running examples of Kripke structures. We will refer to these as Example 2-mutex and Example Box. 

\begin{example}[$2$-mutex]
\label{ex2mutex}
This is the classic $2$-mutex example. Our Kripke structure corresponds to two processes acting concurrently. Each process can be in one of 3 states, either Neutral, Trying, or Critical and can transiting from Neutral to Trying, from Trying to Critical, and from Critical to Neutral. 

We represent this Kripke structure diagrammatically in the left hand side of Figure \ref{fig:mutex2}. We see all of the transitions represented by arrows, the initial states are shaded green, and the labels can be seen in each state. 
\end{example}

\begin{example}[Box]\label{exbox}
The so-called Box Kripke structure in Figure~\ref{fig:box} is an example 
of a highly symmetric structure, as we will discuss in section \ref{sec:sym}. This structure has 4 states and is represented in Figure \ref{fig:box}. Of note, we assign all states with the same label, making Box a boring structure with respect to semantic evaluation. However, the simplicity of Box will be useful when considering the syntactic concept of a group of state mappings and the various substructures of Box. 
\end{example}

\begin{wrapfigure}{r}{2in}
    \centering
    \includegraphics[scale = .2]{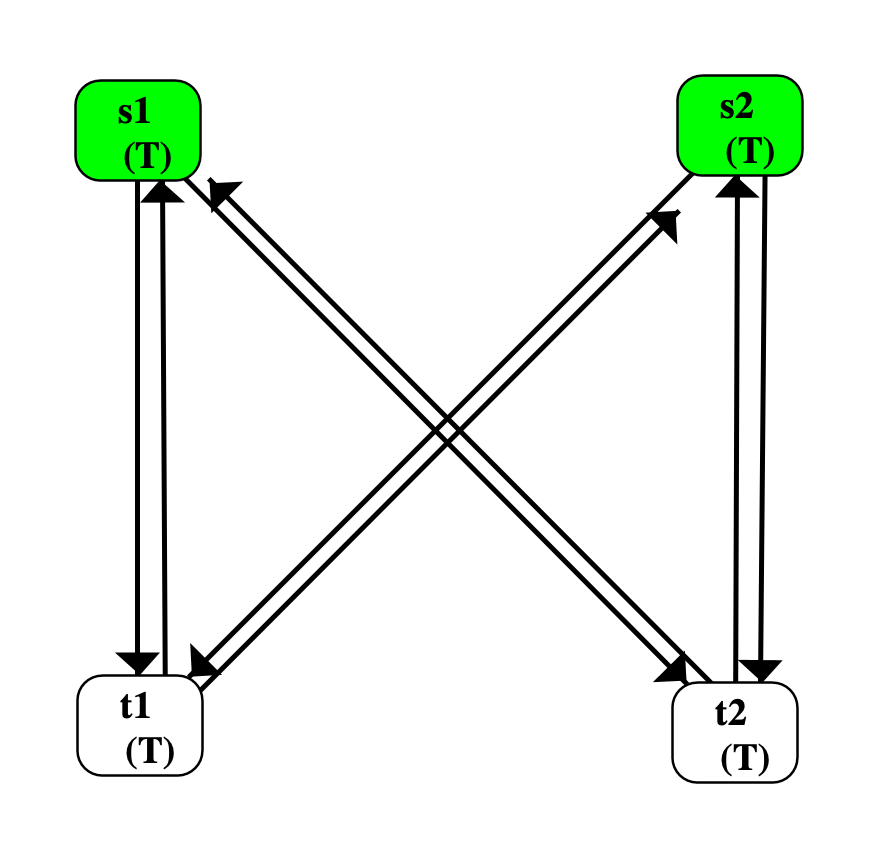}
    \caption{The Box Kripke structure. Useful for visualizing concepts.}
    \label{fig:box}
\end{wrapfigure}


A CTL formula $\vphi$ is evaluated (\ie is true or false) in a state $s$ of a Kripke structure $\M$ \cite{EC82}. Its value in $s$ depends on the value of its subformulae in states that are reachable from $s$. 
Write $\M, s \sat \vphi$ when $s$ is true in state $s$ of structure $\M$, and write $\M \sat \vphi$ to abbreviate $\fa s_0 \in S_0: \M, s_0 \sat \vphi$,
\ie $\vphi$ holds in all initial states of $\M$.
Appendix~\ref{app:ctl} gives the formal definition of $\sat$, which proceeds by induction on the structure of CTL formulae.

$[\vphi \V \psi]$ ($\vphi$ ``releases'' $\psi$) means that $\psi$ must hold up to and
including the first state where $\vphi$ holds, and forever if $\vphi$ never holds.
We introduce the abbreviations 
$\A[\phi \U \psi]$ for $\neg \E[\neg \vphi \V \neg \psi]$,
$\E[\phi \U \psi]$ for $\neg \A[\neg \vphi \V \neg \psi]$,
$\AF \vphi$ for $\A[\true \U \vphi]$,
$\EF \vphi$ for $\E[\true \U \vphi]$,
$\AG \vphi$ for $\A[\false \V \vphi]$,
$\EG \vphi$ for $\E[\false \V \vphi]$.
We use $\tt$, $\ff$ for the (semantic) truth values of $\true$, $\false$,
respectively. Whereas $\true$, $\false$ are atomic propositions whose
interpretation is always $\tt$, $\ff$, respectively.


\section{Substructures of Kripke structures}

Symmetry-based repairs are motivated by the correspondence between certain substructures of two related Kripke structures which we introduce as Theorem \ref{thm:correspondence}. A substructure of a Kripke structure is defined as follows:

\begin{definition}[Substructure]
\label{defn:substructure}
Given Kripke structures $\M$ and $\Mp$, we say that 
$\Mp$ is a \textbf{substructure} of $\M$, denoted 
$\Mp \sbst \M$, iff the following all hold:
\begin{enumerate}
    \item $\MpSt \sub \MSt$.
    \item $\MpiSt = \MiSt \cap \MpSt$. 
    \item $\MpTr \sub \MTr$. 
    \item $\MpAP = \MAP$. 
    \item $\MpLb = \MLb \pj \Stp$ (where
$\pj$ denotes domain restriction).
\end{enumerate}
\end{definition}


We require substructures to be total. To reiterate, a substructure $\N$ of a Kripke structure $\M$ is \textbf{total}. For mathematical necessity in what follows, we allow for the `empty' substructure. Since substructures must be total, some Kripke structures contain only two substructures, i.e., the empty substructure and the whole structure. 

We return briefly to our examples.

\begin{example}[\textbf{Example 2-mutex}]
Let $\M$ be the Kripke structure correspond to our running example 2-mutex (see Figure \ref{fig:mutex2}).

We can form a number of substructures of $\M$. For example, there is a substructure of $\M$ consisting only of states $s_0, s_1, s_2, s_3$ and $s_5$ along with all edges from $\M$. This substructure corresponds to computations wherein each process must return to neutral before the other process can enter the trying state. The states $s_0, s_1, s_3$ also form a substructure. 
\end{example}

\begin{example}[\textbf{Example Box}]
Like our 2-mutex example, Example Box has many substructures. We present some of these substructures in Figure \ref{fig:box_sub}.  

\begin{figure}
    \centering
    \includegraphics[scale = .1]{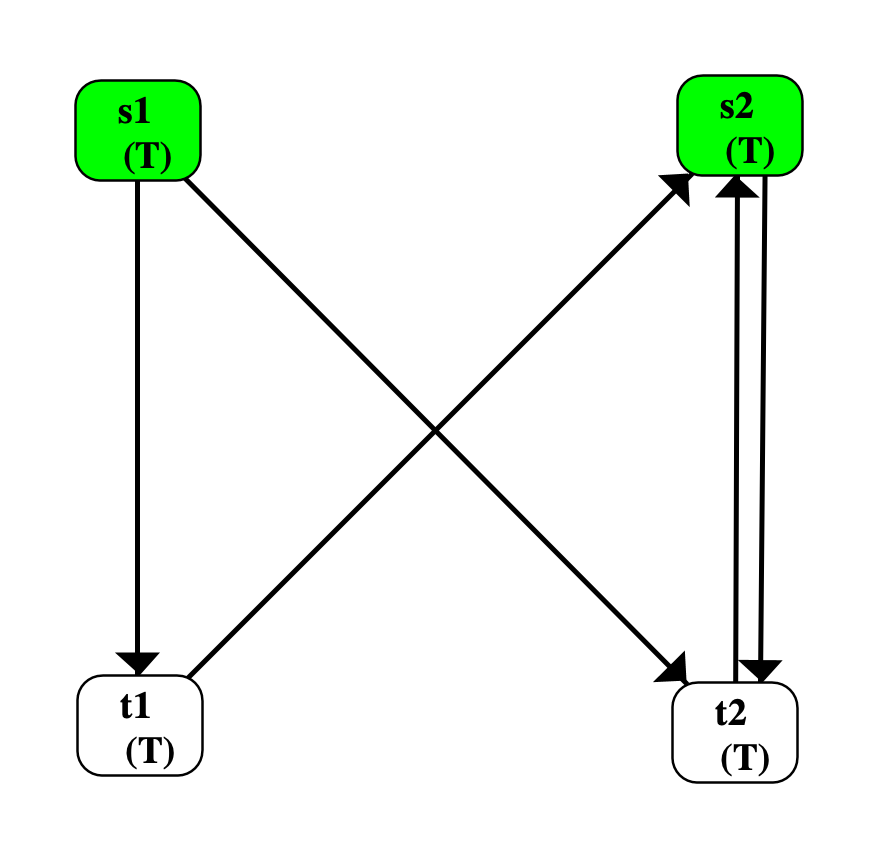}
    \includegraphics[scale = .1]{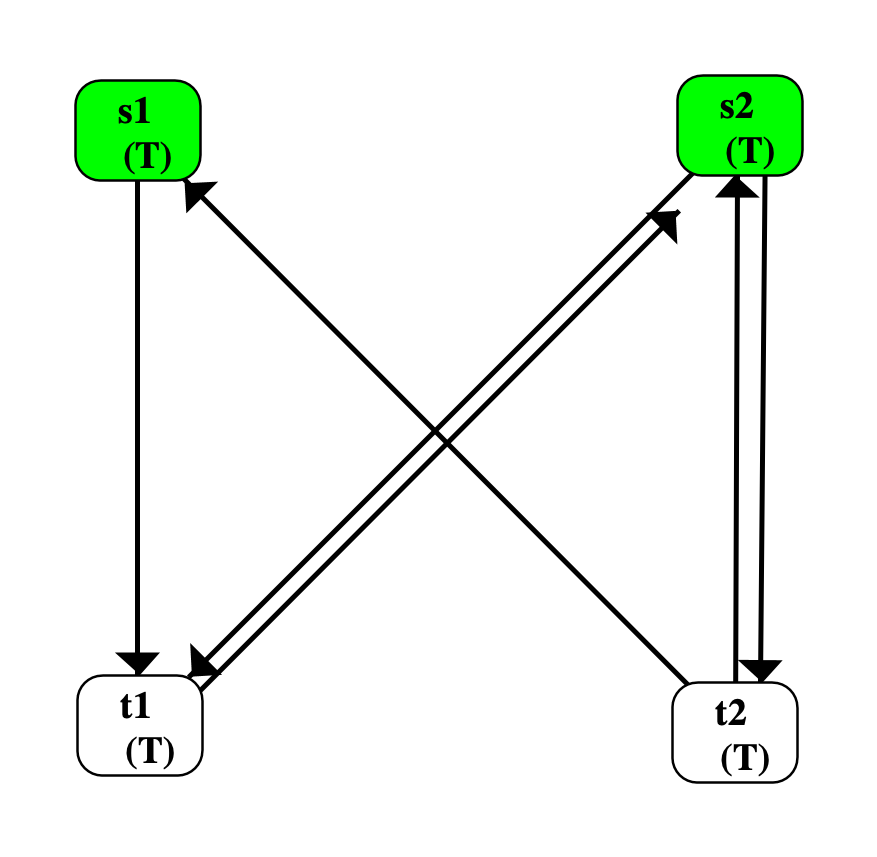}
    \includegraphics[scale = .1]{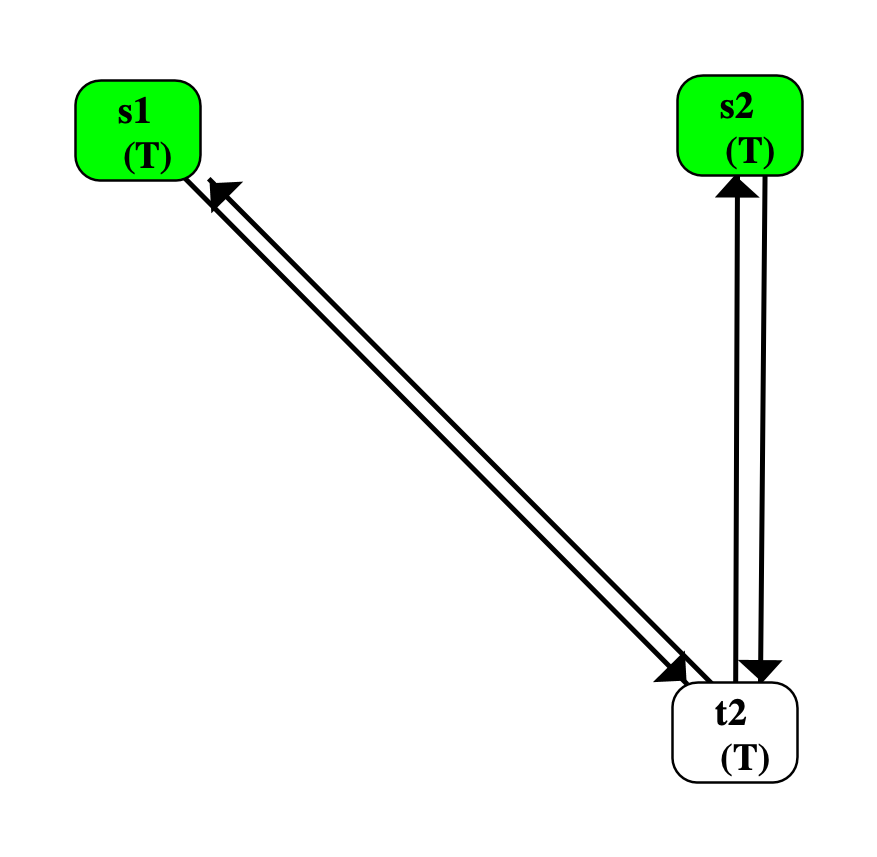}
    \caption{Some substructures of Example Box. There are many substructures not shown.}
    \label{fig:box_sub}
\end{figure}
\end{example}

As seen in the above examples, even a simple Kripke structure can have many different substructures, even on the same set of states. 

We can define a partially ordered set of substructures of a Kripke structure. This partially ordered set is a lattice i.e., it has a join and a meet operation. Before we define this lattice, we  investigate a few more basic properties regarding substructures of Kripke structures. 

Given a Kripke structure $\M$, a substructure $\Mp$ of $\M$, and a substructure $\Mpp$ of $\Mp$ it is immediate that $\Mpp$ is a substructure of $\M$. 

\begin{lemma}\label{lem:str_union}
Let $\mathcal{M}$ be a Kripke structure and suppose that $\mathcal{N}$ and $\mathcal{N}'$ are substructures of $\mathcal{M}$. Then 
\[\mathcal{N} \vee \mathcal{N}' = (\mathcal{N}_S \cup \mathcal{N}'_S,\, \mathcal{N}_{S_0} \cup \mathcal{N}'_{S_0},\, \mathcal{N}'_R \cup \mathcal{N}'_R,\, \mathcal{M}_L \pj (\mathcal{N}_S \cup \mathcal{N}'_S),\, \mathcal{M}_X)\] is the smallest substructure of $\mathcal{M}$ containing both $\mathcal{N}$ and $\mathcal{N}'$. 
\end{lemma}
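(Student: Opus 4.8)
The plan is to verify the two separate claims in the statement: first, that the proposed tuple $\N \vee \N'$ is genuinely a substructure of $\M$ (that it satisfies all five conditions of Definition~\ref{defn:substructure} and is total); and second, that it is the \emph{smallest} such substructure containing both $\N$ and $\N'$. I would treat these as two distinct tasks, since the first is a direct verification against the definition while the second is an order-theoretic minimality argument.

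First I would check that the candidate tuple is a substructure of $\M$. Writing the candidate as $\N'' = (\N_S \cup \N'_S, \N_{S_0} \cup \N'_{S_0}, \N_R \cup \N'_R, \M_L \pj (\N_S \cup \N'_S), \M_X)$, I would confirm each clause in turn. Clause~1 ($\N''_S \sub \M_S$) follows since $\N_S, \N'_S \sub \M_S$. Clause~3 ($\N''_R \sub \M_R$) and clause~4 ($\N''_X = \M_X$) are immediate from the definitions of $\N, \N'$ as substructures. Clause~5 holds by construction, since the labeling is defined as the restriction $\M_L \pj \N''_S$. The one clause needing a short calculation is clause~2, the initial-state condition $\N''_{S_0} = \M_{S_0} \cap \N''_S$: I would show $(\N_{S_0} \cup \N'_{S_0}) = \M_{S_0} \cap (\N_S \cup \N'_S)$ by using $\N_{S_0} = \M_{S_0} \cap \N_S$ and $\N'_{S_0} = \M_{S_0} \cap \N'_S$ together with the distribution of intersection over union. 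Finally, totality of $\N''$ holds because each state $s \in \N''_S$ lies in $\N_S$ or $\N'_S$, and already had an outgoing edge in the corresponding total substructure, which is retained in $\N''_R$.

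Second I would establish minimality. This has two parts: that $\N''$ contains both $\N$ and $\N'$ (i.e.\ $\N \sbst \N''$ and $\N' \sbst \N''$), and that any substructure $\mathcal{P} \sbst \M$ with $\N \sbst \mathcal{P}$ and $\N' \sbst \mathcal{P}$ satisfies $\N'' \sbst \mathcal{P}$. For the containment $\N \sbst \N''$, I would verify the substructure conditions with $\M$ replaced by $\N''$; the key points are $\N_S \sub \N''_S$, $\N_R \sub \N''_R$, and the compatibility of the restricted labelings, all of which hold because unions only add states and edges. For minimality, given any common upper bound $\mathcal{P}$, I would argue $\N''_S = \N_S \cup \N'_S \sub \mathcal{P}_S$ and $\N''_R = \N_R \cup \N'_R \sub \mathcal{P}_R$ directly from $\N, \N' \sbst \mathcal{P}$, and then the initial states and labeling are forced by clauses~2 and~5 once the state sets are contained, yielding $\N'' \sbst \mathcal{P}$.

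The main obstacle I anticipate is not any single computation but a subtle well-definedness issue: I must make sure that the proposed tuple really \emph{is} a legal Kripke substructure before calling it a join, and the most delicate point is confirming that it is total, since taking a union of edge sets could in principle leave some state with no successor only if the construction dropped edges --- but it does not, so I would be careful to point out that totality is inherited precisely because $\N''_R$ retains every edge of both $\N$ and $\N'$. I would also flag a likely typo in the statement, where the transition component reads $\mathcal{N}'_R \cup \mathcal{N}'_R$ but should read $\mathcal{N}_R \cup \mathcal{N}'_R$; the proof should use the latter. Apart from that, the argument is a routine but careful unwinding of Definition~\ref{defn:substructure}, and the order-theoretic minimality follows formally once the set-level containments are in hand.
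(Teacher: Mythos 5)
Your proposal is correct and follows essentially the same route as the paper's proof: verify the candidate tuple is a substructure (with totality as the key point, inherited because each state retains an outgoing edge from whichever of $\mathcal{N}$, $\mathcal{N}'$ it came from), then check containment of both and minimality against an arbitrary common upper bound. You are in fact somewhat more careful than the paper, which skips the initial-state clause verification entirely, and you correctly flag the typo $\mathcal{N}'_R \cup \mathcal{N}'_R$ in the statement.
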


\begin{proof}
We will write $\mathcal{N}_\vee $ for $\mathcal{N} \vee \mathcal{N}'$. By construction $\mathcal{N}_\vee$ contains both $\mathcal{N}$ and $\mathcal{N}'$. Let $\mathcal{T}$ be any substructure of $\mathcal{M}$ containing both $\mathcal{N}$ and $\mathcal{N}'$. Clearly $\left(\mathcal{N}_\vee
\right)_S \subseteq \mathcal{T}_S$, $\left(\mathcal{N}_\vee\right)_{S_0} \subseteq \mathcal{T}_{S_0}$, $\mathcal{T}_R \subseteq \mathcal{T}_R$, and $\mathcal{T}_L$ extends $\left(\mathcal{N}_\vee\right)_L$. 

It remains to show that $\mathcal{N}_\vee$ is a substructure, i.e., that $\mathcal{N}_\vee$ is total. Given a state $s_1$ in $\left(\mathcal{N}_vee\right)_S$, we know that either $s\in \mathcal{N}_S$ or $s\in \mathcal{N}_S'$. If $s\in \mathcal{N}_S$, then there is an $t\in \mathcal{N}_S$ such that $(s,t)\in \mathcal{N}_\Tr \subseteq \left(\mathcal{N}_\vee\right)_\Tr$. By symmetry if $s\in \mathcal{N}_S'$ then there is an $t\in \mathcal{N}_S'$ with $(s,t)\in \left(\mathcal{N}_\vee\right)_\Tr$. 
\end{proof}

Given a finite set $X=\{X_0,X_1,\dots,X_n\}$ of substructures of $\mathcal{M}$, we can define the structure $\bigvee X = X_0 \vee X_1 \vee \dots \vee X_n$.

\begin{lemma}
Let $\M$ be a Kripke structure and suppose that $\mathcal{N}$ and $\mathcal{N}'$ are substructures of $\mathcal{M}$. Then there is a largest substructure of $\mathcal{M}$ contained in both $\mathcal{N}$ and $\mathcal{N}'$.
\end{lemma}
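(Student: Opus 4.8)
The plan is to realize the desired largest common substructure not as a naive intersection but as a join of all common substructures, exploiting the universal property established in Lemma~\ref{lem:str_union}. The reason to avoid the componentwise intersection is the totality requirement: intersecting the states and transitions of $\mathcal{N}$ and $\mathcal{N}'$ can strand a state whose every successor has been removed, so the intersection need not be a (total) substructure at all. The join operation, by contrast, is guaranteed by Lemma~\ref{lem:str_union} to produce a total substructure, so routing the argument through joins sidesteps the difficulty entirely.

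First I would collect all common substructures into a single family. Let
\[
\mathcal{C} = \{ \mathcal{T} : \mathcal{T} \sbst \M,\ \mathcal{T} \sbst \mathcal{N},\ \mathcal{T} \sbst \mathcal{N}' \}.
\]
This family is nonempty, since the empty substructure lies in it, and it is finite, because $\M$ has only finitely many states and transitions and hence only finitely many substructures. Using the finite-join notation $\bigvee \mathcal{C}$ introduced just after Lemma~\ref{lem:str_union}, set $\mathcal{D} = \bigvee \mathcal{C}$. By Lemma~\ref{lem:str_union}, together with its evident extension to finite families by induction, $\mathcal{D}$ is a substructure of $\M$; in particular it is total.

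Next I would verify that $\mathcal{D}$ is itself a common substructure, i.e. $\mathcal{D} \in \mathcal{C}$. Every $\mathcal{T} \in \mathcal{C}$ satisfies $\mathcal{T} \sbst \mathcal{N}$, so $\mathcal{N}$ is a substructure of $\M$ containing every member of $\mathcal{C}$. Because $\mathcal{D}$ is the \emph{smallest} substructure of $\M$ containing all members of $\mathcal{C}$ (the universal property of the join from Lemma~\ref{lem:str_union}), it follows that $\mathcal{D} \sbst \mathcal{N}$; the identical argument with $\mathcal{N}'$ in place of $\mathcal{N}$ gives $\mathcal{D} \sbst \mathcal{N}'$. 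Hence $\mathcal{D} \in \mathcal{C}$.

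Finally, $\mathcal{D}$ is the largest element of $\mathcal{C}$: for any $\mathcal{T} \in \mathcal{C}$ we have $\mathcal{T} \sbst \bigvee \mathcal{C} = \mathcal{D}$ directly from the definition of the join, and we have just shown $\mathcal{D} \in \mathcal{C}$. Thus $\mathcal{D}$ is a substructure of $\M$ contained in both $\mathcal{N}$ and $\mathcal{N}'$ that contains every such substructure, which is exactly the claimed largest common substructure. The one point requiring care — and the only place totality could fail — is the passage from the family $\mathcal{C}$ to its join; this is precisely what Lemma~\ref{lem:str_union} supplies, so no separate totality check is needed, and I expect this to be the sole substantive step of the argument.
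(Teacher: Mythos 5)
Your proposal is correct and follows essentially the same route as the paper's own proof: both take the (finite) join of the collection of all common substructures, using the join's universal property from Lemma~\ref{lem:str_union} to conclude that this join is itself contained in $\mathcal{N}$ and $\mathcal{N}'$ and is therefore the largest such substructure. Your write-up merely makes explicit the details the paper leaves implicit (nonemptiness via the empty substructure, finiteness, and why a naive componentwise intersection could fail totality).
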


\begin{proof}
The empty structure is contained in both $\mathcal{N}$ and $\mathcal{N}'$. Let $X$ be the collection of structures contained in both $\mathcal{N}$ and $\mathcal{N}'$. Then $\bigvee X$ is contained in both $\mathcal{N}$ and $\mathcal{N}'$ and is the largest substructure of $\mathcal{M}$ contained in both $\mathcal{N}$ and $\mathcal{N}'$. 
\end{proof}

\begin{definition}[Join, Meet of substructures]
Let $\mathcal{N}'$ and $\mathcal{N}'$ be two substructures of $\mathcal{M}$. The \textbf{join} of $\mathcal{N}$ and $\mathcal{N}'$, written $\mathcal{N} \vee \mathcal{N}'$, is the smallest substructure of $\mathcal{M}$ containing both $\mathcal{N}$ and $\mathcal{N}'$. The \textbf{meet} of $\mathcal{N}$ and $\mathcal{N}'$, written $\mathcal{N} \wedge \mathcal{N}'$, is the largest substructure of $\mathcal{M}$ contained in both $\mathcal{N}$ and $\mathcal{N}'$. 
\end{definition}

We note that for two substructures $\N$ and $\Np$ of $\M$, the join $\N \join \Np$ has the simple description as given in Lemma \ref{lem:str_union}. However, the meet $\N \meet \Np$, while well-defined, does not have such a simple description. It is possible that for two substructure $\N$ and $\Np$ of a Kripke structure $\M$, there are no substructures contain in both $\N$ and $\Np$. Hence the largest substructure contained in both $\N$ and $\Np$ could be empty.

We can now define a lattice of substructures $\Lambda_\mathcal{M}$ for a given structure $\mathcal{M}$. 

\begin{definition}[Lattice of Substructures]
\label{defn:lattice_substructures}
Given a Kripke structure $\mathcal{M}$ the \textbf{lattice of substructures of $\mathcal{M}$} is  
\[ \Lambda_\mathcal{M} = \left(\left\{\mathcal{N}: \mathcal{N} \text{ is a substructure of }  \mathcal{M} \right\}, \leq  \right),\] 
\noindent where the meet and join in $\Lambda_\mathcal{M}$ are the meet and join defined for substructures earlier. 
\end{definition}
There is another lattice of substructures related to $\M$. Given a subset $\Stp \subseteq \M_\St$ we can define the substructure $\N = \langle \Stp \rangle$ of $\M$ generated by $\Stp$ as the largest substructure $\Np$ of $\M$ such that $\Np_\St \subseteq \Stp$, equivalently  \[\N = \bigvee_{\substack{\Np \leq \M\\ \Np_\St \subseteq \Stp}} \Np.\] 

\section{Quotient structures.} \label{sec:sym}
Broadly speaking, a symmetry of an object captures the concept of what we mean for two objects to be equal. For Kripke structures, we are in general not interested in full symmetry. Instead, we are interested in a concept of partial symmetry that allows for certain computations to take place. The reason for this is because true symmetry of a Kripke structure would have to act on not only the underlying state-space, but the set of atomic propositions as well. We introduce a related concept, which we call a state-mapping. Some authors call a `state-mapping' a `symmetry' of $\mathcal{M}$. Due to the mathematical meaning of the word symmetry, we do not follow this tradition.

\subsection{Groups Acting on Kripke Structures}

\begin{definition}
A state-mapping of $\M$ is a graph isomorphism of the state-space of $\M$ such that its restriction to the initial states is also an isomorphism, e.g., takes initial states to initial states. Formally, for a Kripke structure $\M$, a \textbf{state-mapping} of $\M$ is a bijection $f:\MSt \rightarrow \MSt$ such that:
\begin{itemize}
    \item $f(\MiSt)= \MiSt;$
    \item For states $s,t \in \MSt$ we have that $(s, t) \in \MTr \iff (f(s),f(t)) \in \MTr$.
\end{itemize}
\end{definition}

The set of all state-mappings of $\M$ forms a group. 

\begin{definition}[$\Gr$-closed]
For a group $G$ of state-mappings of a Kripke structure $\M$, a substructure $\N$ of $\M$ is called \textbf{$G$-closed}, if $G$ is a group of state-mappings of $\N$, i.e., for every $g\in G$ and $s\in \N_{\St}$ we have $g(s) \in \N_{\St}$.
\end{definition}

\begin{lemma}\label{lem:g-closedlattice}
Let $\M$ be a Kripke structure and let $\Gr$ be a group of state mappings of $\M$. Let $\N, \Np$ be two $\Gr$-closed substructures of $\M$. Then 
$\N \join \Np$ and $\N \meet \Np$ are both $\Gr$-closed.
\end{lemma}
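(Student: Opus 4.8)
The plan is to show that every $g \in \Gr$ acts as an automorphism of the whole substructure lattice $\Lambda_\M$, and that being $\Gr$-closed is precisely the statement that $\N$ (resp.\ $\Np$) is a fixed point of each such automorphism. The conclusion then follows at once, because an automorphism of a lattice commutes with $\join$ and $\meet$.

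First I would set up the action of a single state-mapping $g$ on substructures. Given a substructure $\mathcal{A} \sbst \M$, define $g(\mathcal{A})$ to be the structure with state set $g(\mathcal{A}_S)$, initial states $g(\mathcal{A}_{S_0})$, transition relation $\{(g(s),g(t)) : (s,t)\in\mathcal{A}_R\}$, atomic propositions $\M_X$, and labeling $\M_L \pj g(\mathcal{A}_S)$. The first key step is to check that $g(\mathcal{A})$ is again a substructure of $\M$: conditions (1),(3),(4),(5) of Definition \ref{defn:substructure} are routine (for (3) one uses that $g$ is a state-mapping of $\M$, so $(s,t)\in\M_R$ implies $(g(s),g(t))\in\M_R$), while condition (2) uses that $g$ fixes $\M_{S_0}$ setwise together with bijectivity to get $g(\mathcal{A}_{S_0}) = g(\M_{S_0}\cap\mathcal{A}_S) = \M_{S_0}\cap g(\mathcal{A}_S)$; totality of $g(\mathcal{A})$ follows from totality of $\mathcal{A}$, since $g$ carries a witnessing transition $(s,t)$ of $\mathcal{A}$ to a transition $(g(s),g(t))$ of $g(\mathcal{A})$ with $g(t)\in g(\mathcal{A}_S)$. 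Next I would verify that $g$ is order-preserving, $\mathcal{A}\sbst\mathcal{B} \Rightarrow g(\mathcal{A})\sbst g(\mathcal{B})$, which is immediate from the definitions, and that $g^{-1}$ (also in $\Gr$) induces the inverse map. Hence $g$ is an automorphism of $\Lambda_\M$ and therefore preserves both operations: $g(\mathcal{A}\join\mathcal{B}) = g(\mathcal{A})\join g(\mathcal{B})$ and $g(\mathcal{A}\meet\mathcal{B}) = g(\mathcal{A})\meet g(\mathcal{B})$.

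The second ingredient is the observation that $\N$ being $\Gr$-closed is exactly the statement that $g(\N)=\N$ for every $g\in\Gr$: state-closure $g(\N_S)\subseteq\N_S$ together with the same inclusion applied to $g^{-1}$ gives $g(\N_S)=\N_S$, and since each $g$ restricts to a state-mapping of $\N$ it also preserves $\N_R$ and $\N_{S_0}$, so the image structure $g(\N)$ coincides with $\N$; likewise $g(\Np)=\Np$. Combining the two ingredients, for every $g\in\Gr$ we obtain $g(\N\join\Np)=g(\N)\join g(\Np)=\N\join\Np$ and $g(\N\meet\Np)=g(\N)\meet g(\Np)=\N\meet\Np$, so both $\N\join\Np$ and $\N\meet\Np$ are fixed by every element of $\Gr$, i.e.\ $\Gr$-closed.

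I expect the main obstacle to be the first step, namely proving that $g(\mathcal{A})$ is genuinely a substructure, where the totality requirement and the initial-state condition (2) are the delicate points, rather than the lattice bookkeeping that follows. For the join one can bypass the lattice machinery entirely, since Lemma \ref{lem:str_union} gives an explicit description and $g(\N_S\cup\Np_S)=g(\N_S)\cup g(\Np_S)=\N_S\cup\Np_S$ immediately. For the meet, if one prefers to avoid the automorphism language, the same conclusion follows directly from maximality: $g(\N\meet\Np)$ is a substructure contained in $g(\N)=\N$ and in $g(\Np)=\Np$, so by maximality of the meet $g(\N\meet\Np)\sbst\N\meet\Np$, and applying this to $g^{-1}$ and then pushing forward by $g$ yields the reverse inclusion, hence equality.
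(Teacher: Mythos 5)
Your proposal is correct, but it takes a different route from the paper. The paper argues element-wise: for the join it uses the explicit description from Lemma \ref{lem:str_union} (states and transitions are the unions) and checks directly that each $g$ maps states into states and transitions into transitions; for the meet it appeals to maximality, asserting that $g(s)$ and $(g(s),g(t))$ lie in \emph{some} substructure contained in both $\N$ and $\Np$ and hence in $\N \meet \Np$. You instead lift each $g \in \Gr$ to an operation $\mathcal{A} \mapsto g(\mathcal{A})$ on all of $\Lambda_\M$, do the (nontrivial) verification that $g(\mathcal{A})$ is again a total substructure satisfying the initial-state condition, conclude that $g$ is a lattice automorphism, and recast $\Gr$-closedness as being a common fixed point of these automorphisms; the lemma then follows formally since automorphisms commute with $\join$ and $\meet$. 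The paper's argument is shorter and avoids constructing image structures, but your version buys two things: it proves a reusable, more general fact (the fixed points of any group of lattice automorphisms form a sublattice), and it repairs a genuine soft spot in the paper's meet case, whose proof never exhibits the substructure of $\N$ and $\Np$ containing $g(s)$ and whose phrasing (``Since $G$ is a group of state mappings of $\N \meet \Np$'') assumes the very conclusion being proved; your $g(\N \meet \Np)$, shown to be a substructure contained in $g(\N)=\N$ and $g(\Np)=\Np$, is exactly the missing witness, and your $g^{-1}$ trick supplies the reverse inclusion cleanly. Your side remarks (the join via the explicit description, the meet via maximality) essentially recover the paper's own proof, so nothing in your approach conflicts with it.
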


\condInc{
\begin{proof}
We first show that $\N\vee \N'$ is $G$-closed. Since $(\N\vee \Np)_\St = \N_\St \cup \Np_\St$, the group $G$ acts on the states of $\N\vee \Np.$ We will show that every element of $G$ is a state-mapping on $\N \vee \Np.$

Let $s,t \in (\N \vee \Np)_\St$. If $g(s) = g(t)$ then $s = t$ since $g$ is bijection on $\M_\St$. Similarly, for any $s \in (\N \vee \Np)_\St$, since $g^{-1}(s) \in (\N \vee \Np)_\St$. 

If $(s,t)\in (\N \vee \Np)_\Tr$, then $(s,t)$ is in one of $\N_\Tr$ or $\Np_\Tr$. In either case, $(g(s),g(t))$ is contained in the same set of transitions, and thus contained in $(\N \vee \Np)_\Tr$. 

We now consider $\N\land \Np$. Recall that $\N \wedge \Np$, is the largest substructure contained in both $\N$ and $\Np$. If $s \in (\N \wedge \Np)_\St$, and $g\in G$, then there is a substructure of $\N$ and $\Np$ containing $g(s)$. Moreover as the largest substructure of $\N$ and $\Np$, $\N \wedge \Np$ must contain this substructure, hence $g(s) \in \N\wedge \Np$. 

Consider $(s,t) \in (\N\land\Np)_\Tr$. Since $G$ is a group of state mappings of $\N \meet \Np$ there is a substructure of $\N$ and $\Np$ containing $(g(s),g(t))$. Since $\N\land \Np$ is the join of all substructures contained in $\N$ and $\Np$, we conclude that it must contain $(g(s),g(t))$ as well. 
\end{proof}
}

By Lemma \ref{lem:g-closedlattice}, we see that the $G$-closed substructures of $\M$ forms a sublattice of $\Lambda_\M$. 

\begin{definition}[Lattice of $G$-closed substructures] Given a Kripke structure $\M$ and a group $G$ of state mappings of $\M$, the poset of $G$-closed substructures of $\M$ forms a lattice. We call this lattice the \textbf{lattice of $G$-closed substructures of  $\M$} and write it as  $\Lambda_{\M,G}$. 
\end{definition}

\begin{example}[Example Box]\label{ex:box_gclosed}
Let $\M$ be Example Box, i.e., the Kripke structure described in Example \ref{exbox}. Let $g$ be the map that simultaneously switches $s_1$ and $s_2$, and switches $t_1$ and $t_2$, i.e. $g(s_1)=s_2, g(s_2) = s_1, g(t_1) = t_2, g(t_2) = t_1$. Let $G$ be the group consisting of $g$ and the identity map on $\MSt$. We note that $G$ is not the entire group of state-mappings of $\M$. The structure $\M$ has 10 $G$-closed substructures, including the empty structure. 
We present these structures in Figure \ref{fig:box_gclosed}.
\end{example}
 
\begin{figure}[th]
    \centering
    \includegraphics[scale = .12]{box.png}
    \includegraphics[scale = .12]{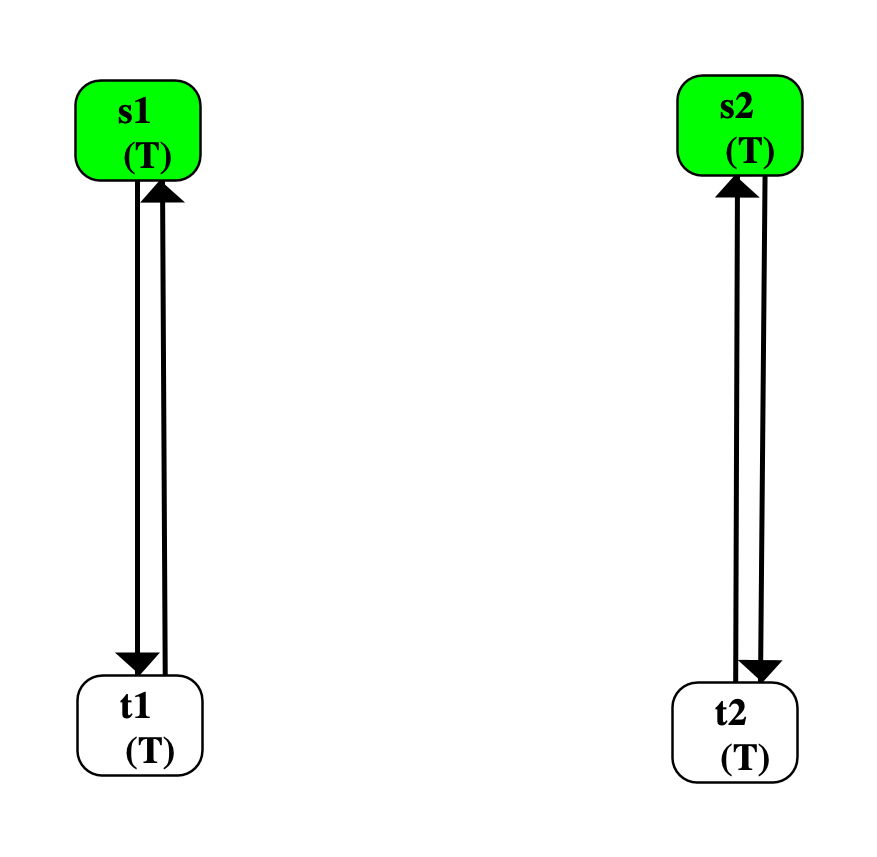}
    \includegraphics[scale = .12]{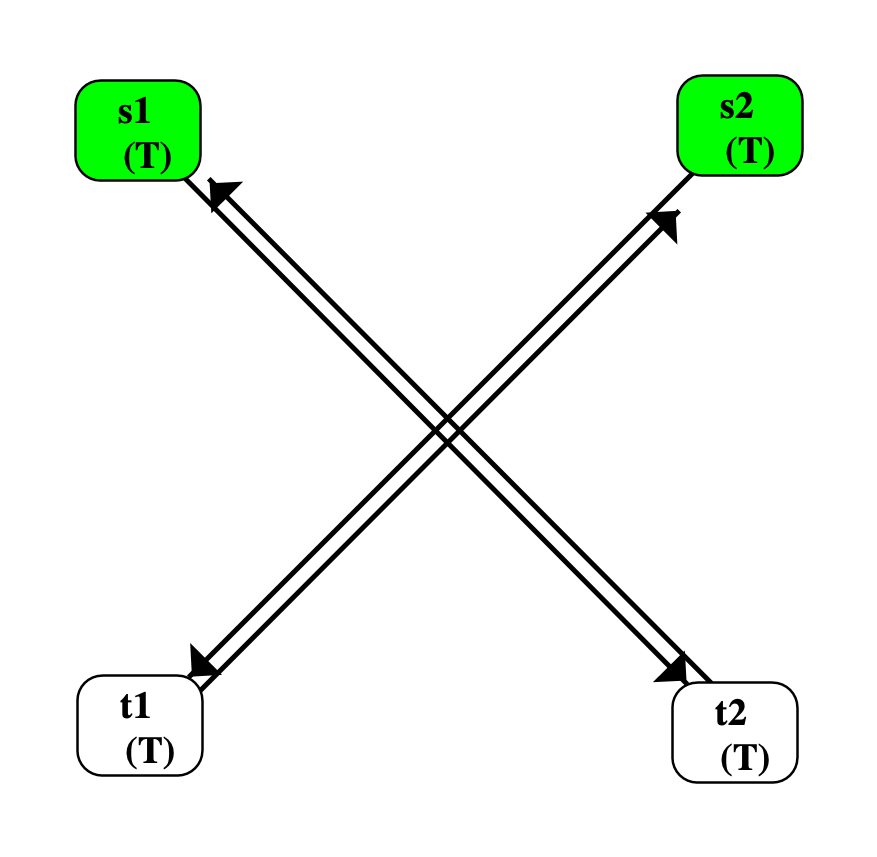}
    \includegraphics[scale = .12]{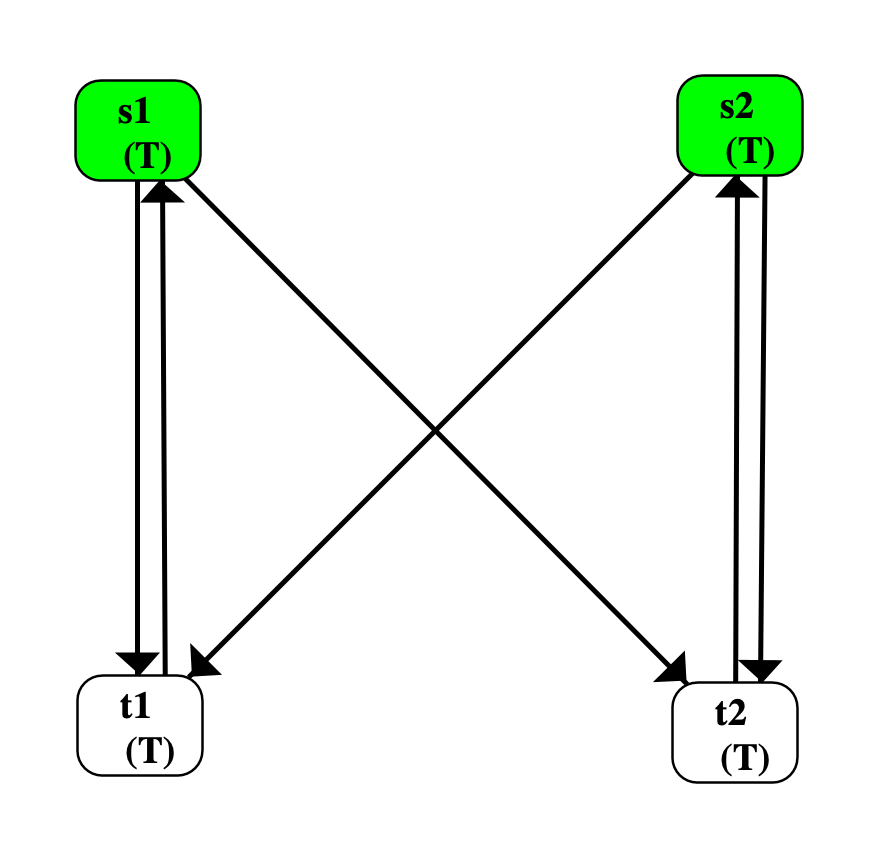}
    \includegraphics[scale = .12]{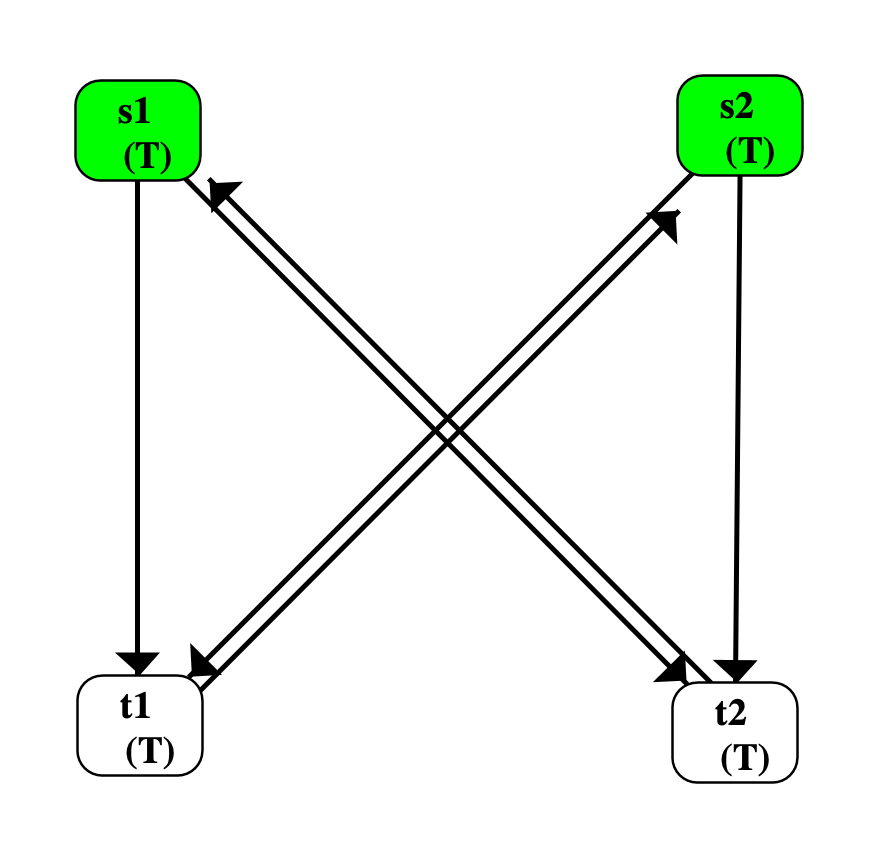}
    \includegraphics[scale = .12]{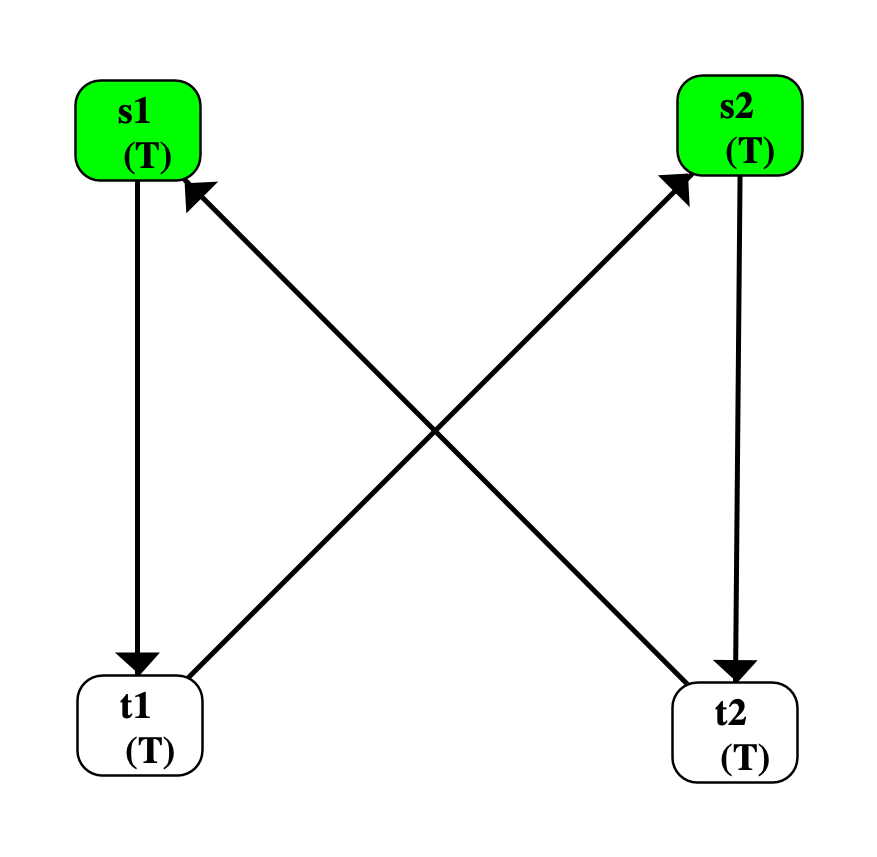}
    \includegraphics[scale = .12]{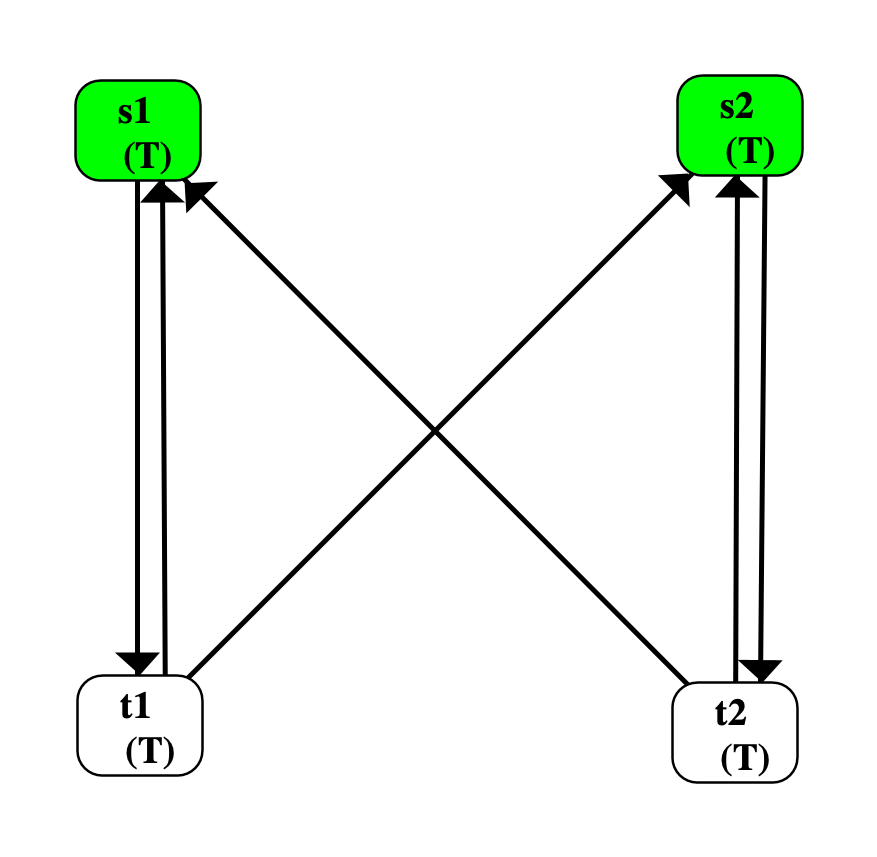}
    \includegraphics[scale = .12]{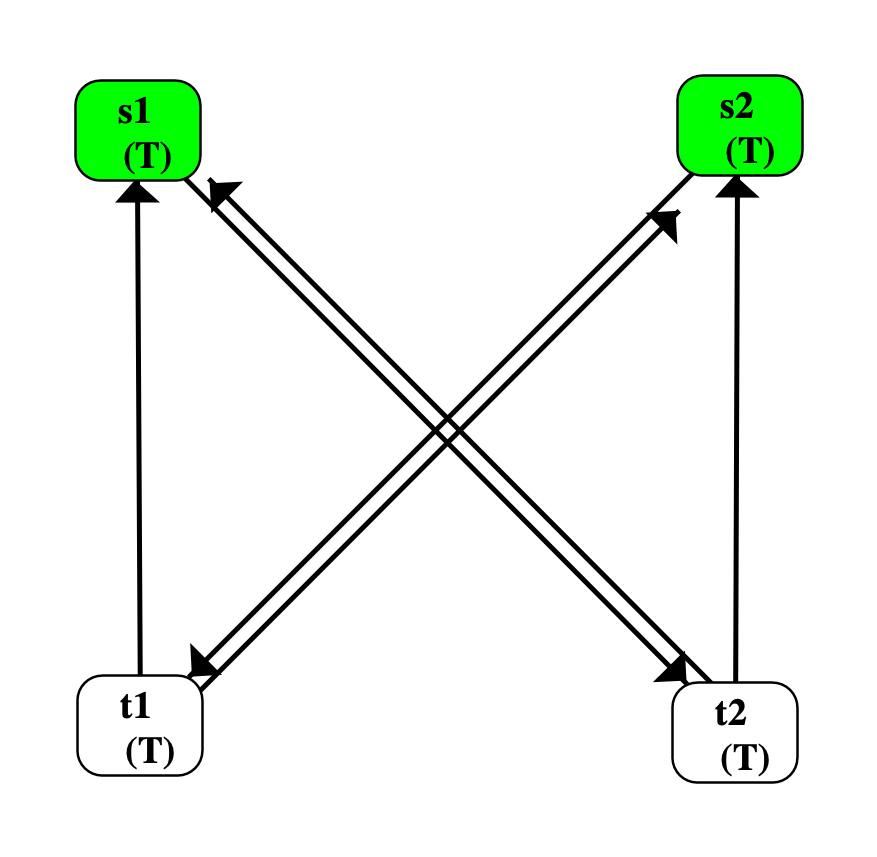}
    \includegraphics[scale = .12]{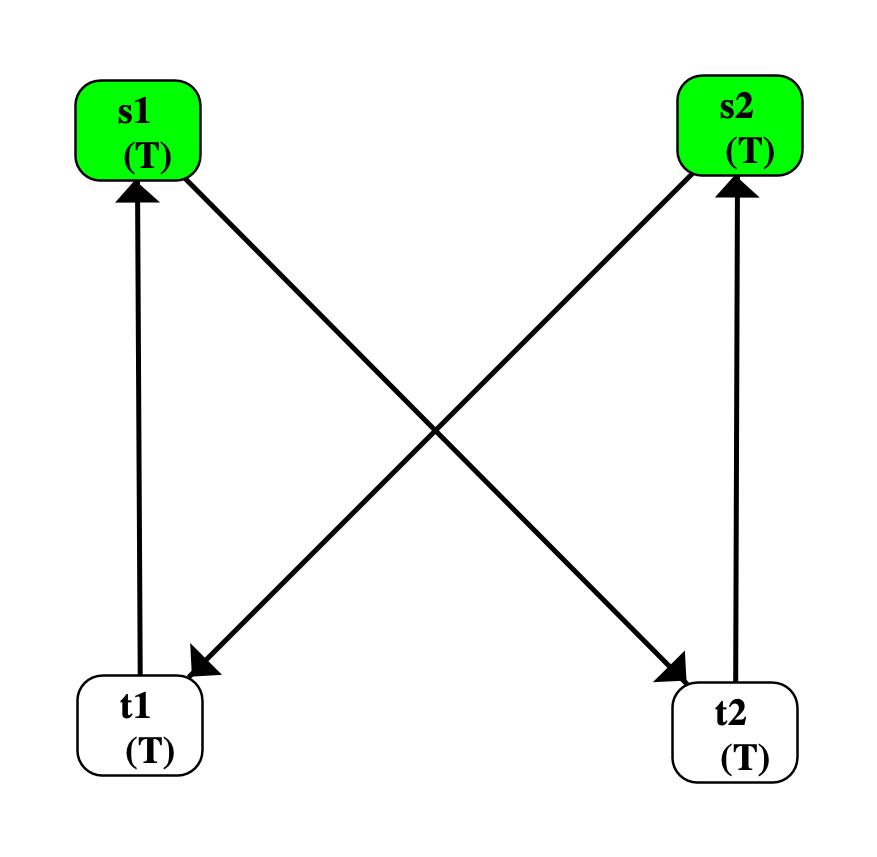}
    \caption{The nonempty $G$-closed substructure of Example Box. Where $G$ is the group generated by the simultaneous swapping of indexes of both the $s_i$ and the $t_i$.}
    \label{fig:box_gclosed}
\end{figure}

As evidenced by Figure \ref{fig:box_gclosed} in Example \ref{ex:box_gclosed}, a single Kripke structure can have many $G$-closed substructures. 

\subsection{Constructing the Quotient Model}

Give a group $G$ of state-mappings of a model $\M$, we want to construct a quotient model $\mathcal{M}_G$. However, as noted, state-mappings do not contain any information about $\mathcal{M}_L$. To remedy this situation, we need a function that assigns a representative to each orbit of $G$. 

\begin{definition}[Representative map]
Let $\M$ be a Kripke structure and suppose that $G$ is a group of state-mappings of $\mathcal{G}$. A \textbf{representative map} of $\M$ with respect to $G$ is a function $\rmap: \MSt \rightarrow \MSt$ satisfying the following:
\begin{itemize}
    \item For all $s, s' \in \MSt$, if there is a $g\in G$ such that $g(s) = s'$ then $\rmap(s) = \rmap(s')$. (respects orbits)
    \item For all $s, s' \in \MSt$, if there is no $g\in G$ such that $g(s) = s'$ then
    $\rmap(s) \neq \rmap(s')$. (separates orbits)
    \item For all $s\in \MSt$, we have that $\rmap(\rmap(s)) = \rmap(s)$, i.e., each orbit has a stable representative. (idempotent)
\end{itemize}
\end{definition}

We are now ready to define the quotient structure.

\begin{definition}[Quotient structure]
Given a Kripke structure $\M$, a group $\Gr$ of state-mappings of $\M$, and a representative map $\rmap$ of $\M$ with respect to $\Gr$, we define the \textbf{quotient structure} $\M/\rmap$ of $\M$ with respect to $\Gr$ and $\rmap$ as follows:

The set of states of $\M/\qmap$ consists of the image of $\rmap$. We borrow the quotient notation from group theory and write $\Mq$ for $\M/\qmap$ and $\sq$ for $\rmap(s)$. 

We have $(\sq,\tq)\in \MqTr$ if and only if there is some $s' \in \MSt$ with $\rmap(s') = \sq$ and some $t'\in \MSt$ with $\rmap(t') =  \tq$ such that $(s',t')\in \MTr$.

An initial state of $\Mq$ is the image under $\rmap$ of an initial state of $\M$.

We define $\overline{\mathcal{M}}_L:\overline{\mathcal{M}}_S\rightarrow 2^{AP}$ by $\overline{\mathcal{M}}_L(\sq) = \mathcal{M}_L(\rmap(s))$. 

$\overline{\M}_{\AP} = \MAP.$
\end{definition}

We note that the states of a quotient structure correspond exactly to the orbits of states of the original structure. For transitions, we have a slightly more subtle correspondence. 
Consider the following example:

\begin{example}[Example Box]\label{ex:box_quo}
\sloppy Let $\M$ and $G$ be as in Example \ref{ex:box_gclosed}. Let $\vartheta_G$ be defined by $\vartheta(x_i) = x_1$ for $x\in \{s,t\}$. Then the quotient structure $\M/(G,\vartheta_G)$ has exactly 2 states, $s_1 = \vartheta_G(s_1) = \vartheta_G(s_2)$ and $t_1 = \vartheta_G(t_1) = \vartheta_G(t_2)$ with transitions $(s_1,t_1),(t_1,s_1)$. Every $G$-closed substructure $\N$ of $\M$ in Figure \ref{fig:box_gclosed}, will map to this structure via \[\N \rightarrow \N/(G,\vartheta_G).\] Many of these substructures do no have $(s_1,t_1)$ as a transition at all. 
\end{example}

As shown in Example \ref{ex:box_quo}, the existence of a transition in the quotient structure, does not imply the existence of ``the same" transition in the original structure. However, it does imply the existence of a corresponding transition in the original structure. Hence it is true that for every path through a quotient structure there is a corresponding path through the original Kripke structure. This property is what allows for model checking of the original structure via model checking of the quotient. This path correspondence is formalized in the below theorem which has appeared in many texts. See for example the work by Emerson and Sistla \cite[\textbf{3.1}]{EmSi96}. 

\begin{theorem}[\textbf{Path Correspondence Theorem}]
\label{thm:path-correspondence}
Let $\M$ be a Kripke structure and let $G$ be a group of state-mappings of $\M$. Let $\rmap$ be any representative map of $\M$ with respect of $\Gr$. Write $\Mq$ for  $\M/\qmap$. There is a bidirectional correspondence between paths of  $\M$ and paths of  $\Mq$. Formally we have the following:
\begin{enumerate}
    \item If $x = s_0, s_1, s_2, \dots$ is a path in $\mathcal{M}$, then $\overline{x} = \overline{s_0}, \overline{s_1}, \overline{s_2}, \dots$ is a path in $\overline{\mathcal{M}}$ where $\overline{s_i} = \vartheta_G(s_i)$. 
    \item If $\overline{x} = \overline{s_0},\overline{s_1},\overline{s_2},\dots $ is a path in $\overline{\mathcal{M}}$, then for every state $s'_0 \in \mathcal{M}_S$ such that $\vartheta_G(s'_0) =  \overline{s_0}$ there is a path $s'_0, s'_1, s'_2,\dots$ such that $\vartheta_G(s'_i) = \overline{s_i}$. 
\end{enumerate}
\end{theorem}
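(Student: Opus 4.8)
The plan is to treat the two directions separately: the forward direction is immediate from the definition of the quotient transition relation, while the backward (lifting) direction needs an inductive construction along the given quotient path.

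For part (1), I would argue directly. Suppose $s_0, s_1, s_2, \dots$ is a path in $\M$, so $(s_i, s_{i+1}) \in \MTr$ for every $i$. The definition of $\MqTr$ declares $(\sq, \tq) \in \MqTr$ precisely when there are witnesses $s', t' \in \MSt$ with $\rmap(s') = \sq$, $\rmap(t') = \tq$, and $(s', t') \in \MTr$. Taking $s' = s_i$ and $t' = s_{i+1}$ as witnesses shows at once that $(\rmap(s_i), \rmap(s_{i+1})) = (\overline{s_i}, \overline{s_{i+1}}) \in \MqTr$, so $\overline{s_0}, \overline{s_1}, \dots$ is a path in $\Mq$.

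For part (2), I would build the lifted path one state at a time by induction along the quotient path. The base case is the given hypothesis $\rmap(s'_0) = \overline{s_0}$. For the inductive step, assume $s'_i$ has been constructed with $\rmap(s'_i) = \overline{s_i}$; the goal is to produce $s'_{i+1}$ with $(s'_i, s'_{i+1}) \in \MTr$ and $\rmap(s'_{i+1}) = \overline{s_{i+1}}$. The key step runs as follows. Since $(\overline{s_i}, \overline{s_{i+1}}) \in \MqTr$, the definition of the quotient transition relation supplies witnesses $u, v \in \MSt$ with $\rmap(u) = \overline{s_i}$, $\rmap(v) = \overline{s_{i+1}}$, and $(u, v) \in \MTr$. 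Now $\rmap(u) = \overline{s_i} = \rmap(s'_i)$, so by the \emph{separates orbits} property of the representative map, $u$ and $s'_i$ lie in the same $\Gr$-orbit; I choose $g \in \Gr$ with $g(u) = s'_i$. Because $g$ is a state-mapping it preserves $\MTr$, so $(g(u), g(v)) = (s'_i, g(v)) \in \MTr$. Setting $s'_{i+1} := g(v)$, it remains to verify the representative: since $g(v)$ lies in the same orbit as $v$, the \emph{respects orbits} property gives $\rmap(g(v)) = \rmap(v) = \overline{s_{i+1}}$. Thus $s'_{i+1}$ has both required properties, and iterating (indefinitely, if the path is infinite) yields the desired lift.

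I expect the argument to be routine once set up, with the only real subtlety being the choice of $g$. The main obstacle is conceptual rather than computational: one must notice that the witnesses $u, v$ handed over by the definition of $\MqTr$ need not have $u = s'_i$, and that the state-mapping $g$ is exactly the device that realigns the witness transition so its source becomes the already-chosen $s'_i$ while the \emph{respects orbits} property guarantees the target still projects to $\overline{s_{i+1}}$. All the content lies in this interplay between the transition-preserving property of state-mappings and the orbit-respecting/orbit-separating properties of $\rmap$; the idempotence of $\rmap$ plays no role in this particular proof.
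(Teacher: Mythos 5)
Your proof is correct, and it follows the same overall plan as the paper's: part (1) is immediate from the definition of the quotient transition relation, and part (2) is an inductive lift that uses a group element to align the witness transition with the partial path built so far. The one genuine difference is the direction of the alignment, and it matters. The paper's proof takes the witness $(t,t')\in\mathcal{M}_R$ with $t$ in the orbit of $\overline{s_i}$ and then applies a group element to the \emph{already-constructed prefix} $s_0',\dots,s_i'$ so that its endpoint becomes $t$; as written this is both garbled (it chooses $g$ with $g(t)=s_i'$ but then writes $g(s_i')=t$) and subtly deficient, since transforming the prefix replaces $s_0'$ by $g(s_0')$, so the limiting construction no longer visibly produces a path starting at the \emph{given} state $s_0'$, which is exactly what the theorem promises. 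Your version instead keeps the prefix fixed and transports the \emph{witness}: from $\vartheta_G(u)=\overline{s_i}=\vartheta_G(s_i')$ you invoke the separates-orbits property to get $g$ with $g(u)=s_i'$, then append $g(v)$, whose representative is $\overline{s_{i+1}}$ by the respects-orbits property. This is the cleaner formulation: each step extends the path without disturbing earlier states, so the induction genuinely yields an (possibly infinite) path from $s_0'$, and you make explicit the appeal to the orbit-separating and orbit-respecting axioms that the paper leaves implicit. Your closing observation that idempotence of $\vartheta_G$ is not needed here is also accurate.
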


\condInc{
\begin{proof}
1. Given $(s_i,s_{i+1})\in \mathcal{M}_R$, then $(\overline{s_i},\overline{s_{i+1}})\in \overline{\mathcal{M}}_R.$

\sloppy 2. Suppose that $(\overline{s_i},\overline{s_{i+1}})\in \overline{\mathcal{M}}_R$. Then there is some $(t,t')\in \mathcal{M}_R$ with $t\in \overline{s_i}$ and $t'\in \overline{s_{i+1}}.$ Suppose that we have a path $s_0',\dots,s_i'$ such that $s_0' \in \overline{s_0}$, $s_1'\in \overline{s_1},\dots, s_i'\in \overline{s_i}$ and $(s_0',s_1'),(s_1',s_2'),\dots,(s_{i-1}',s_i')\in \mathcal{M}_R.$  Since both $t,s_{i}' \in \overline{s_i}$, there is a group element $g\in G$ such that $g(t) = s_{i}'$. Then $g(s_0'),g(s_1'),\dots,g(s_i')=t,s_{i+1}$ is a partial path in $\mathcal{M}$ Continuing in this manner we see the existence of the desired path in $\mathcal{M}$.
\end{proof}
}

We would like to extend the path correspondence between $\mathcal{M}$ and $\mathcal{M}/\left(G, \vartheta_G\right)$, for a group $G$ of state-mappings of $\mathcal{M}$ and a representative map $\vartheta_G$ to one of substructures of Kripke structures. To do this, we need to identify a correspondence between $G$-closed substructures of $\M$ and substructures of $\M/(G,\vartheta_G)$. For $\N$ a $G$-closed substructure of $\M$, there is a natural map $\N\rightarrow \N/(G,\vartheta_G)$, which we call the quotient map. The quotient map establishes a lattice homomorphism between $\Lambda_{\M,\Gr}$ and $\Lambda_{\M/(G,\vartheta_G)}.$

We prove this property in a series of lemmas.

\begin{lemma}\label{lem:closed_are_surjective}
Let $\M$ be a Kripke structure and $G$ a group of state mappings of $\M$ with representative map $\vartheta_G$. Then for every substructure $\N$ of $\M/(G,\vartheta_G)$, there is a $\Gr$-closed substructure $\Np$ of $\M$ such that $\Np/(G,\vartheta_G) = \N$. 
\end{lemma}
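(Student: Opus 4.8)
The plan is to construct $\Np$ as the full $\rmap$-preimage of $\N$. Concretely, I would set $\Np_\St = \rmap^{-1}(\N_\St) = \{\, s\in\MSt : \rmap(s)\in\N_\St \,\}$, $\Np_\Tr = \{\, (s,t)\in\MTr : (\rmap(s),\rmap(t))\in\N_\Tr \,\}$, $\Np_{\iSt} = \MiSt\cap\Np_\St$, $\Np_\Lb = \MLb\pj\Np_\St$, and $\Np_\AP = \MAP$. The definition of $\Np_\Tr$ keeps exactly those transitions of $\M$ that lie over a transition of $\N$; intuitively this choice is forced, since the quotient of $\Np$ must reproduce precisely the transitions of $\N$.

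I would then establish $G$-closedness and the substructure axioms. The $G$-closedness of $\Np_\St$ is immediate: since $\rmap$ is constant on $G$-orbits, $\Np_\St$ is a union of orbits, so $g(\Np_\St)=\Np_\St$ for every $g\in G$; the remaining conditions of Definition~\ref{defn:substructure} hold by construction. The one genuine obligation, and the \emph{main obstacle}, is totality of $\Np$. Given $s\in\Np_\St$ with $\sq=\rmap(s)\in\N_\St$, totality of $\N$ yields $\tq\in\N_\St$ with $(\sq,\tq)\in\N_\Tr$, and the definition of the quotient transition relation then supplies $s'',t''\in\MSt$ with $\rmap(s'')=\sq$, $\rmap(t'')=\tq$, and $(s'',t'')\in\MTr$. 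The subtlety is that this witness starts at $s''$, not at $s$. Here I would use the group action: since $\rmap(s)=\rmap(s'')=\sq$, the ``separates orbits'' property of $\rmap$ gives $g\in G$ with $g(s'')=s$, and then $(s,g(t''))=(g(s''),g(t''))\in\MTr$ because $g$ is a state-mapping. As $\rmap(g(t''))=\rmap(t'')=\tq\in\N_\St$, we get $g(t'')\in\Np_\St$ and $(s,g(t''))\in\Np_\Tr$, so $s$ has a successor inside $\Np$.

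Finally I would verify $\Np/(G,\vartheta_G)=\N$ componentwise. For states, $\rmap(\Np_\St)=\N_\St$: each $\sq\in\N_\St\subseteq\Mq_\St$ is a representative, hence $\rmap(\sq)=\sq\in\N_\St$ places $\sq\in\Np_\St$, while $\rmap(\Np_\St)\subseteq\N_\St$ holds by definition. For transitions, $(\sq,\tq)$ lies in the quotient of $\Np$ iff some $(s',t')\in\Np_\Tr$ projects to it, which by the definition of $\Np_\Tr$ happens iff $(\sq,\tq)\in\N_\Tr$; the existence of such a witness for every $(\sq,\tq)\in\N_\Tr$ is exactly what the totality argument already extracted from the quotient transition relation. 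The initial states match because $\rmap(\MiSt\cap\Np_\St)=\rmap(\MiSt)\cap\N_\St=\Mq_{\iSt}\cap\N_\St=\N_{\iSt}$, and the labels agree since both assign $\sq$ the value $\MLb(\rmap(\sq))=\MLb(\sq)$. I expect the transition-relation bookkeeping and the totality step to be where care is genuinely needed; every other component follows directly from the definitions.
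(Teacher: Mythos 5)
Your proof is correct, and its skeleton matches the paper's: both take the states of $\Np$ to be the full $\rmap$-preimage $\rmap^{-1}(\N_\St)$ (a union of $G$-orbits), and both establish totality by the same key move --- extract a witness transition in $\MTr$ from the definition of the quotient transition relation, then translate it by a group element so that it starts at the given state. The genuine difference is the choice of transitions. The paper picks, for each $(\sq,\tq)\in\N_\Tr$, a \emph{single} witness $(s'',t'')\in\MTr$ and puts only its $G$-orbit into $\Np_\Tr$, giving a small but non-canonical solution; you instead take all of $\{(s,t)\in\MTr : (\rmap(s),\rmap(t))\in\N_\Tr\}$, which involves no arbitrary choices and is in fact the \emph{largest} $G$-closed substructure quotienting onto $\N$ --- precisely the $G$-maximal substructure that Theorem~\ref{thm:correspondence} later associates with $\N$. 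So your construction buys canonicity and a direct bridge to the $G$-maximal lattice correspondence, while the paper's buys economy of transitions; both make the lemma true. Two fine points: the step producing $g$ with $g(s'')=s$ is really the contrapositive of the ``separates orbits'' clause (equal representatives imply a common orbit), which you invoke correctly; and the identity $\rmap(\MiSt\cap\Np_\St)=\rmap(\MiSt)\cap\N_\St$, which you assert in passing, holds only because $\Np_\St$ is a full preimage --- image does not commute with intersection in general --- so it merits the one-line justification.
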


\condInc{
\begin{proof}
We will construct a $G$-closed substructure $\Np$ of $\M$ such that $\Np/(G,\vartheta_G) = \N$. For the states of $\Np$, we select exactly the $G$-orbits of the representative states of $\N$, i.e., $\Np_\St = \{g(s): \forall g\in G, s\in \N_\St\}$. For transitions, we note that for every transition $(s,t)\in \N_\Tr$ there is at least one transition $(s',t')\in \M_\Tr$ such that $\vartheta(s')= s$ and $\vartheta(t') = t$. We add this transition and its orbit under $G$ to $\Np$. Hence for any state $s'\in\Np_\St$ there is a transition $(s',t')$ for some $t'\in \Np_\St$. We conclude that $\Np$ is total. Since $\Np$ is determined by its states and transitions as a substructure, the lemma is proved.
\end{proof}
}

Lemma \ref{lem:closed_are_surjective} establishes that the quotient map is surjective. The next lemma demonstrates that it is a homomorphism of the join-semilattices $\Lambda_{M,\Gr}$ and $\Lambda_{M/(G,\vartheta_G)}$. We note that it is not a homomorphism of the lattices themselves, because the meet of two $G$-closed structures mapping might be empty due to the requirement that Kripke structures are total. 

\begin{lemma}[Quotient map respects join]
Let $\M$ be a Kripke structure and $G$ a group of state mappings of $\M$ with representative map $\vartheta_G$. Let $\N,\Np \in \Lambda_{\M,\Gr}$. Then 
     \[(\N \join \Np)/(\Gr, \vartheta_\Gr) = \N/(\Gr,\vartheta_\Gr) \join \Np/(\Gr,\vartheta_\Gr).\]
\end{lemma}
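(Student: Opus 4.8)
The plan is to prove equality of two substructures of the quotient $\Mq = \M/(\Gr,\vartheta_\Gr)$ by comparing them component by component. First I would record two preliminary points. By Lemma~\ref{lem:g-closedlattice} the join $\N \join \Np$ is itself $\Gr$-closed, so its quotient is meaningful; and because $\vartheta_\Gr$ is idempotent and separates orbits, $\vartheta_\Gr(s)$ always lies in the orbit of $s$ (otherwise ``separates orbits'' applied to $s$ and $\vartheta_\Gr(s)$ would contradict idempotency). Consequently, for any $\Gr$-closed substructure $\N$ the image $\vartheta_\Gr(\N_\St)$ stays inside $\N_\St$, so $\vartheta_\Gr$ restricts to a legitimate representative map of $\N$ and $\br{\N} := \N/(\Gr,\vartheta_\Gr)$ is a well-defined substructure of $\Mq$. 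The same holds for $\Np$ and for $\N \join \Np$, so both sides of the claimed identity are genuinely substructures of $\Mq$. By Definition~\ref{defn:substructure}, two substructures of the same ambient structure that share a state set automatically share atomic propositions and labeling; hence it suffices to check that the two sides have the same states, the same initial states, and the same transition relation.

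Next I would expand both sides with Lemma~\ref{lem:str_union}. The right-hand join $\br{\N} \join \br{\Np}$ has state set $\br{\N}_\St \cup \br{\Np}_\St = \vartheta_\Gr(\N_\St) \cup \vartheta_\Gr(\Np_\St)$, whereas the left-hand side $(\N \join \Np)/(\Gr,\vartheta_\Gr)$ has state set $\vartheta_\Gr\!\bigl((\N \join \Np)_\St\bigr) = \vartheta_\Gr(\N_\St \cup \Np_\St)$. These coincide because the image of a union is the union of the images. The identical computation settles the initial states, using $(\N \join \Np)_\iSt = \N_\iSt \cup \Np_\iSt$ from Lemma~\ref{lem:str_union} together with the fact that quotient initial states are exactly the $\vartheta_\Gr$-images of initial states.

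The only substantive step is the transition relation. By Lemma~\ref{lem:str_union} the right-hand side has transitions $\br{\N}_\Tr \cup \br{\Np}_\Tr$, so I must show that $(\sq,\tq)$ is a quotient transition of $\N \join \Np$ if and only if it is a quotient transition of $\N$ or of $\Np$. The backward direction is monotonicity: a witnessing transition in $\N_\Tr$ (or $\Np_\Tr$) is also a transition of $\N \join \Np$ whose endpoints lie in $(\N \join \Np)_\St$, so it witnesses $(\sq,\tq)$ in the left-hand quotient. For the forward direction, a quotient transition $(\sq,\tq)$ of $\N \join \Np$ is witnessed by some $(s',t') \in (\N \join \Np)_\Tr = \N_\Tr \cup \Np_\Tr$ with $\vartheta_\Gr(s') = \sq$ and $\vartheta_\Gr(t') = \tq$; this witness lies in one of $\N_\Tr$ or $\Np_\Tr$, and since the transition relation of a substructure is contained in the square of its own state set, $s'$ and $t'$ lie in the corresponding state set, exhibiting $(\sq,\tq)$ as a quotient transition of $\N$ or of $\Np$.

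I expect the (mild) main obstacle to be this forward direction of the transition equivalence: one must be careful that the states $s',t'$ witnessing a quotient transition of the join actually belong to the single smaller substructure whose transition relation contains $(s',t')$, rather than merely to the union $\N_\St \cup \Np_\St$. This is precisely the place where the convention that a substructure's transitions run only between its own states is essential; everything else reduces to the image-of-union identity and the explicit description of joins from Lemma~\ref{lem:str_union}.
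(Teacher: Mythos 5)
Your proof is correct and follows essentially the same route as the paper's: compare the two sides componentwise (states, initial states, transitions) using the explicit description of the join from Lemma~\ref{lem:str_union}, matching quotient transitions with witnessing transitions in $\N$ or $\Np$. Your write-up is in fact more careful than the paper's terse proof, which only spells out one containment for transitions and does not address well-definedness of the restricted representative map or the initial states at all.
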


\condInc{
\begin{proof}

Clearly, by definition a state $\vartheta_{\Gr}(s)\in  \left(\N/(\Gr,\vartheta_\Gr) \join \Np/(\Gr,\vartheta_\Gr)\right)_\St$ if and only if it is in at least one of $\left(\N/(\Gr,\vartheta_\Gr)\right)_\St$ or $\left(\Np/(\Gr,\vartheta_\Gr)\right)_\St$ which occurs if and only if $\vartheta_\Gr(s)$ is in one of the $G$-closed substructures $\N$ or $\Np$. 

For transitions, we note that a transition in $\left(\N/(\Gr,\vartheta_\Gr) \join \Np/(\Gr,\vartheta_\Gr)\right)_\Tr$ is entirely contained in one of  $\left(\N/(\Gr,\vartheta_\Gr)\right)_\Tr$ or $\left(\Np/(\Gr,\vartheta_\Gr)\right).$ Hence the transition must come from either $\N$ or $\Np$.
\end{proof}
}

As seen in Example \ref{ex:box_quo}, for a Kripke structure $\M$ with a group of state-mappings $\Gr$ of $\M$ and representative map $\vartheta_G$, it is possible for multiple $G$-closed substructures of $\M$ to map to the same substructure of the quotient structure $\M/(\Gr,\vartheta_G)$.



Recall that the join of $G$-closed substructures of $M$ is $G$-closed. 

\begin{definition}[$\Gr$-maximal]
Let $\M$ be a Kripke structure with group $G$ of state-mappings and representative map $\vartheta_G$. 
A $G$-closed substructure $\N$ of $\M$ is $\Gr$-maximal if 
\[
\N = \bigvee_{\substack{\Np \in \Lambda_{\M, G}\\ \Np/(G,\vartheta_G) \leq \N/(G,\vartheta_G)}} \Np.
\]
\end{definition}

\begin{lemma}\label{lem:G-maximal_join}
Let $\M$ be a Kripke structure and let $G$ be a group of state-mappings of $\M$ with representative map $\vartheta_G$. Let $\Mp,\Mpp$ be two $G$-maximal substructures of $\M$. Then $\Mp \join \Mpp$ is $G$-maximal. 
\end{lemma}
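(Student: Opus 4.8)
The plan is to show directly that $\mathcal{K} := \Mp \join \Mpp$ equals the join appearing in the definition of $G$-maximality, namely $\mathcal{K} = \bigvee_{q(\Np)\sbst q(\mathcal{K})}\Np$, where I abbreviate $q(\N) := \N/(G,\vartheta_G)$. By Lemma \ref{lem:g-closedlattice}, $\mathcal{K}$ is $G$-closed, so it lies in $\Lambda_{\M,\Gr}$ and, satisfying $q(\mathcal{K})\sbst q(\mathcal{K})$, is itself one of the joined terms; this gives the inequality $\mathcal{K}\sbst \bigvee_{q(\Np)\sbst q(\mathcal{K})}\Np$ for free. The substance is the reverse inequality: I must verify that \emph{every} $G$-closed $\Np$ with $q(\Np)\sbst q(\mathcal{K})$ already satisfies $\Np\sbst\mathcal{K}$. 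Since the substructure order between two substructures of the same $\M$ reduces to containment of states and transitions (the initial states and labels are then forced by Definition \ref{defn:substructure}), it suffices to check $\Np_\St\subseteq\mathcal{K}_\St$ and $\Np_\Tr\subseteq\mathcal{K}_\Tr$. Throughout I use that the quotient map respects joins (proved above), so that $q(\mathcal{K})_\St = q(\Mp)_\St\cup q(\Mpp)_\St$ and $q(\mathcal{K})_\Tr = q(\Mp)_\Tr\cup q(\Mpp)_\Tr$.

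For the state inclusion, take $s\in\Np_\St$ and consider its representative $\vartheta_G(s)\in q(\Np)_\St\subseteq q(\mathcal{K})_\St$. Then $\vartheta_G(s)$ lies in $q(\Mp)_\St$ or $q(\Mpp)_\St$; say the former. Because $q(\Mp)_\St=\vartheta_G(\Mp_\St)$, there is $s'\in\Mp_\St$ with $\vartheta_G(s')=\vartheta_G(s)$, and the orbit-separating property of $\vartheta_G$ forces $s$ and $s'$ into the same $G$-orbit. As $\Mp$ is $G$-closed, this orbit lies in $\Mp_\St$, so $s\in\Mp_\St\subseteq\mathcal{K}_\St$. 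This \emph{orbit pullback} — that membership of a representative in $q(\Mp)$ drags the entire orbit back into $\Mp$ — is the crucial mechanism, and I expect it to be the heart of the argument.

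For the transition inclusion, take $(s,t)\in\Np_\Tr$, so $(\vartheta_G(s),\vartheta_G(t))\in q(\Np)_\Tr\subseteq q(\mathcal{K})_\Tr$; say it lies in $q(\Mp)_\Tr$. Applying the orbit-pullback argument to each endpoint shows $s,t\in\Mp_\St$. To conclude $(s,t)\in\Mp_\Tr$, I form the auxiliary substructure $\mathcal{L}$ obtained from $\Mp$ by adjoining the full $G$-orbit of the transition $(s,t)$ to $\Mp_\Tr$, leaving the state set $\Mp_\St$ unchanged. Then $\mathcal{L}$ is a substructure of $\M$ (the added transitions lie in $\M_\Tr$ because $G$ consists of state-mappings of $\M$), it is total (it contains the total structure $\Mp$), and it is $G$-closed; moreover $q(\mathcal{L})=q(\Mp)$, since $(\vartheta_G(s),\vartheta_G(t))$ was already a transition of $q(\Mp)$. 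Hence $\mathcal{L}$ is one of the terms in the join defining the $G$-maximal structure $\Mp$, so $\mathcal{L}\sbst\Mp$ and therefore $(s,t)\in\mathcal{L}_\Tr\subseteq\Mp_\Tr\subseteq\mathcal{K}_\Tr$. The symmetric case places $(s,t)$ in $\Mpp_\Tr$.

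Combining the two inclusions yields $\Np\sbst\mathcal{K}$ for every admissible $\Np$, hence the reverse inequality, and the two inequalities give $\mathcal{K}=\bigvee_{q(\Np)\sbst q(\mathcal{K})}\Np$, i.e. $\mathcal{K}$ is $G$-maximal. The only delicate points are (i) the orbit-pullback step, which is exactly what averts the phenomenon, typical of general closure operators, that a join of ``closed'' elements need not be closed — here it is the compatibility between $G$-closedness and the quotient's state/transition decomposition that rescues the statement; and (ii) confirming that the auxiliary $\mathcal{L}$ is genuinely a total substructure with unchanged quotient, which is routine once the orbit of $(s,t)$ is added to an already-total $\Mp$.
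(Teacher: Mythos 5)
Your proof is correct and takes essentially the same approach as the paper's: consider any $G$-closed $\Np$ whose quotient lies below that of $\Mp \join \Mpp$, push each transition of $\Np$ into the quotient, use the fact that the quotient map respects joins to place it in the quotient of $\Mp$ or of $\Mpp$, and pull it back. The only difference is one of rigor, in your favor: where the paper simply asserts that $G$-maximality of $\Mp$ forces the pulled-back transitions $(g(s),g(t))$ into $\Mp$, you justify this step explicitly via the orbit pullback on states and the auxiliary structure $\mathcal{L}$ with unchanged state set and unchanged quotient.
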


\condInc{
\begin{proof}
Consider a $G$-closed substructure $\Np$ such that $\Np/(G,\vartheta_G) \leq (\Mp\join \Mpp)/(G,\vartheta_G)$. Let $(s,t) \in \Np_\Tr$. Since $(\vartheta_G(s),\vartheta_G(t)) \in (\Mp\join \Mpp)/(G,\vartheta_G)_\Tr$ we conclude that $(\vartheta_G(s),\vartheta_G(t))$ is in either $\Mp/(G,\vartheta_G)$ or $\Mpp/(G,\vartheta_G)$. In either case, since $\Mp$ and $\Mpp$ are both $G$-maximal the transitions $(g(s),g(t))$ are already contained in $\Mp$ or $\Mpp$.
\end{proof}}
\begin{lemma}\label{lem:G-maximal_meet}
Let $\M$ be a Kripke structure and let $G$ be a group of state-mappings of $\M$ with representative map $\vartheta_G$. Let $\Mp,\Mpp$ be two $G$-maximal substructures of $\M$. Then $\Mp \meet \Mpp$ is $G$-maximal.
\end{lemma}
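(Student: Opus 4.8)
The plan is to exploit the fixed-point characterization of $G$-maximality and to transfer quotient-level containments back to the structure level. Throughout, write $q(\N)$ for $\N/(G,\vartheta_G)$. First I would record two preliminary facts. By the quotient-respects-join lemma, $q$ preserves joins, and since $\N \le \Np$ is equivalent to $\N \join \Np = \Np$, applying $q$ gives $q(\N) \join q(\Np) = q(\Np)$, that is, $q(\N) \le q(\Np)$; thus $q$ is monotone on $\Lambda_{\M,G}$. I would also note, via Lemma~\ref{lem:g-closedlattice}, that $\Mp \meet \Mpp$ is itself a $G$-closed substructure, and that (by the definition of meet) it is the largest substructure contained in both $\Mp$ and $\Mpp$.

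Unwinding the definition of $G$-maximal, it then suffices to prove
\[
\Mp \meet \Mpp \;=\; \bigvee_{\substack{\Np \in \Lambda_{\M, G}\\ q(\Np) \leq q(\Mp \meet \Mpp)}} \Np.
\]
Call the right-hand side $\mathcal{J}$. Since $\Mp \meet \Mpp$ is $G$-closed and trivially satisfies $q(\Mp\meet\Mpp) \le q(\Mp\meet\Mpp)$, it is one of the joinands, so $\Mp \meet \Mpp \le \mathcal{J}$; the content is the reverse inequality.

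For the reverse inequality I would fix an arbitrary joinand $\Np$, so $\Np$ is $G$-closed with $q(\Np) \le q(\Mp \meet \Mpp)$. Applying monotonicity of $q$ to $\Mp \meet \Mpp \le \Mp$ yields $q(\Mp \meet \Mpp) \le q(\Mp)$, hence $q(\Np) \le q(\Mp)$. Because $\Mp$ is $G$-maximal, every $G$-closed substructure whose quotient lies below $q(\Mp)$ is already a summand in the join defining $\Mp$, so $\Np \le \Mp$; the symmetric argument gives $\Np \le \Mpp$. Thus $\Np$ is contained in both $\Mp$ and $\Mpp$, and since $\Mp \meet \Mpp$ is the largest such substructure, $\Np \le \Mp \meet \Mpp$. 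As every joinand is bounded by $\Mp \meet \Mpp$ and $\mathcal{J}$ is the smallest substructure containing them all, $\mathcal{J} \le \Mp \meet \Mpp$, giving the desired equality.

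The step I expect to be the crux is exactly the passage from the quotient-level inequality $q(\Np) \le q(\Mp)$ to the structure-level inequality $\Np \le \Mp$: this is the only place $G$-maximality of $\Mp$ (and $\Mpp$) is invoked, and it is precisely what fails for a merely $G$-closed substructure. Everything else is bookkeeping, namely monotonicity of $q$ from the join lemma and the stability of the meet inside $\Lambda_{\M,G}$ from Lemma~\ref{lem:g-closedlattice}. Abstractly, the map $\N \mapsto \bigvee\{\, \mathcal{P} \in \Lambda_{\M,G} : q(\mathcal{P}) \le q(\N)\,\}$ is a closure operator on $\Lambda_{\M,G}$ whose fixed points are precisely the $G$-maximal substructures, and the argument above is the standard fact that fixed points of a closure operator are closed under meet; the direct proof simply avoids developing that machinery.
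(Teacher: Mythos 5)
Your proof is correct and follows essentially the same route as the paper's: both arguments hinge on the observation that $G$-maximality of $\Mp$ (resp.\ $\Mpp$) converts the quotient-level containment $\Np/(G,\vartheta_G) \leq \Mp/(G,\vartheta_G)$ into the structure-level containment $\Np \leq \Mp$, and then invoke the universal property of the meet to conclude. The only difference is presentational: the paper applies this step once to the whole $G$-maximal structure corresponding to $(\Mp\meet\Mpp)/(G,\vartheta_G)$, whereas you apply it joinand-by-joinand, which amounts to the same argument written out in more detail (and makes explicit the monotonicity of the quotient map that the paper leaves implicit).
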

\condInc{
\begin{proof}
Suppose that $\N$ is the $G$-maximal substructure of $\M$ corresponding to $(\Mp\meet \Mpp)/(G,\vartheta_G)$. We note that $\N/(G,\vartheta_G) \leq \Mp/(G,\vartheta_G).$ Hence $\N \leq \Mp$. By symmetry $\N \leq \Mpp$ and thus $\N \leq \Mp \meet \Mpp.$
\end{proof}}

Combing Lemma \ref{lem:G-maximal_join} and \ref{lem:G-maximal_meet} allows us to make the following definition.

\begin{definition}[$G$-maximal lattice of substructures]
Let $\M$ be a Kripke structure and let $G$ be a group of state-mappings of $\M$ with representative map $\vartheta_G$. Then the set of $G$-maximal substructures of $\M$ forms a sublattice $\Lambda_{\M,\Gmax}$ of $\Lambda_\M$. 
\end{definition}

While in general the quotient map from $\Lambda_{\M,G}$ to $\Lambda_{\M/(G,\vartheta_G)}$ is always surjective, when restricted to $\Lambda_{\M,\Gmax}$ the map is injective and a lattice isomorphism.

\begin{theorem}[\textbf{$G$-Maximal Lattice Correspondence}]\label{thm:correspondence}
\sloppy Let $\mathcal{M}$ be a  Kripke structure and suppose that $G$ is a group of state-mappings of $\M$ with a representative map $\vartheta_G$. Then the map $f(\mathcal{N}) = \mathcal{N}/\left(G,\vartheta_G\right)$ is an isomorphism between the lattice of $G$-maximal substructures of $\mathcal{M}$ and the lattice of structures of $\mathcal{M}/\left(G,\vartheta_G\right)$. That is $f$ is an isomorphism from $\Lambda_{M,\Gmax}$ to $\Lambda_{M/(G,\vartheta_G)}.$
\end{theorem}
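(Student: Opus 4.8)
The plan is to exhibit an explicit two-sided inverse for $f$ and then to check that both $f$ and this inverse are order-preserving; since a bijection between lattices that is monotone in both directions is automatically a lattice isomorphism, this suffices and lets me avoid computing meets by hand. First I would record that $f$ is well-defined: a $\Gr$-maximal substructure is in particular $\Gr$-closed, so its quotient is a genuine substructure of $\Mq$ lying in $\Lambda_{\M/\qmap}$. Injectivity is then immediate from the definition of $\Gr$-maximality: if $\N_1,\N_2 \in \Lambda_{\M,\Gmax}$ satisfy $f(\N_1)=f(\N_2)$, i.e. $\N_1/\qmap = \N_2/\qmap$, then the two index sets $\{\Np \in \Lambda_{\M,\Gr} : \Np/\qmap \le \N_i/\qmap\}$ coincide, and since each $\N_i$ equals the join over its own index set, we get $\N_1 = \N_2$.

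For the inverse I would define, for a substructure $\Nq$ of $\Mq$,
\[ \Phi(\Nq) = \bigvee \{\, \Np \in \Lambda_{\M,\Gr} : \Np/\qmap \le \Nq \,\}, \]
a finite join because $\M$ is finite. To show $f \circ \Phi = \mathrm{id}$, I would first extend the lemma that the quotient map respects join from two substructures to arbitrary finite joins by induction, obtaining $\Phi(\Nq)/\qmap = \bigvee\{\Np/\qmap : \Np/\qmap \le \Nq\}$. Every term of this join is $\le \Nq$, so the whole join is $\le \Nq$; conversely, Lemma~\ref{lem:closed_are_surjective} supplies a $\Gr$-closed $\Np_0$ with $\Np_0/\qmap = \Nq$, and $\Np_0$ belongs to the index set, forcing the join to be $\ge \Nq$. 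Hence $f(\Phi(\Nq)) = \Nq$. This identity simultaneously certifies that $\Phi(\Nq)$ is $\Gr$-maximal, since with $f(\Phi(\Nq)) = \Nq$ the condition ``$\Np/\qmap \le \Nq$'' defining $\Phi(\Nq)$ is exactly the condition ``$\Np/\qmap \le \Phi(\Nq)/\qmap$'' from the definition of $\Gr$-maximal, so $\Phi$ indeed lands in $\Lambda_{\M,\Gmax}$.

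The reverse identity $\Phi(f(\N)) = \N$ for $\Gr$-maximal $\N$ is nothing but the defining equation of $\Gr$-maximality, so $f$ and $\Phi$ are mutually inverse bijections. Monotonicity of $\Phi$ is clear, as enlarging $\Nq$ only enlarges the index set of its defining join. Monotonicity of $f$ follows from the join-respecting lemma: $\N_1 \le \N_2$ is equivalent to $\N_1 \join \N_2 = \N_2$, which $f$ carries to $f(\N_1) \join f(\N_2) = f(\N_2)$, i.e. $f(\N_1) \le f(\N_2)$. Since $\Lambda_{\M,\Gmax}$ is a sublattice of $\Lambda_\M$ by Lemmas~\ref{lem:G-maximal_join} and~\ref{lem:G-maximal_meet} (so that its meets and joins are the ambient ones), and a bijective order-isomorphism between lattices preserves all meets and joins, I conclude that $f$ is a lattice isomorphism.

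The main obstacle I anticipate is the surjectivity step, namely pinning $f(\Phi(\Nq))$ down to \emph{exactly} $\Nq$ rather than merely $\le \Nq$: this is precisely where both Lemma~\ref{lem:closed_are_surjective} and the extension of join-preservation to arbitrary finite joins are genuinely used. Everything else is formal once these are in hand. A secondary point worth checking explicitly is that the meet and join operations of $\Lambda_{\M,\Gmax}$ agree with those inherited from $\Lambda_\M$ (guaranteed by the two sublattice lemmas), so that the general principle ``order-isomorphism of lattices implies lattice isomorphism'' is applied with the intended operations.
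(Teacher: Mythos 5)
Your proof is correct, and at its core it follows the same route as the paper: the bijection between $\Lambda_{\M,\Gmax}$ and $\Lambda_{\M/(G,\vartheta_G)}$ comes from surjectivity of the quotient map on $G$-closed substructures (Lemma~\ref{lem:closed_are_surjective}) together with the defining join of $G$-maximality, and the lattice structure is then transported across this bijection. The difference is one of completeness: the paper's proof is a two-sentence assertion that the previously noted bijection preserves meet and join, whereas you supply the content that assertion actually needs. In particular, Lemma~\ref{lem:closed_are_surjective} only produces a $G$-\emph{closed} preimage of a substructure $\Nq \le \Mq$, not a $G$-\emph{maximal} one; your construction $\Phi(\Nq) = \bigvee\{\Np \in \Lambda_{\M,\Gr} : \Np/\qmap \le \Nq\}$, combined with the inductive extension of the join-preservation lemma to finite joins and the observation that the identity $f(\Phi(\Nq)) = \Nq$ itself certifies $G$-maximality of $\Phi(\Nq)$, is exactly the missing argument that the quotient map restricted to $\Lambda_{\M,\Gmax}$ remains surjective. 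Your route to the lattice isomorphism also differs in flavor: rather than the paper's direct claim that meets and joins of $G$-maximal structures are ``determined by'' their images (which would require a separate verification for meets, for which no explicit description is available), you verify that $f$ and $\Phi$ are mutually inverse and monotone in both directions, then invoke the standard fact that such a bijection of lattices is a lattice isomorphism, together with the check via Lemmas~\ref{lem:G-maximal_join} and~\ref{lem:G-maximal_meet} that the operations of $\Lambda_{\M,\Gmax}$ agree with the ambient ones. This order-theoretic detour is a genuine improvement in economy, since it sidesteps any hand computation of meets, which the paper's own lemmas note have no simple description.
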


\condInc{
\begin{proof}
As noted the $G$-maximal lattice of substructures is in bijection with the lattice of substructures of $\M/(G,\vartheta_G)$. Since the meet and join of two $G$-maximal structures are determined entirely by the meet and join of their image in $M/(G,\vartheta_G)$, the correspondence is a lattice isomorphism. 
\end{proof}
}

So far, our correspondence between models $\mathcal{M}$ and $\mathcal{M}/\left(G,\vartheta_G)\right)$ has been entirely syntactical. In the next section we explore the semantic connections between $\mathcal{M}$ and $\mathcal{M}/\left(G,\vartheta_G\right)$. 

\subsection{Semantic relationships between structures and quotient structures.} 

Let $\Gr$ be a symmetry group of $\M$. 

\begin{definition} A CTL formula $f$ is \textbf{$\Gr$-invariant} over $\M$, if for every state $s$, every $g \in \Gr$, for all maximal propositional  subformulae $P$ of $f$, we have 
\[\M, s \sat P \iff \M, g(s) \sat P.\]
\end{definition}

\begin{lemma}\label{lem:eval_pre}
Let $\Gr$ be a group of state-mappings for a Kripke structure $\M$. If $f$ is $\Gr$-invariant, then the valuation of $f$ in $\overline{\mathcal{M}}= \mathcal{M}/\left(G, \vartheta_G\right)$ does not depend on the choice of representative map $\vartheta_G$.
\end{lemma}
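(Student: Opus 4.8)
The plan is to fix two representative maps $\vartheta_G$ and $\vartheta_G'$ for the same group $\Gr$, form the two quotients $\overline{\M} = \M/(\Gr,\vartheta_G)$ and $\overline{\M}' = \M/(\Gr,\vartheta_G')$, and show they assign $f$ the same valuation. First I would define the canonical bijection $\Phi\colon \overline{\M}_\St \to \overline{\M}'_\St$ sending the representative $\overline{s}$ chosen by $\vartheta_G$ to the representative of the \emph{same} $\Gr$-orbit chosen by $\vartheta_G'$. This is well defined and bijective precisely because the \emph{respects orbits} and \emph{separates orbits} conditions force each representative map to pick exactly one state per orbit. I would then verify that $\Phi$ is an isomorphism of the underlying transition graphs that also preserves initial states: a quotient transition $(\overline{s},\overline{t})$ exists iff some transition of $\M$ runs from the orbit of $\overline{s}$ to the orbit of $\overline{t}$, and since state-mappings carry $\MiSt$ into itself every orbit is either wholly initial or wholly non-initial; both notions depend only on orbits, not on the chosen representatives, so they transfer across $\Phi$.

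The only data that genuinely differs between $\overline{\M}$ and $\overline{\M}'$ is the labeling, since a state $\overline{s}$ and its image $\Phi(\overline{s})$ are distinct representatives of a common orbit and may carry different labels in $\M$. This is exactly where $\Gr$-invariance of $f$ enters. For each maximal propositional subformula $P$ of $f$ and each $\overline{s}$, pick $g \in \Gr$ with $g(\overline{s}) = \Phi(\overline{s})$; $\Gr$-invariance gives $\M,\overline{s} \sat P \iff \M,\Phi(\overline{s}) \sat P$. Because $P$ is propositional its truth in a quotient state depends only on that state's label, and the quotient label of a representative equals its label in $\M$ (representative maps are idempotent); hence $\overline{\M},\overline{s} \sat P \iff \overline{\M}',\Phi(\overline{s}) \sat P$ for every maximal propositional subformula $P$.

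To conclude I would induct on the temporal structure of $f$, viewing it as built from its maximal propositional subformulae using Boolean connectives and the modalities $\AX$, $\EX$, $\A[\cdot \V \cdot]$, $\E[\cdot \V \cdot]$, to show $\overline{\M},\overline{s} \sat f \iff \overline{\M}',\Phi(\overline{s}) \sat f$ for all $\overline{s}$. The base case is the previous paragraph, the Boolean cases are immediate, and the modal cases follow because $\Phi$, being a transition-graph isomorphism preserving initial states, carries the successors and fullpaths out of $\overline{s}$ bijectively onto those out of $\Phi(\overline{s})$. Evaluating at initial states then gives $\overline{\M} \sat f \iff \overline{\M}' \sat f$, so the valuation is independent of the choice of $\vartheta_G$.

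I expect the main obstacle to be the base-case bookkeeping: making precise why invariance of the \emph{maximal} propositional subformulae, rather than of individual atomic propositions or of the whole formula $f$, is the correct hypothesis. It is tailored to exactly those subformulae whose valuation the relabeling can perturb; the temporal skeleton of $f$ is preserved automatically once $\Phi$ is established as a graph isomorphism, so only the propositional leaves require the orbit-invariance supplied by the definition of $\Gr$-invariance.
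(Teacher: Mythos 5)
Your proposal is correct and takes essentially the same route as the paper's own proof: both compare the two quotients via the canonical orbit-preserving correspondence between representative sets, use $\Gr$-invariance to equate the valuations of the maximal propositional subformulae at corresponding states, and then conclude because the valuation of $f$ is determined by those subformulae together with the transition structure, which is identical on both sides. Your write-up simply makes explicit (the bijection $\Phi$, the initial-state bookkeeping, and the induction over the temporal skeleton) what the paper's terse proof leaves implicit.
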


\condInc{
\begin{proof}
Let $\vartheta_G, \varphi_G$ be two representative maps with respect to $G$. The quotients $\mathcal{M}/\left(G,\vartheta_G\right)$ and $\mathcal{M}/\left(G, \varphi_G\right)$ have corresponding transition maps. Moreover, for any maximal propositional subformula, $P$ of $f$ we have that 
\[
\M/\qmap,\vartheta_G(s) \sat P \iff 
\M/\left(\Gr,\varphi_{\Gr}\right), \varphi_G(s) \sat P.
\] 
The valuation of $f$ depends only on the valuation of its maximal propositional subformulae at all other states of the model. Since $\vartheta$ and $\varphi$ have the same evaluation of the maximal propositional subformulae at all states, the evaluation of $f$ at all states is the same. 
\end{proof}
}

This allows us to connect semantic statements about $\M$ with semantic statements about $\M/(G,\vartheta_G)$ for formulae that are $G$-invariant. 

If $G$ is a group of state-mappings of a structure $\M$ and $\vartheta_G$ is a representative map of $G$, then the path correspondence theorem establishes what we call a $G$-bisimulation between $\M$ and the structure $\M/(G,\vartheta_G)$.  

\begin{definition}[$\Gr$-bisimulation]
\label{def:qbisim}
Let $\M$ be a Kripke structure, $\Gr$ be a group of state-mappings of $\M$ with representative map $\rmap$, and $\Mq$ be $\M/(\Gr,\rmap)$.
Let $R \sub \MSt \times \MqSt$.
Then $R$ is a $\Gr$-bisimulation between $\M$ and $\Mq$ iff:
\begin{enumerate}
    \item \label{qbisim:state} If $(s,t) \in R$ then there exists $g \in \Gr$ such that $t = g(s)$, i.e., $t = \vartheta_G(s)$;
    
    \item \label{qbisim:transferM} If $(s,t) \in R$ and $s \ar s' \in \MTr$, then there exists $t'$ such that $t \ar t' \in \MqTr$ and $(s',t') \in R$ 
    
    
    \item \label{qbisim:transferMq} If $(s,t) \in R$ and $t \ar t' \in \MqTr$, then there exists $s'$ such that $s \ar s' \in \MTr$ and $(s',t') \in R$ 
    
\end{enumerate}
\end{definition}


\begin{lemma}
\label{lem:qbisim}
Let $\M$ be a Kripke structure, $\Gr$ be a group of state-mappings of $\M$ with representative map $\rmap$, and $\Mq$ be $\M/(\Gr,\rmap)$.
Let $\bsim = (s,\sq)$ where $s \in \MSt$ and $\sq = \rmap(s)$.
Then $\bsim$ is a $\Gr$-bisimulation.
\end{lemma}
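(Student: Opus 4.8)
The plan is to verify directly the three defining conditions of Definition~\ref{def:qbisim} for the relation $R = \bsim = \{(s,\rmap(s)) : s \in \MSt\}$. Two of the three conditions follow almost immediately from the definition of the quotient structure together with the axioms of the representative map; the third, condition~\ref{qbisim:transferMq}, will require invoking the group action in the same manner as part~(2) of the Path Correspondence Theorem (Theorem~\ref{thm:path-correspondence}). I expect this third condition to be the only real content.

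First I would dispatch the state condition~\ref{qbisim:state}. Given $(s,\sq) \in \bsim$ with $\sq = \rmap(s)$, I need a $g \in \Gr$ with $g(s) = \sq$. This follows from the representative-map axioms: idempotence gives $\rmap(\rmap(s)) = \rmap(s)$, so $s$ and $\rmap(s)$ share a representative, whence the contrapositive of the ``separates orbits'' axiom (applied to the pair $s,\rmap(s)$) yields a $g \in \Gr$ with $g(s) = \rmap(s) = \sq$. Thus $\sq$ lies in the orbit of $s$, exactly as required.

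Next, for the forward transfer condition~\ref{qbisim:transferM}, suppose $(s,\sq) \in \bsim$ and $s \ar s' \in \MTr$. I would simply take $t' = \rmap(s') = \overline{s'}$, so that $(s',\overline{s'}) \in \bsim$ holds by definition of $\bsim$. It then remains to check $\sq \ar \overline{s'} \in \MqTr$; but by the definition of $\MqTr$ this holds precisely because the witness transition $(s,s') \in \MTr$ projects under $\rmap$ to $(\sq, \overline{s'})$.

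The main work is the backward transfer condition~\ref{qbisim:transferMq}, the one-step analogue of lifting a quotient path back to $\M$. Suppose $(s,\sq) \in \bsim$ and $\sq \ar \tq \in \MqTr$. By the definition of quotient transitions there exist $a,b \in \MSt$ with $\rmap(a) = \sq$, $\rmap(b) = \tq$, and $(a,b) \in \MTr$. Since $\rmap(a) = \sq = \rmap(s)$, the contrapositive of ``separates orbits'' supplies a $g \in \Gr$ with $g(a) = s$. As $g$ is a state-mapping it preserves the transition relation, so $(s, g(b)) = (g(a), g(b)) \in \MTr$; setting $s' = g(b)$ gives the desired successor of $s$. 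Finally $\rmap(s') = \rmap(g(b)) = \rmap(b) = \tq$, where the middle equality uses ``respects orbits'' since $g(b)$ and $b$ lie in a common orbit, so $(s',\tq) \in \bsim$. This step is the only place that genuinely uses the group structure: the witness transition produced by the quotient need not originate at $s$ itself, and a group element must be used to transport it to the correct source state. Everything else is bookkeeping with the representative-map axioms.
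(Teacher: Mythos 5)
Your proof is correct, and every step checks out against the paper's definitions: the state clause follows from idempotence plus the contrapositive of the separates-orbits axiom, the forward transfer clause from the definition of $\MqTr$ with $(s,s')$ as witness, and the backward transfer clause from the group-transport argument (pull the witness transition $(a,b)$ back to $s$ via the $g \in \Gr$ with $g(a)=s$, then use respects-orbits to see $\rmap(g(b)) = \tq$). The paper takes a structurally different, much shorter route: it disposes of the state clause with a one-line appeal to the definition of $\rmap$, and it discharges both transfer clauses by citing the Path Correspondence Theorem (Theorem~\ref{thm:path-correspondence}), whose part (2) already contains exactly the group-transport step you carried out by hand --- note that its statement quantifies over \emph{every} lift $s_0'$ of $\overline{s_0}$, which is precisely what makes the citation legitimate for a fixed source state $s$ rather than just some state in its orbit. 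What the paper's approach buys is brevity and reuse of an established result; what your approach buys is self-containment and precision: you make explicit which representative-map axiom is used where, and you isolate the one place the group structure is genuinely needed (the backward transfer), whereas the paper leaves that buried inside the earlier theorem's proof. Your inlined argument is essentially the one-step specialization of that theorem's proof, so the two proofs have the same mathematical core.
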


\condInc{
\begin{proof}
We show that all clauses of Definition~\ref{def:qbisim} hold.
By definition of $\rmap$, $\sq = g(s)$ for some $g \in \Gr$, so Clause~\ref{qbisim:state} holds.
Theorem~\ref{thm:path-correspondence} shows the existence of the corresponding transitions needed to satisfy 
Clauses~\ref{qbisim:transferM} and \ref{qbisim:transferMq}
of Definition~\ref{def:qbisim}.
\end{proof}
}

%

\begin{lemma}
\label{lem:qbisimCTL}
Let $\M$ be a Kripke structure, $\Gr$ be a group of state-mappings of $\M$ with representative map $\rmap$, and $\Mq$ be $\M/(\Gr,\rmap)$. Let $s \in \MSt$, $t \in \Mq\St$, let $\bsim = (s,\overline{s}$ where $s\in \mathcal{M}_\St$ and $\overline{s} = \vartheta_G(s)$. If
$s \bsim t$. Let $f$ be a $\Gr$-invariant CTL formula. Then $\M, s \sat f \iff \Mq, t \sat f$.
\end{lemma}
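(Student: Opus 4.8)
The plan is to prove the biconditional by structural induction on the CTL formula $f$, carrying as induction hypothesis the statement ``$\M, s' \sat \psi \iff \Mq, t' \sat \psi$ for every pair $(s',t') \in \bsim$'' for each $\Gr$-invariant proper subformula $\psi$ of $f$, and driving the argument by the bisimulation $\bsim$ of Lemma~\ref{lem:qbisim} together with the Path Correspondence Theorem (Theorem~\ref{thm:path-correspondence}). The one point needing care \emph{before} the induction begins is bookkeeping around $\Gr$-invariance: the hypothesis only constrains the \emph{maximal} propositional subformulae of $f$, so I first record that the immediate subformulae into which the induction descends again satisfy it. Concretely I would stop the recursion as soon as a subformula is propositional; for a non-propositional $f$ whose top operator is a modality or a Boolean connective, no propositional subformula can contain that top node, so the maximal propositional subformulae of each immediate subformula are exactly maximal propositional subformulae of $f$. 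Hence every subformula to which I apply the induction hypothesis is itself $\Gr$-invariant, and I may apply it to the matched pair $(s',\rmap(s'))$, since every such pair lies in $\bsim$ by construction (Lemma~\ref{lem:qbisim}) and by clause~\ref{qbisim:state} of Definition~\ref{def:qbisim}.

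For the base case I take $f$ propositional, so $f$ is its own unique maximal propositional subformula. Choosing $g \in \Gr$ with $g(s) = \rmap(s) = t$ (which exists because $\sq = \rmap(s)$ is $g(s)$ for some $g$), $\Gr$-invariance gives $\M, s \sat f \iff \M, t \sat f$. It then remains to compare $\M, t \sat f$ with $\Mq, t \sat f$; since $t$ is a representative state, $\rmap(t) = t$ and hence $\Mq_L(t) = \M_L(\rmap(t)) = \M_L(t)$, so the two structures assign the same labels at $t$ and a trivial sub-induction on the propositional connectives yields $\M, t \sat f \iff \Mq, t \sat f$. Chaining the two equivalences gives the base case. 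The Boolean cases in which $f$ is \emph{not} propositional ($\neg\vphi$, $\vphi_1 \land \vphi_2$, $\vphi_1 \lor \vphi_2$) are then immediate from the induction hypothesis applied to the immediate subformulae, using the inheritance of invariance just noted.

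The modal cases are where the bisimulation does the real work, and where I expect the main difficulty. For $\EX\vphi$ and $\AX\vphi$ I would use the single-step transfer clauses~\ref{qbisim:transferM} and~\ref{qbisim:transferMq} of Definition~\ref{def:qbisim}: each successor of $s$ in $\M$ is matched by a $\bsim$-related successor of $t$ in $\Mq$ and conversely, and since each matched pair has the form $(s',\rmap(s'))$ the induction hypothesis equates the truth of $\vphi$ across the match, so the existential (resp.\ universal) quantification over successors agrees on the two sides. For the release formulae $\A[\vphi_1 \V \vphi_2]$ and $\E[\vphi_1 \V \vphi_2]$ I would lift single steps to whole fullpaths through Theorem~\ref{thm:path-correspondence}: clause~2 lifts a fullpath of $\Mq$ starting at $t$ to a fullpath of $\M$ starting at $s$, and clause~1 projects a fullpath of $\M$ starting at $s$ to a fullpath of $\Mq$ starting at $t$, in each case producing position-wise $\bsim$-related paths $s_i \bsim \overline{s_i}$; since the release condition refers only to the truth of $\vphi_1,\vphi_2$ at the successive positions, and the induction hypothesis makes those truths agree position by position, a witnessing fullpath (for $\E$) or a falsifying fullpath (for $\A$, treated contrapositively) on one side yields one on the other. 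The two delicate points, which I regard as the crux, are (i) ensuring the lifted path can be taken to begin at the designated state $s$ rather than at an arbitrary preimage of $t$, which is exactly the content of clause~2 of Theorem~\ref{thm:path-correspondence}, and (ii) rendering the informal reading of $[\vphi_1 \V \vphi_2]$ as an explicit index-quantified condition along a path and checking that this condition is preserved under the position-wise bisimulation.
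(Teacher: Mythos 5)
Your proposal is correct and follows essentially the same route as the paper's proof: structural induction on $f$, with the base case supplied by $\Gr$-invariance together with agreement of labels at representative states, and the modal cases handled via the bisimulation of Lemma~\ref{lem:qbisim} and Theorem~\ref{thm:path-correspondence}. If anything, your bookkeeping is slightly more careful than the paper's sketch: the paper takes atomic propositions as the base case, whereas (as you observe) $\Gr$-invariance only constrains \emph{maximal} propositional subformulae, so stopping the recursion at propositional subformulae, as you do, is the cleaner way to make the induction go through.
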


\condInc{
\begin{proof}
Proof is by induction on the structure of CTL formulae, as given by the grammar in Section~\ref{sec:ctl}.

Since $f$ is $\Gr$-invariant over $\M$, then 
for all $t \in \M$ we have that 
$\M,t \sat P \mbox{ iff } \Mq,\rmap(t) \sat P$,
where $P$ is any maximal propositional subformula of $f$.

This provides the base case for the induction, where $f$ is an atomic proposition.

The induction step when $f$ has main operator one of $\neg, \land, \lor$ are straightforward, and omitted.
The induction steps when 
$f$ is one of $\AX g$, $\EX g$ and $\A[g \R h]$ use 
Definition~\ref{def:qbisim} to establish the existence of the necessary transitions/paths.
The details are standard, see for example 
Theorem 3.2 in \cite{BCG88}.
\end{proof}
}

%

\begin{lemma}
\label{lem:qequivCTL}
Let $\M$ be a Kripke structure, $\Gr$ be a group of state-mappings of $\M$ with representative map $\rmap$, and $\Mq$ be $\M/(\Gr,\rmap)$.
Let $f$ be a $\Gr$-invariant CTL formula and 
$s \in \MSt$. 
Then $\M,s \sat f \mbox{ iff } \Mq,\rmap(s) \sat f$.
\end{lemma}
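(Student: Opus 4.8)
The plan is to obtain this statement as an immediate consequence of the two preceding lemmas, specialized to the canonical pairing of each state with its representative. The real work—the structural induction over the CTL grammar—has already been carried out in Lemma~\ref{lem:qbisimCTL}, so no further inductive argument is required here; what remains is only to instantiate that result at the right pair of states.

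First I would fix an arbitrary state $s \in \MSt$ and set $t = \rmap(s)$, noting that $t \in \MqSt$ since the state set of $\Mq$ is precisely the image of $\rmap$. By Lemma~\ref{lem:qbisim}, the relation $\bsim = \set{(s,\rmap(s)) : s \in \MSt}$ is a $\Gr$-bisimulation between $\M$ and $\Mq$, and by construction the pair $(s,t) = (s,\rmap(s))$ lies in $\bsim$; that is, $s \bsim t$.

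Next, since $f$ is $\Gr$-invariant by hypothesis, I would invoke Lemma~\ref{lem:qbisimCTL} with this pair $s \bsim t$, which yields directly $\M, s \sat f \iff \Mq, t \sat f$, i.e. $\M, s \sat f \iff \Mq, \rmap(s) \sat f$. As $s$ was arbitrary, the equivalence holds for every state of $\M$, completing the argument.

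I do not expect a genuine obstacle at this level: the statement is essentially a corollary that repackages the bisimulation-based equivalence of Lemma~\ref{lem:qbisimCTL} in terms of the explicit representative map $\rmap$. The only points requiring care are verifying that the specific pair $(s,\rmap(s))$ genuinely belongs to the bisimulation $\bsim$—which is immediate from the definition of $\bsim$ in Lemma~\ref{lem:qbisim}—and confirming that $\rmap(s)$ is a legitimate state of $\Mq$ to which Lemma~\ref{lem:qbisimCTL} can be applied.
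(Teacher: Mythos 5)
Your proposal is correct and matches the paper's own proof exactly: both establish $s \bsim \rmap(s)$ via Lemma~\ref{lem:qbisim} and then apply Lemma~\ref{lem:qbisimCTL} to the pair $(s,\rmap(s))$ to conclude $\M,s \sat f \iff \Mq,\rmap(s) \sat f$. The extra checks you flag (that $\rmap(s)$ is a state of $\Mq$ and that the pair lies in the bisimulation) are sound and implicit in the paper's two-line argument.
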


\condInc{
\begin{proof}
By Lemma~\ref{lem:qbisim}, $s \bsim \rmap(s)$.
So, by Lemma~\ref{lem:qbisimCTL}, 
$\M,s \sat f \mbox{ iff } \Mq,\rmap(s) \sat f$.
\end{proof}
}

\begin{definition}[Repair]
Given a structure $\M$ and a formula $f$, we call a nonempty substructure $\Mp$ of $\M$ a \textbf{repair} of $\M$ with respect to $f$ if $\Mp \sat f$. 
\end{definition}

If a CTL formula $f$ is $G$-invariant, then the lattice correspondence will respect the valuation of $f$.

\begin{theorem}[\textbf{Repair Correspondence}]\label{thm:repair}
\sloppy Let $\M$ be a  Kripke structure, $\Gr$ be a group of state-mappings of $\M$ with a representative map $\rmap$, 
and $f$ be a $\Gr$-invariant CTL formula. 

Let $\N$ be a $\Gr$-closed substructure of $\M$, and 
let $\Nq = \N/\qmap$.
Then $\N, s \sat f \iff \Nq, \rmap(s) \sat f$.
\end{theorem}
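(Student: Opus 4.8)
The plan is to observe that this theorem is nothing more than Lemma~\ref{lem:qequivCTL} instantiated at the Kripke structure $\N$ rather than $\M$. Since $\N$ is a total substructure, it is itself a Kripke structure and $\Nq = \N/\qmap$ is its quotient, so the entire argument reduces to checking that the three hypotheses of Lemma~\ref{lem:qequivCTL} transfer from $\M$ to $\N$: that $\Gr$ is a group of state-mappings of $\N$, that the restriction of $\rmap$ to $\N_\St$ is a representative map of $\N$ with respect to $\Gr$, and that $f$ is $\Gr$-invariant over $\N$. Once these are in place, the lemma applied to $\N$ gives $\N, s \sat f \iff \Nq, \rmap(s) \sat f$ directly.

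The first hypothesis is immediate from the definition of $\Gr$-closed. For the second, I would first show that $\rmap$ carries $\N_\St$ into $\N_\St$, which is the only place the argument has any content. The key point is that $\rmap(s)$ always lies in the $\Gr$-orbit of $s$: writing $s' = \rmap(s)$, idempotency gives $\rmap(s') = \rmap(s)$, and then the contrapositive of the ``separates orbits'' condition forces the existence of some $g \in \Gr$ with $g(s) = s'$. Because $\N$ is $\Gr$-closed, $g(s) \in \N_\St$, hence $\rmap(s) \in \N_\St$; the three defining conditions of a representative map are then inherited verbatim by $\rmap \pj \N_\St$, using that the orbits of $\Gr$ on $\N_\St$ coincide with its orbits on $\M_\St$ restricted to $\N_\St$.

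For the third hypothesis I would use that $\N_\Lb = \M_\Lb \pj \N_\St$ (Definition~\ref{defn:substructure}), so any maximal propositional subformula $P$ of $f$, which depends only on the labeling of the current state, evaluates identically at $s \in \N_\St$ in $\N$ and in $\M$. Chaining $\N, s \sat P \iff \M, s \sat P$ with the $\Gr$-invariance of $f$ over $\M$ and with $\M, g(s) \sat P \iff \N, g(s) \sat P$ (valid since $g(s) \in \N_\St$ by $\Gr$-closure) yields $\N, s \sat P \iff \N, g(s) \sat P$, i.e.\ $\Gr$-invariance over $\N$. The step I expect to require the most care is the closure claim $\rmap(\N_\St) \subseteq \N_\St$: without it the quotient $\Nq$ and the restricted representative map are not even well-defined on $\N$, so it must be established before anything else in the argument can proceed.
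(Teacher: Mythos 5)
Your proposal is correct and takes essentially the same approach as the paper: the paper's one-line proof establishes that $\N$ and $\Nq$ are $\Gr$-bisimilar under the usual relation $s \bsim \rmap(s)$ and then implicitly invokes Lemma~\ref{lem:qbisimCTL}, which amounts exactly to Lemma~\ref{lem:qequivCTL} instantiated at $\N$ in place of $\M$ --- the reduction you perform. Your write-up additionally makes explicit the well-definedness checks that the paper glosses over (that $\rmap(\N_{\St}) \subseteq \N_{\St}$, via idempotency plus orbit separation and $\Gr$-closure, and that $\Gr$-invariance of $f$ descends from $\M$ to $\N$ because $\N_{\Lb} = \M_{\Lb} \pj \N_{\St}$), which is precisely what legitimizes that instantiation.
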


\condInc{
\begin{proof}
We can show that $\N$ and $\Nq$ are $\Gr$-bisimilar 
under the usual relation of each state $s$ in $\N_\St$ with 
$\rmap(s)$ in $\Nq_\St$. 
%
%
%
\end{proof}
}

It should be noted however that there are formulae $f$ and  structures $\M$, such that $f$ is $\Gr$-invariant for some symmetry group $\Gr$ of $\M$ and there are no $\Gr$-closed repairs of $f$, but there are repairs that are not $\Gr$-closed. That is, not every repairable structure has a symmetric repair. 

Theorem \ref{thm:repair} gives us the commuting diagram in Figure \ref{fig:com_diagram}. 

\begin{wrapfigure}{r}{2.5in}
    \centering
\includegraphics[scale= .4]{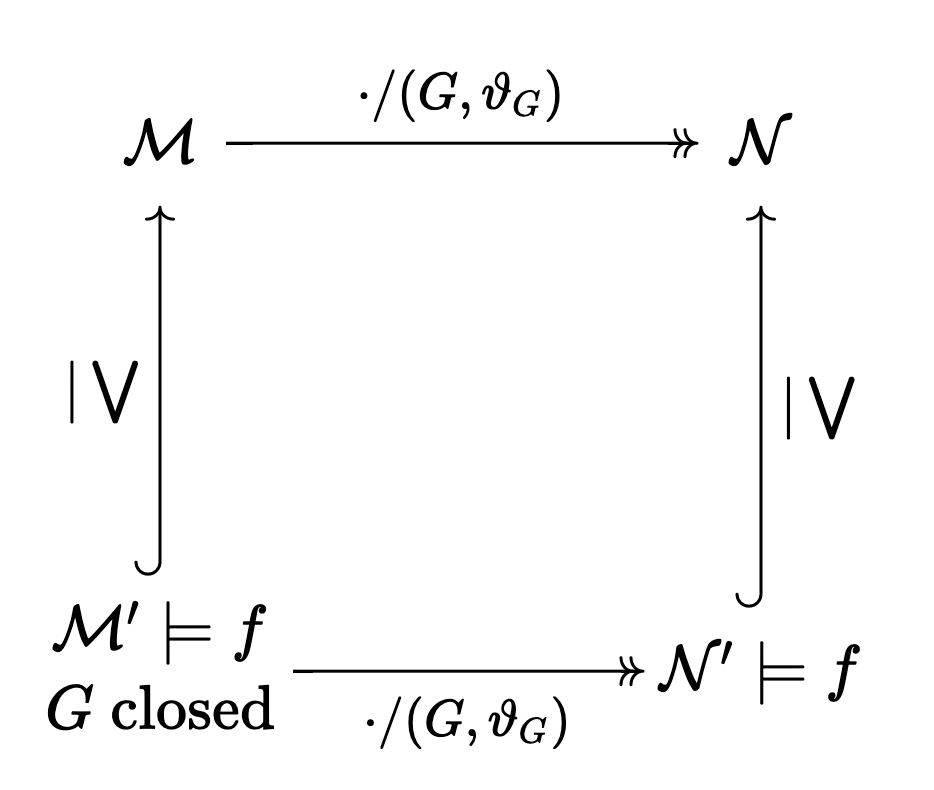}   \caption{Commutative diagram illustrating the connection between a structure $\M$, a quotient structure $\Mq$, and substructures of $\Mq$ satisfying a formula $f$ with $G$-closed substructures of $\M$ satisfying $f$. }
    \label{fig:my_label}
    \label{fig:com_diagram}
\end{wrapfigure}


\section{Repair of concurrent programs}

We now apply the repair correspondence theorem to realize our goal: the repair of concurrent programs. 
A concurrent program $P = P_1 \pl \ldots \pl P_n$ consists of $n$ sequential processes executing in parallel. Each process $P_i$ 
is a set of \emph{$i$-actions}, \ie tuples 
$(s_i, B \ar A, t_i)$, where $s_i, t_i$ are \emph{local states} of $P_i$, $B$ is a guard, and $A$ is an assignment to shared variables. 
We say \emph{action} when we wish to ignore the identity of the process that executes the action.
Each $P_i$ has a local set of atomic propositions $\AP_i$ which it alone can modify, but other processes can read. Each local state of $P_i$ determines as assignment of truth values to the atomic propositions in $\AP_i$. We set $\AP = \AP_1 \un \ldots \un \AP_n$. 
A \emph{global state} $s = (s_1,\ldots,s_n, v_1,\ldots,v_m)$ of $P$ is a tuple consisting of local states $s_1,\ldots,s_n$
for processes $P_1, \ldots, P_n$ respectively, and values
$v_1,\ldots,v_m$ for shared variables $x_1,\ldots,x_m$.
Define $s \pj i = s_i$ for $i=1,\ldots,n$ 
and $s(x_j) = v_j$ for $j=1,\ldots,m$.
The truth values that atomic propositions in $\AP_i$ have in $s$ must be the same as those in $s_i$. 
The states of Kripke structures that we have treated above are global states as described here.

Given a concurrent program $P$ and a CTL formula $f$, we wish to modify $P$ to produce a $P'$ such that 
$\M \sat f$, where $\M$ is the global state transition graph of $P'$. The modification is "subtractive", that is, it only removes behaviors, it does not add behaviors (which would be, in essence, a synthesis, and not a repair, algorithm).
We proceed as follows:

\begin{enumerate}
\item \label{repair:generateRed} Generate the symmetry-reduced state transition graph $\N$ of $P$. We use the 
algorithm given by Figure 1 in \cite{EmSi96}, which is reproduced in Appendix ?.

\item \label{repair:repairRed} Repair $\N$ \wrt the given CTL specification $f$. Let $\Np$ be the repaired structure, so that $\Np \sat f$.

\item \label{repair:extractProg} Extract a repaired concurrent program $P'$ from $\Np$
\end{enumerate}

Steps 2 and 3 are given in the subsections below.

\subsection{Repair of symmetry-reduced structures}

We apply the model repair algorithm of \cite{ABS18} 
to the symmetry-reduced state transition graph $\N$ of $\Prg$, and the specification $f$ of $\Prg$. This algorithm is sound and complete, so that if $\N$ has some substructure that satisfies $f$, then the algorithm will return such a substructure $\NR$. If not, the algorithm will report that no repair exists, in which case there is no 
subtractive modification of $\Prg$ which satisfies $f$.

\subsection{Extraction of concurrent programs from symmetry-reduced structures}
\label{sec:prog-extraction}

\newcommand{\bigland}{\bigwedge}
\newcommand{\stof}[1]{\{\hspace*{-0.3em}|#1|\hspace*{-0.3em}\}}
\newcommand{\dn}{\mbox{$\hspace{0em}\downarrow\hspace{0em}$}}

\newcommand{\SH}{\mbox{$\cal SH$}}  
\newcommand{\lfalse}{\mathit{false}}
\newcommand{\ltrue}{\mathit{true}}

We extract concurrent program $\PrgR$ from $\NR$ following \cite{EC82}, by projecting each transition in $s \la{i} t \in \NTr$ onto the process $P_i$ making the transition. 
For each $s \la{i} t \in \NTr$ we define 
$\action{s \la{i} t}$ to be the 
$i$-action $(s \pj i, B \ar A, t \pj i)$ as follows.
The start, end states of this action are the projections of $s, t$ respectively onto $P_i$. 
%
We define the guard 
$B  =  
``(\AND_{Q \in \NLb(s)} Q) \land
  (\AND_{Q \not\in \NLb(s)} \neg Q) \land
  (\AND_{x \in \SH} x = s(x))\mbox{\rm ''}
$
where $Q$ ranges over $\AP - \AP_i$. 
%
When $\Prc_i$ is in local state $s_i$, the guard $B$ checks that the current global state is actually $s$.
%
We define the assignment $A = `` \PA x := t(x) \mbox{\rm ''}$
where $x$ ranges over all shared variables such that 
$t(x) \ne s(x)$. That is, $A$ makes the necessary updates to the shared variables in moving from state $s$ to state $t$.

Let $\Ac = \set{ \action{s \la{i} t} \stt s \la{i} t \in \NTr}$,
\ie $\Ac$ is the set of actions generated by all the transitions of $\N$. 
%
%
In principle, each process $\PrcR_i$ of the repaired program $\PrgR$ can be set to be the set of $i$-actions in $\Ac$. It is straightforward to show that $\PrgR$ thus constituted generates exactly $\N$ as its global state transition diagram. Since $\NR \sat f$, we also have $\PrgR \sat f$ as desired.
However, even though $\NR$ is total, it is possible for some processes in $\PrgR$ to have "dead end" local states from which they have no outgoing actions. 
We therefore repeatedly apply some $g \in \Gr$ to $\Ac$, until all such "dead end" local states acquire at least one outgoing action. 
One can continue this application of $g \in \Gr$ to $\Ac$ until no more actions are generated, which produces a "$\Gr$-closed" program $\PrgR$. This is the "largest" correct program that can be produced as a repair of $\Prg$. Using the largest program may increase the available concurrency, but may produce a large number of actions, and so increase the complexity of the repair procedure.


\begin{theorem} 
\label{thm:correctRepair}
Let $\NR$ be a repaired Kripke structure as above, $\PrgR$ be a concurrent program extracted from $\NR$ as above, 
and $\Mp$ be the global state transition diagram of $\PrgR$. Then  every initial state of $\Mp$ is $\Gr$-bisimilar to some initial state of $\NR$, and vice-versa.
\end{theorem}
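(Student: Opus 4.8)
The plan is to produce a single relation and verify that it is a $\Gr$-bisimulation in the sense of Definition~\ref{def:qbisim}, with $\Mp$ playing the role of the structure and $\NR$ the role of the quotient, and then to read off the initial-state correspondence. The candidate relation is
\[
R = \{\,(s,\rmap(s)) : s \in \MpSt\,\},
\]
which pairs each global state of the repaired program with the orbit representative naming its $\NR$-state. Throughout I take $\PrgR$ to be the $\Gr$-closed extraction, i.e. $\Ac$ closed under the action of $\Gr$; this closure is exactly what will make the third bisimulation clause go through.

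The substance of the argument, and its main obstacle, is a concrete description of $\Mp$ in terms of the $\Gr$-orbits of $\NR$, namely
\[
\MpSt = \{\,g(\sq) : g \in \Gr,\ \sq \in \NR_\St\,\},\qquad
\MpTr = \{\,(g(\sq),g(\tq)) : g \in \Gr,\ (\sq,\tq) \in \NR_\Tr\,\},
\]
together with $\MpiSt = \{\,g(\sq) : g \in \Gr,\ \sq \in \NR_\iSt\,\}$. This is where the extraction of Section~\ref{sec:prog-extraction} is used. The key point is that the guard $B$ attached to an action tests the \emph{entire} current global state, so the action obtained from a transition $(\sq,\tq)$ of $\NR$ can fire only in global state $\sq$ and then leads only to $\tq$; consequently the raw action set generates precisely $\NR$ with no spurious interleavings, and closing under $\Gr$ contributes exactly the translated transitions $(g(\sq),g(\tq))$. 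Totality of $\Mp$ is guaranteed by the dead-end-filling step, and reachability of every $g(\sq)$ follows from the Path Correspondence Theorem~\ref{thm:path-correspondence}: a path in $\NR$ from an initial representative to $\sq$ lifts, from any chosen element of the orbit of that initial representative, to a path in $\Mp$ ending in the orbit of $\sq$.

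Granting this description, the verification is short. For any $s = g(\sq) \in \MpSt$ the orbit-respecting and idempotence properties of $\rmap$ give $\rmap(s) = \rmap(\sq) = \sq \in \NR_\St$, so $R$ is well defined; moreover $\MpSt$ and $\MpTr$ are visibly stable under $\Gr$, so $\Gr$ is a group of state-mappings of $\Mp$ and $\rmap$ restricts to a representative map of $\Mp$. The transition description also shows that $\Mp/\qmap = \NR$. Hence Lemma~\ref{lem:qbisim}, applied to the pair $(\Mp,\Mp/\qmap)$, yields at once that $R$ is a $\Gr$-bisimulation between $\Mp$ and $\NR$; equivalently one checks Definition~\ref{def:qbisim} directly, Clause~\ref{qbisim:state} from $\rmap(s)=g(s)$, Clause~\ref{qbisim:transferM} from $s\to s'$ being some $g(\sq)\to g(\tq)$, and Clause~\ref{qbisim:transferMq} from the presence of the translate $g(\sq)\to g(\tq)$, which is precisely what $\Gr$-closure supplies.

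It remains to restrict $R$ to initial states. If $s \in \MpiSt$ then $s = g(\sq)$ with $\sq \in \NR_\iSt$, so $(s,\rmap(s)) = (s,\sq) \in R$ exhibits $s$ as bisimilar to an initial state of $\NR$; conversely, for $\sq \in \NR_\iSt$ we have $\sq \in \MpiSt$ (take $g$ the identity) and $(\sq,\sq) \in R$, exhibiting an initial state of $\Mp$ bisimilar to $\sq$. This gives the claimed correspondence in both directions. One subtlety worth flagging, visible already in Example~\ref{ex:box_quo}, is that $\Mp$ need not be a substructure of the original $\M$, since a transition of a quotient need not lift to ``the same'' transition of $\M$; this causes no difficulty here, because Lemma~\ref{lem:qbisim} concerns an arbitrary structure together with its own quotient rather than $\M$.
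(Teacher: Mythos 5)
Your proof is correct, and at its core it uses the same device as the paper: relate each state of $\Mp$ to its orbit representative in $\NR$ (the paper writes the inverse relation, pairs $(s,t)$ with $s \in \NRSt$ and $t = g(s) \in \MpSt$), verify that this is a $\Gr$-bisimulation, and then restrict to initial states. The difference is in how the verification is discharged. The paper's proof is essentially an assertion: it says the bisimulation property ``follows from the definitions of program extraction and program execution semantics,'' and that the initial-state sets correspond, without ever exhibiting the transition structure of $\Mp$. You make that implicit content explicit: you characterize $\MpSt$, $\MpiSt$, $\MpTr$ as the $\Gr$-orbit closures of the corresponding components of $\NR$ (using that the extracted guards pin down the entire global state, so the raw action set regenerates $\NR$ exactly, and $\Gr$-closure contributes precisely the translated transitions $(g(\sq),g(\tq))$), deduce the identity $\Mp/\qmap = \NR$, and then obtain the bisimulation for free from Lemma~\ref{lem:qbisim} applied to $\Mp$ and its own quotient. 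This reuse of Lemma~\ref{lem:qbisim} is a genuine economy the paper does not exploit, and your closing observation --- that $\Mp$ need not be a substructure of $\M$, which is harmless because Lemma~\ref{lem:qbisim} concerns an arbitrary structure and its own quotient --- addresses a pitfall the paper never acknowledges.

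One step should be tightened: in your reachability argument, the lifted path given by part~2 of Theorem~\ref{thm:path-correspondence}, started at a chosen preimage of the initial representative, is only guaranteed to end at \emph{some} element of the orbit of $\sq$, not at the particular element $g(\sq)$ you need. The fix is immediate from facts you have already established: since $\MpiSt$ and $\MpTr$ are stable under $\Gr$, if the lifted path ends at $k(\sq)$, translate the entire path by $g k^{-1}$; its start remains an initial state and it now ends exactly at $g(\sq)$. As written, your sentence elides this translation step, which is the only place where orbit-closure of the \emph{initial} states (and not just of states and transitions) is genuinely needed for reachability.
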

\condInc{
\begin{proof}
Let $\bsim$ consist of the pairs $(s,t)$ such that $s \in \NRSt$, $t \in \MpSt$, and $t = g(s)$ for some $g \in \Gr$.
The proof that $\bsim$ is a $\Gr$-bisimulation
follows from the definitions of program extraction and 
program execution semantics.
By the definitions of program extraction and 
program execution semantics, $\NR$ and $\Mp$ have isomorphic sets of initial states, with corresponding states having the same labels. It follows that these states are bisimilar.
Hence $\bsim$ is a $\Gr$-bisimulation which moreover relates each initial state of $\NR$ to an initial state of $\Mp$ and vice-versa. 
\end{proof}
}

\begin{lemma}
\label{lem:correctRepair}
Let $\NR$ be a Kripke structure as above that is repaired 
\wrt a $\Gr$-invariant CTL formula $f$. Let  $\PrgR$ be a concurrent program extracted from $\NR$ as above, and $\Mp$ be the global state transition diagram of $\PrgR$. Then $\Mp \sat f$.
\end{lemma}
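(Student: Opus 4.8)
The plan is to transport satisfaction of $f$ across the $\Gr$-bisimulation supplied by Theorem~\ref{thm:correctRepair}. First I would invoke Theorem~\ref{thm:correctRepair} to obtain a $\Gr$-bisimulation $\bsim$ between $\NR$ and $\Mp$ with the additional property that every initial state of $\Mp$ is related to some initial state of $\NR$. Recall next that $\NR$ is a repair of the reduced structure, so by definition $\NR \sat f$; that is, $\NR, s_0 \sat f$ for every initial state $s_0$ of $\NR$.

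The central ingredient is that $\Gr$-bisimilar states agree on every $\Gr$-invariant CTL formula. This is exactly the content of Lemma~\ref{lem:qbisimCTL}, and although that lemma is phrased for a structure and its quotient, its proof appeals only to the three transfer clauses of Definition~\ref{def:qbisim} together with a base case for maximal propositional subformulae. Since these are precisely the properties enjoyed by the relation $\bsim$ coming from Theorem~\ref{thm:correctRepair}, the same induction on the structure of $f$ applies verbatim: whenever $s \bsim t$ we have $\NR, s \sat f \iff \Mp, t \sat f$.

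With this in hand the conclusion is immediate. Fix an arbitrary initial state $t_0$ of $\Mp$. By Theorem~\ref{thm:correctRepair} there is an initial state $s_0$ of $\NR$ with $s_0 \bsim t_0$. Since $\NR, s_0 \sat f$, the bisimulation-preservation step yields $\Mp, t_0 \sat f$. As $t_0$ was arbitrary, $\Mp, t_0 \sat f$ holds for every initial state of $\Mp$, which is precisely $\Mp \sat f$.

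I expect the main obstacle to be the base case of the bisimulation-preservation argument, namely verifying that $\bsim$-related states assign the same truth values to the maximal propositional subformulae of $f$. Here clause~\ref{qbisim:state} of Definition~\ref{def:qbisim} gives $t = g(s)$ for some $g \in \Gr$, and the $\Gr$-invariance of $f$ guarantees that $\NR, s \sat P \iff \NR, g(s) \sat P$ for each such subformula $P$; one then has to confirm that program extraction assigns the state $t = g(s)$ of $\Mp$ the same label (restricted to the propositions occurring in $f$) that $g(s)$ carries in the reduced setting, so that the two valuations genuinely coincide. Once this labelling compatibility is checked, the inductive steps for $\neg, \land, \lor, \AX, \EX$ and $\A[g \R h]$ go through using the transfer clauses exactly as in the proof of Lemma~\ref{lem:qbisimCTL}.
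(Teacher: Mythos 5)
Your proposal follows essentially the same route as the paper's proof: conclude $\NR \sat f$ from the repair, invoke Theorem~\ref{thm:correctRepair} to relate initial states of $\Mp$ to initial states of $\NR$ via a $\Gr$-bisimulation, and transfer satisfaction of the $\Gr$-invariant formula $f$ across that bisimulation using Lemma~\ref{lem:qbisimCTL}. In fact you are more careful than the paper, which silently applies Lemma~\ref{lem:qbisimCTL} outside its stated quotient setting, whereas you explicitly justify that its inductive argument depends only on the transfer clauses of Definition~\ref{def:qbisim} and on label agreement at related states.
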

\condInc{
\begin{proof}
Since $\NR$ is repaired \wrt $f$, we have $\NR \sat f$ by 
Corollary 4.5 of \cite{ABS18}.
Hence $\NR, s_0 \sat f$ for every initial state $s_0$ of $\NR$. 
By Theorem~\ref{thm:correctRepair}, every initial state of $\Mp$ is $\Gr$-bisimilar to some initial state of $\NR$.
Hence by Lemma~\ref{lem:qbisimCTL}, 
$\Mp, t_0 \sat f$ for every initial state $t_0$ of $\Mp$. 
Hence $\Mp \sat f$.
\end{proof}
}

\section {Examples}

\subsection{Two process mutual exclusion} \label{lattice-example}
\label{sec:mutex2}

Figure~\ref{fig:mutex2} shows a Kripke structure $\M$ for 
mutual exclusion of two processes $P_1$ and $P_2$.
$P_i$ ($i=1,2$) can be in any of three local states: $Ni$ (neutral, doing local computation), $Ti$ (trying, has requested critical section entry), and $Ci$ (in the critical section). 
$\M$ has exactly two symmetries: the identity map, and the map that swaps process indices $1$ and $2$. We construct \emph{the} quotient model under this symmetry group. 
The reduced structure $\Mq = \M/\qmap$ is also shown in
Figure~\ref{fig:mutex2}.
Transitions of $P_1$, $P_2$ are shown in blue, red, respectively. $\Mq$ has a transition (shown in black) from state \textbf{S6} to \textbf{S1}, 
which is the quotient of the transition from \textbf{S6} to \textbf{S2} in $\M$, i.e., $\vartheta_G(S6) = S6$ and $\vartheta_G(S2) = S1$ so the edge $(\vartheta_G(S6), \vartheta_G(S1))$ occurs in $\Mq$.

Consider the formula $\varphi = \AG \neg(C_1 \land C_2)$. Clearly, $\M \not\sat \varphi$. Since $\varphi$ is $\Gr$-invariant, this is equivalent to $\Mq \not\models \varphi$ for any representative map $\vartheta_G$. From the correspondence theorem, any substructure of $\Mq$ that satisfies $\varphi$ corresponds to a $\Gr$-maximal substructure of $\M$ that satisfies $\varphi$. 

Figure~\ref{fig:mutex2repaired} shows the repair of the reduced structure $\Mq$, and the resultant lifting of 
the repair to $\M$. The deleted transitions and states 
are shown dashed.

\begin{figure}
\begin{center}
\includegraphics[width=8cm]{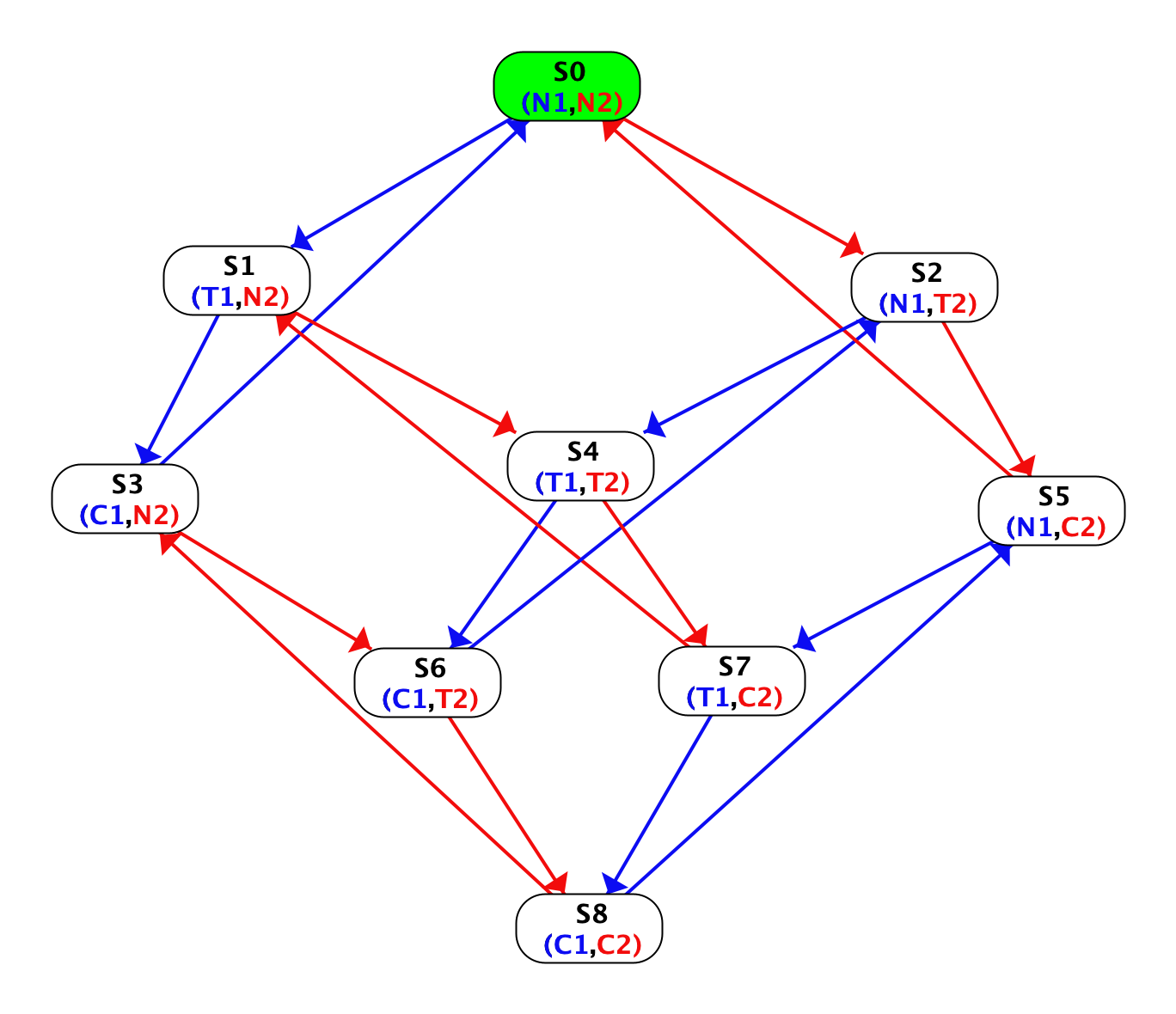}
\includegraphics[width=5cm]{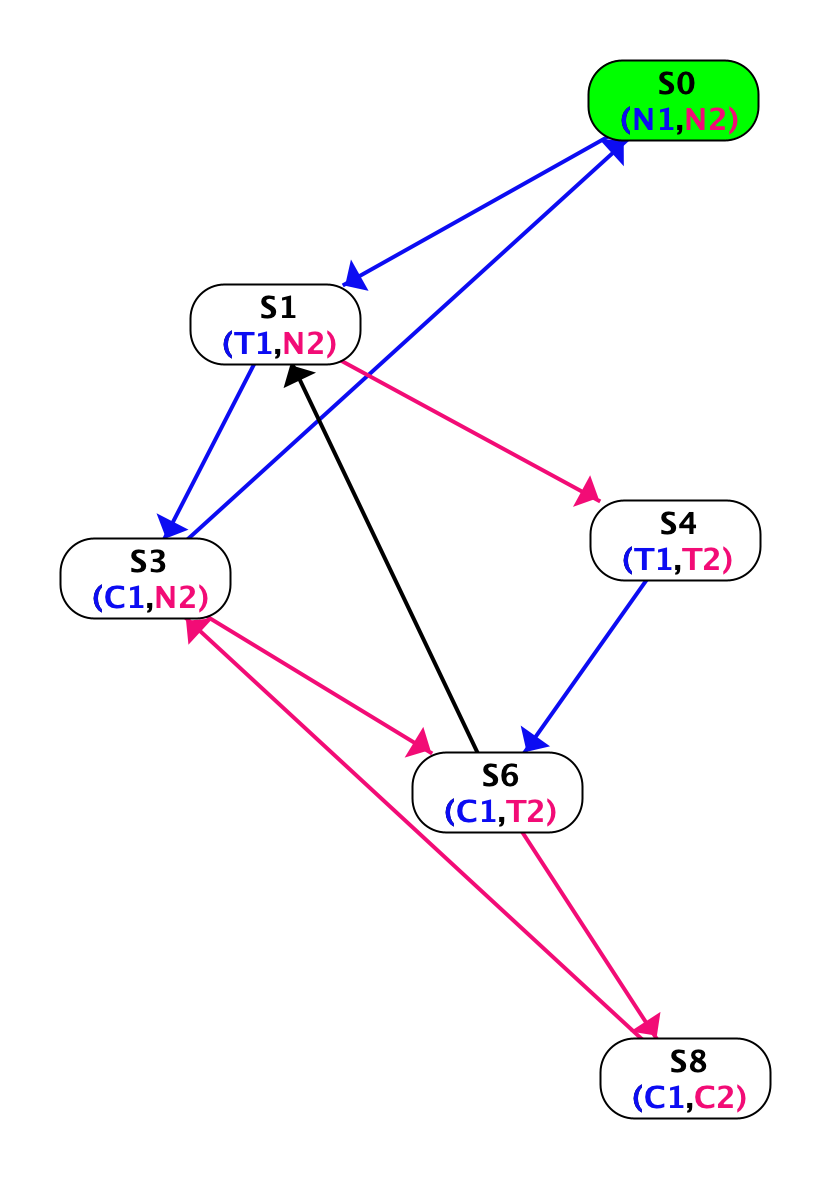}
\caption{The original model $\M$ and quotient $\Mq = \M/\qmap$.}
\label{fig:mutex2}
\end{center}
\end{figure}

\begin{figure}
\begin{center}
\includegraphics[width=5cm]{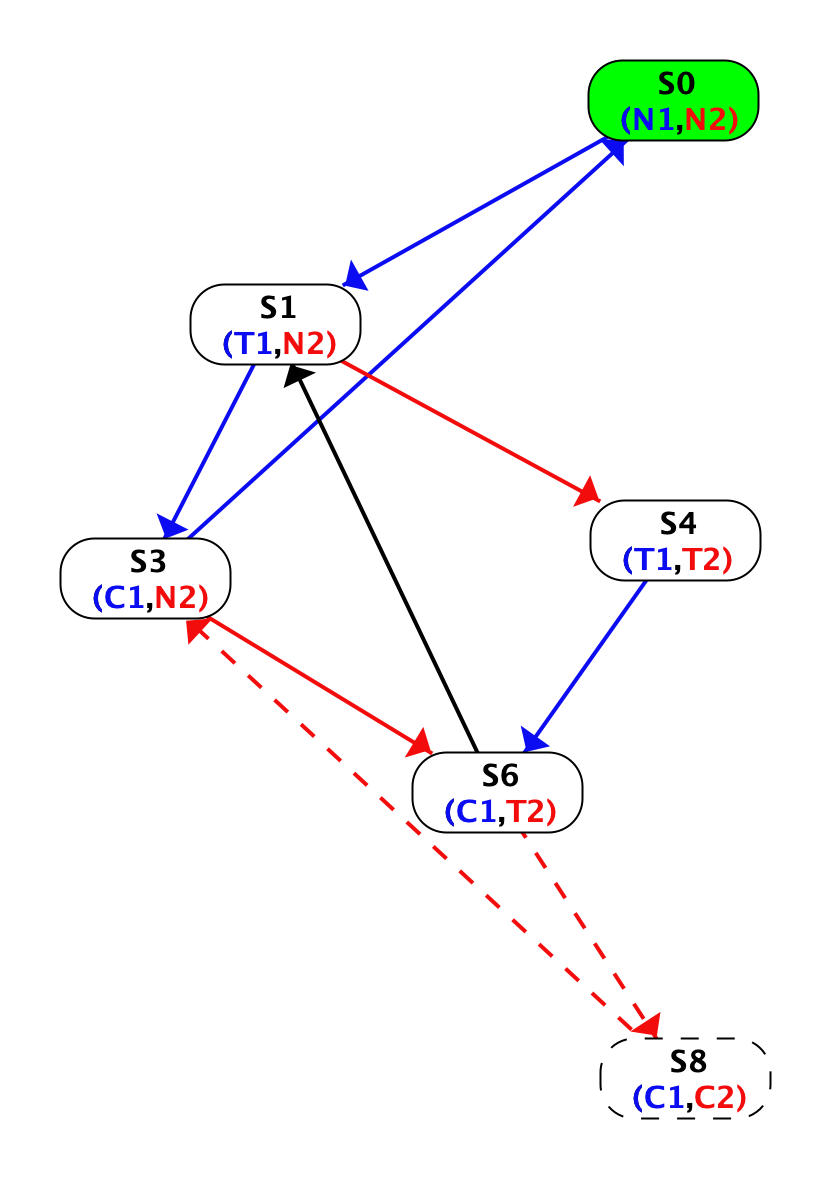}
\includegraphics[width=8cm]{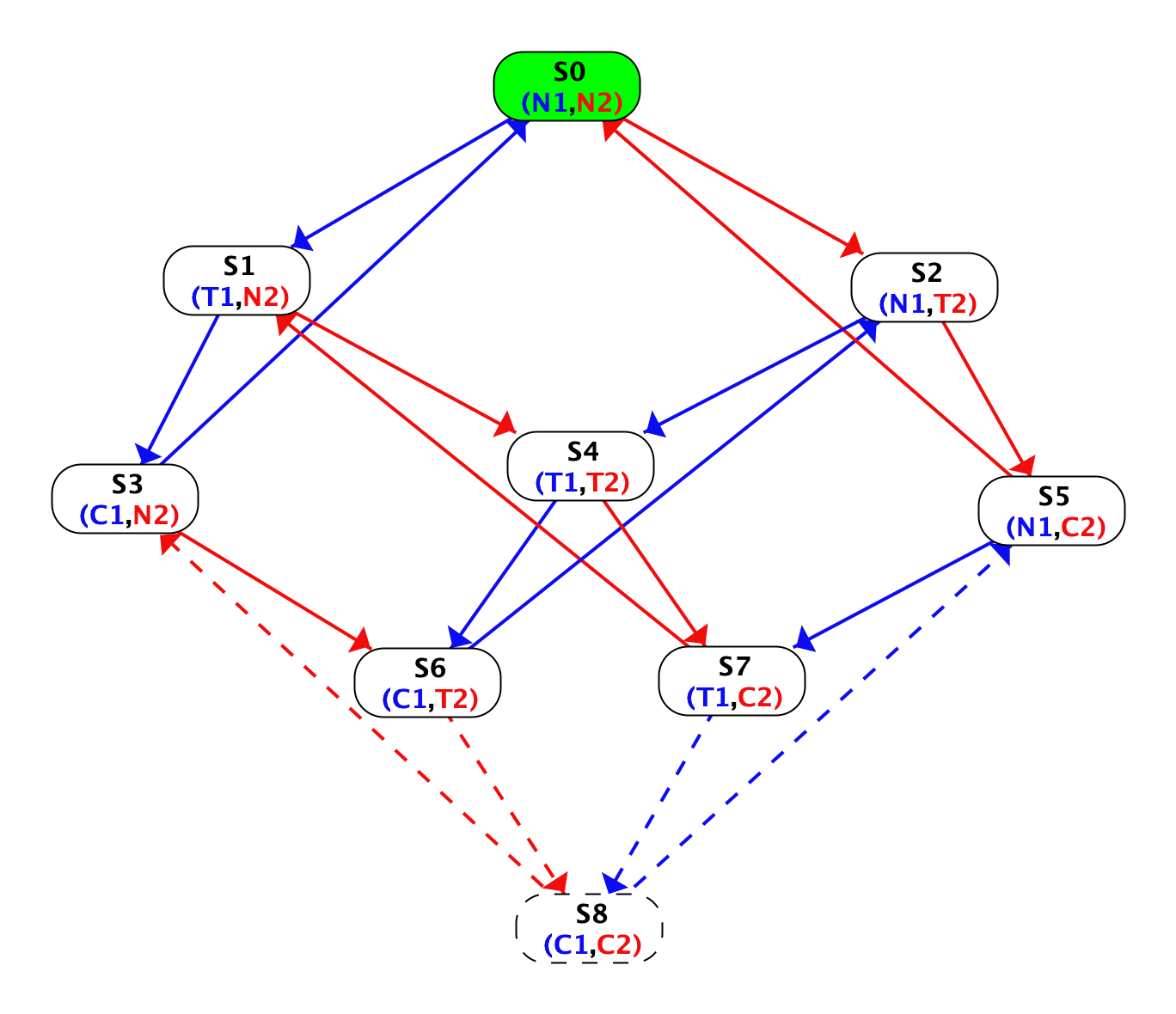}
\caption{The repair of $\Mq$ and the lifting of the repair to $\M$.}
\label{fig:mutex2repaired}
\end{center}
\end{figure}

\subsection{$N$-Process Mutual Exclusion}
\label{sec:mutex3}

We now consider mutual exclusion for $N$-processes. To reduce clutter, we remove the trying $(Ti)$ state, and we give a concrete example for 3 processes --- the generalization to $N$ processes is straightforward.
We start with the na\"ive concurrent program shown in Figure~\ref{fig:mutex3prog}, where actions are labeled with the trivial guard $\true$. The resulting Kripke structure is an $N$-dimensional hypercube, 
and Figure~\ref{fig:mutex3} shows the structure $\M$ for 3-processes, along with the quotient $\Mq = \M/\qmap$.
We consider the mutual exclusion specification 
$\AND_{i \ne j} \AG \neg(C_i \land C_j)$.
The symmetry group $\Gr$ for both structure and specification is the full permutation group on the indices $\set{1,\ldots,N}$. For $N$-processes, we have that the quotient model by the full group of symmetries has $N+1$ states, while the original model would have $2^N$ states.
Figure~\ref{fig:mutex3repaired} shows the repair of the quotient $\Mq$ and then the lifting of the repair to the original structure $\M$. 
Figure~\ref{fig:mutex3progrep} shows the correct (repaired) program $\PrgR$ that is extracted from the repaired quotient in Figure~\ref{fig:mutex3repaired}.
For $N$ processes, the guard on actions of $\PrcR_i$ is $\AND_{j \ne i} N_j$.


\begin{figure}
\begin{center}
\includegraphics[width=15cm]{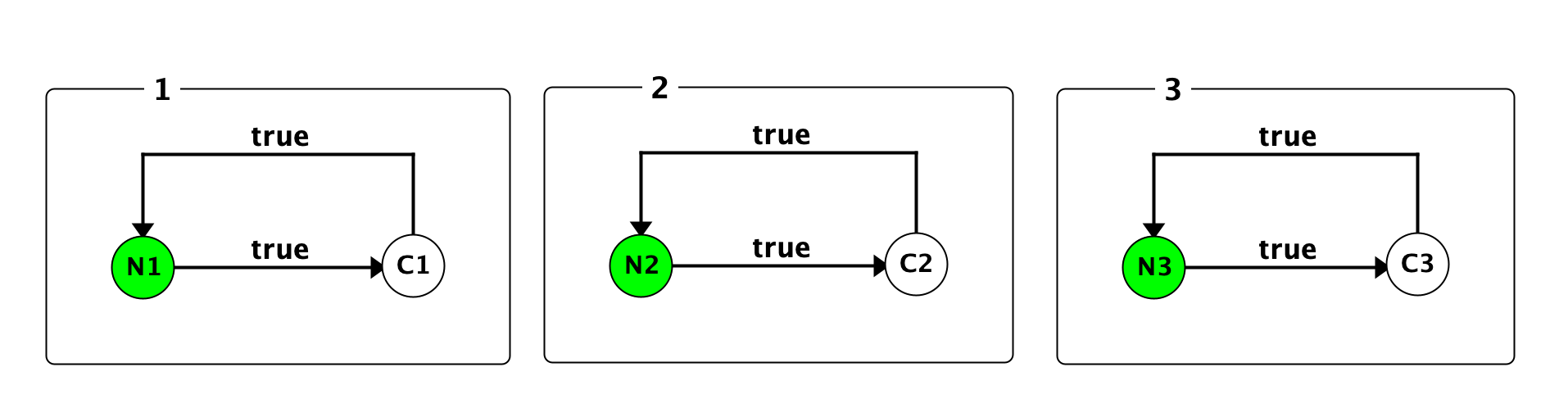}
\caption{The original program $\Prg$}
\label{fig:mutex3prog}
\end{center}
\end{figure}

\begin{figure}
\begin{center}
\includegraphics[width=8cm]{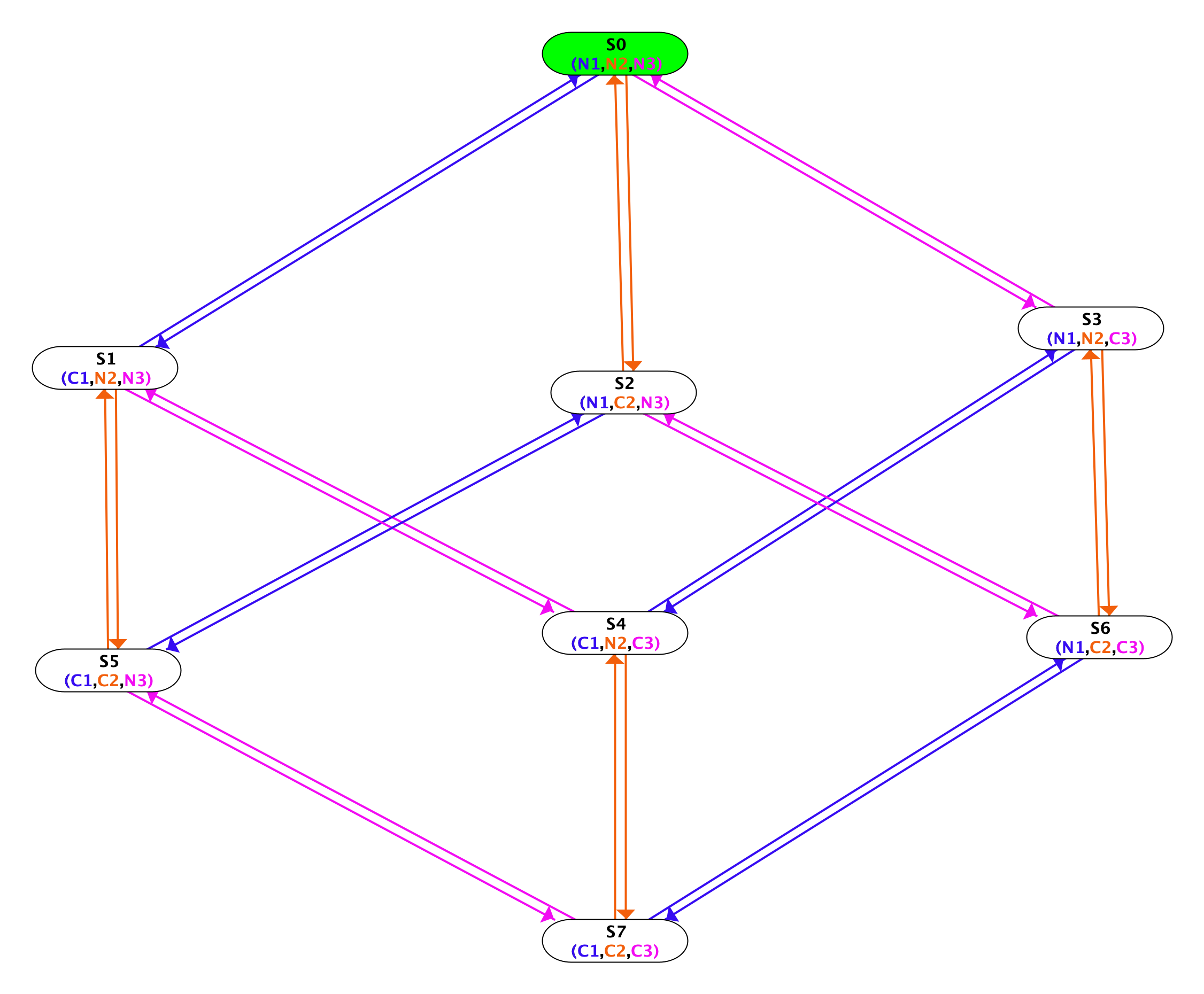}
\includegraphics[width=4.5cm]{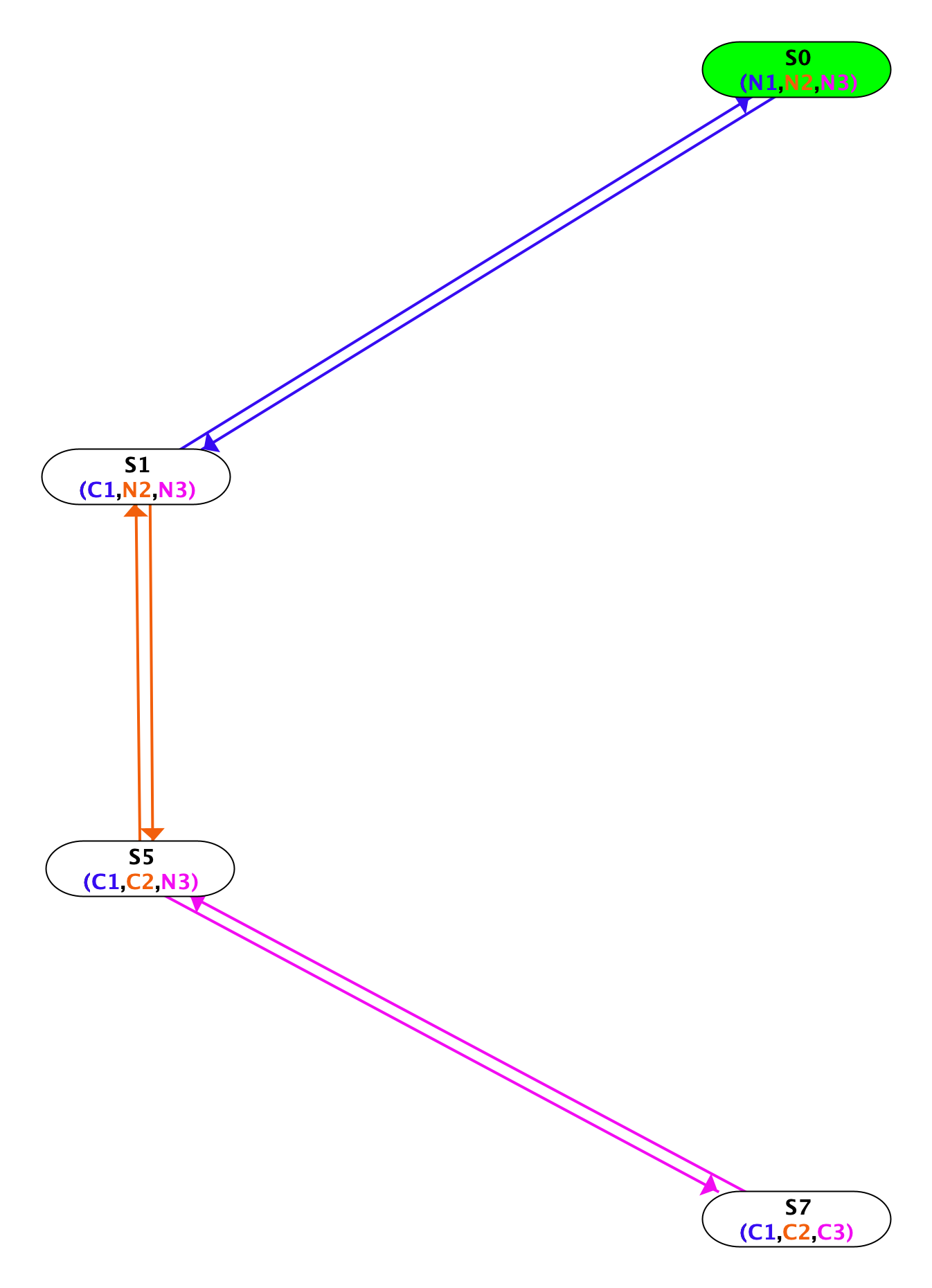}
\caption{The original model $\M$ and quotient $\Mq = \M/\qmap$.}
\label{fig:mutex3}
\end{center}
\end{figure}

\begin{figure}
\begin{center}
\includegraphics[width=4.5cm]{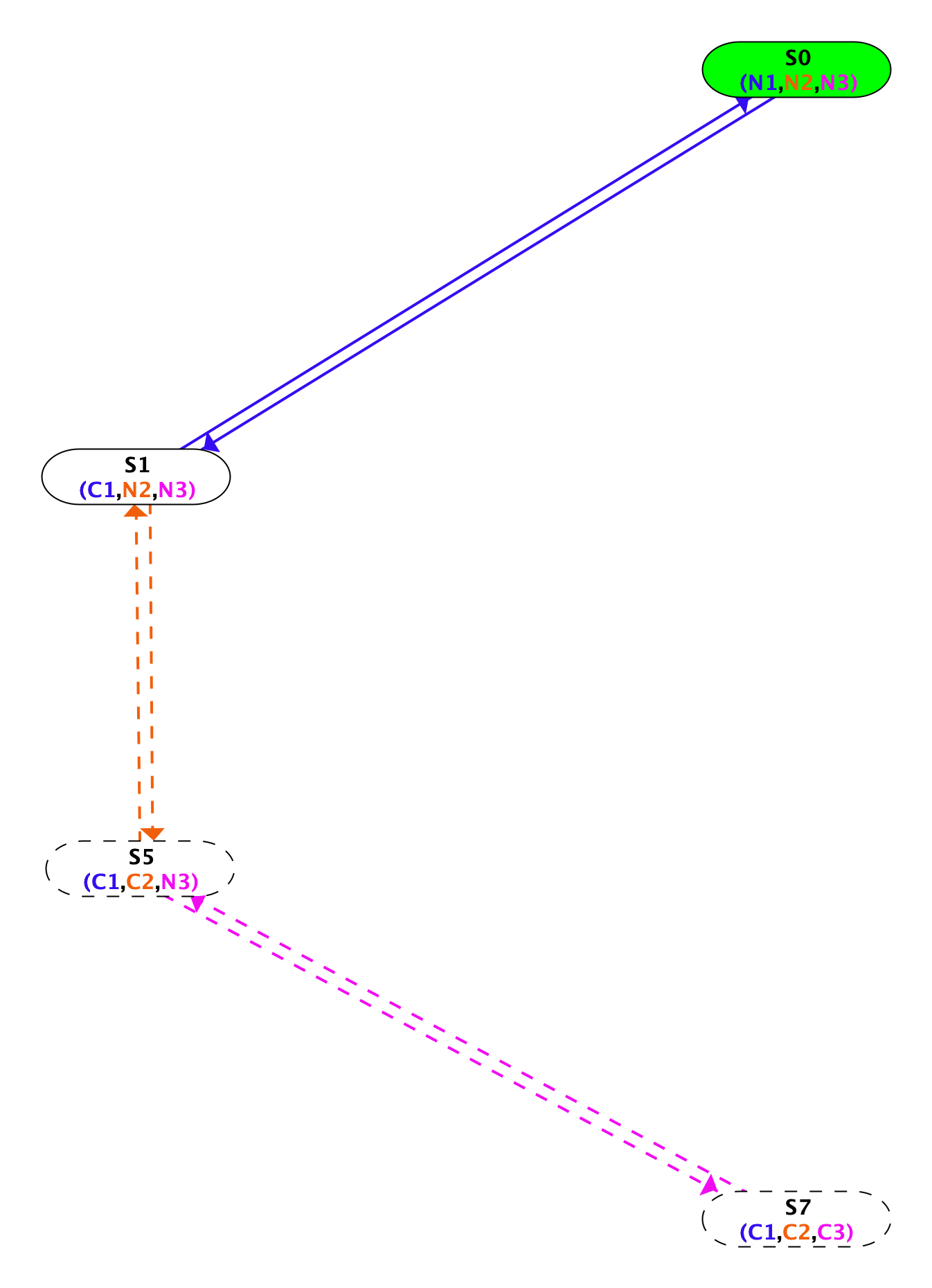}
\includegraphics[width=8cm]{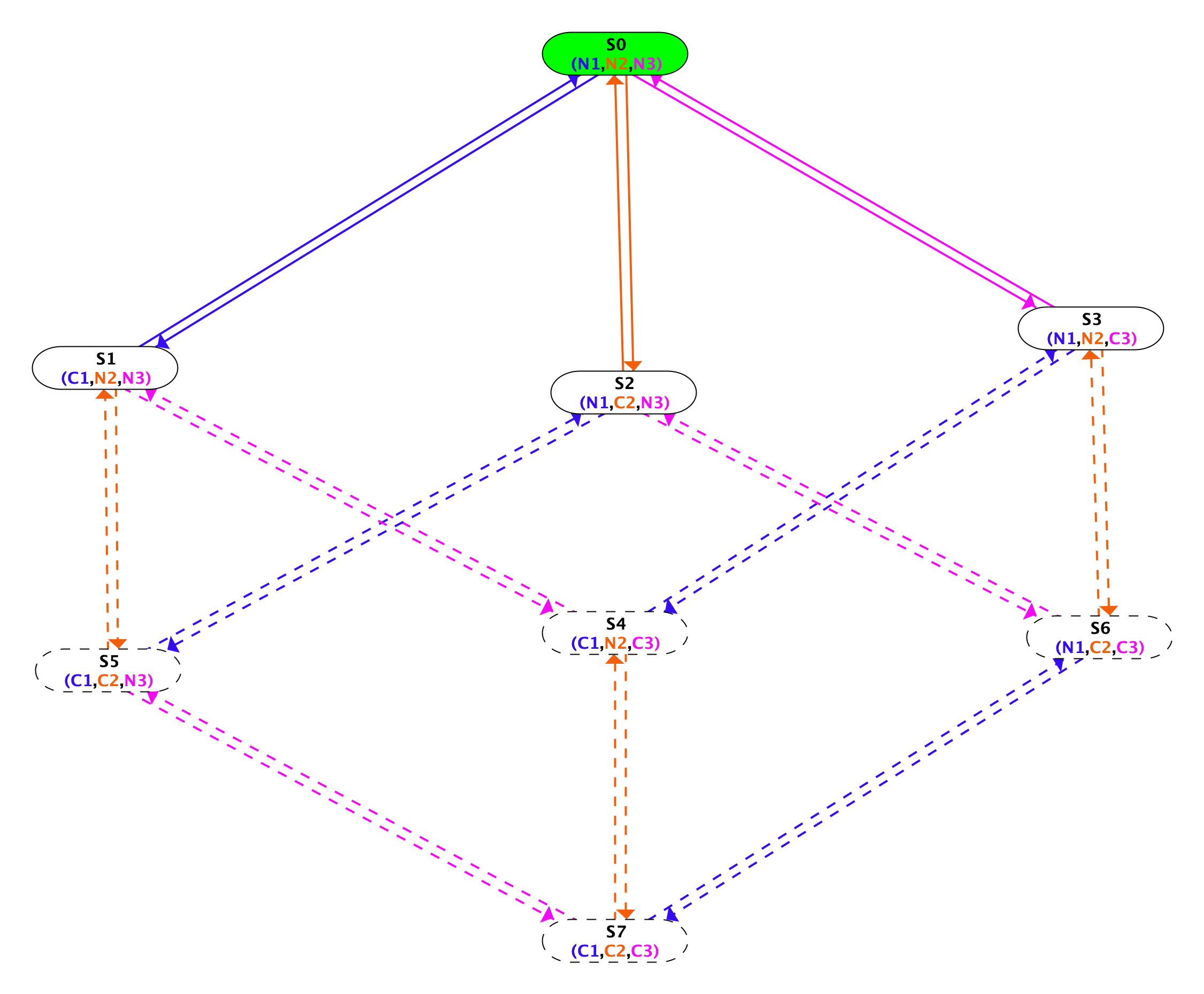}
\caption{The repair of $\Mq$ and the lifting of the repair to $\M$.}
\label{fig:mutex3repaired}
\end{center}
\end{figure}

\begin{figure}
\begin{center}
\includegraphics[width=15cm]{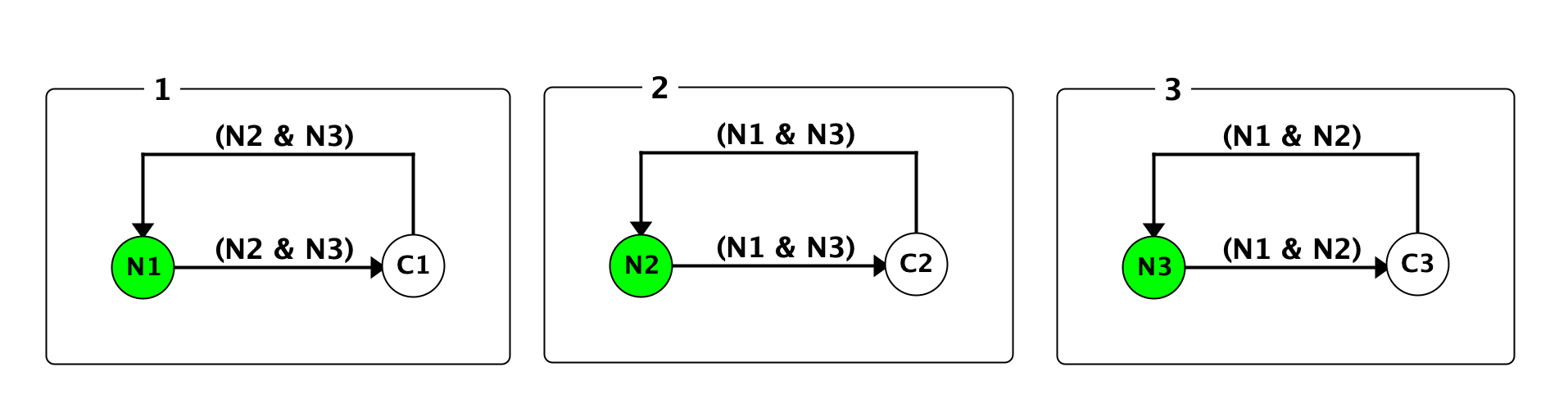}
\caption{The repaired program $\PrgR$}
\label{fig:mutex3progrep}
\end{center}
\end{figure}

\section{Conclusions}

We presented a theory for the substructures of a given Kripke structure $\M$,  their organization into lattices, and how these substructures can interact with a group of state-mappings of $\M$. We showed the existence of an isomorphism between the lattice $\Lambda_{M,\Gmax}$ of $G$-maximal substructures of $\M$ and the lattice of substructures of $\M\qmap$. Using this theory we were able to devise a method for repairing concurrent programs which exploits symmetry. 

\clearpage
\bibliography{Emerson,refs,repair}

\begin{thebibliography}{10}

\bibitem{AJKR14}
Benjamin Aminof, Swen Jacobs, Ayrat Khalimov, and Sasha Rubin.
\newblock Parameterized model checking of token-passing systems.
\newblock In Kenneth~L. McMillan and Xavier Rival, editors, {\em Verification,
  Model Checking, and Abstract Interpretation - 15th International Conference,
  {VMCAI} 2014, San Diego, CA, USA, January 19-21, 2014, Proceedings}, volume
  8318 of {\em Lecture Notes in Computer Science}, pages 262--281. Springer,
  2014.
\newblock \href {https://doi.org/10.1007/978-3-642-54013-4\_15}
  {\path{doi:10.1007/978-3-642-54013-4\_15}}.

\bibitem{ABS18}
Paul~C. Attie, Kinan Dak{-}Al{-}Bab, and Mouhammad Sakr.
\newblock Model and program repair via {SAT} solving.
\newblock {\em {ACM} Trans. Embed. Comput. Syst.}, 17(2):32:1--32:25, 2018.
\newblock \href {https://doi.org/10.1145/3147426} {\path{doi:10.1145/3147426}}.

\bibitem{ParametrizedSurvey15}
Roderick Bloem, Swen Jacobs, Ayrat Khalimov, Igor Konnov, Sasha Rubin, Helmut
  Veith, and Josef Widder.
\newblock {\em Decidability of Parameterized Verification}.
\newblock Synthesis Lectures on Distributed Computing Theory. Morgan {\&}
  Claypool Publishers, 2015.
\newblock \href {https://doi.org/10.2200/S00658ED1V01Y201508DCT013}
  {\path{doi:10.2200/S00658ED1V01Y201508DCT013}}.

\bibitem{BCG88}
Michael~C. Browne, Edmund~M. Clarke, and Orna Gr{\"u}mberg.
\newblock Characterizing finite kripke structures in propositional temporal
  logic.
\newblock {\em Theoretical computer science}, 59(1-2):115--131, 1988.

\bibitem{BEGL99}
F.~Buccafurri, T.~Eiter, G.~Gottlob, and N.~Leone.
\newblock Enhancing model checking in verification by {AI} techniques.
\newblock {\em Artif. Intell.}, 112:57--104, 1999.

\bibitem{CR09}
M.~Carrillo and D.A. Rosenblueth.
\newblock A method for ctl model update, representing kripke structures as
  table systems.
\newblock {\em IJPAM}, 52:401--431, 2009.

\bibitem{CBSK12}
G.~Chatzieleftheriou, B.~Bonakdarpour, S.A. Smolka, and P.~Katsaros.
\newblock Abstract model repair.
\newblock In AlwynE. Goodloe and Suzette Person, editors, {\em NASA Formal
  Methods}, volume 7226 of {\em Lecture Notes in Computer Science}, pages
  341--355. Springer Berlin Heidelberg, Norfolk, VA, USA, 2012.

\bibitem{CEJS98}
Edmund~M. Clarke, E.~Allen Emerson, Somesh Jha, and A.~Prasad Sistla.
\newblock Symmetry reductions in model checking.
\newblock In {\em {CAV}}, volume 1427 of {\em Lecture Notes in Computer
  Science}, pages 147--158. Springer, 1998.

\bibitem{CES86}
Edmund~M. Clarke, E.~Allen Emerson, and A.~Prasad Sistla.
\newblock Automatic verification of finite-state concurrent systems using
  temporal logic specifications.
\newblock {\em {ACM} Trans. Program. Lang. Syst.}, 8(2):244--263, 1986.

\bibitem{CGJ97}
Edmund~M. Clarke, Orna Grumberg, and Somesh Jha.
\newblock Verifying parameterized networks.
\newblock {\em {ACM} Trans. Program. Lang. Syst.}, 19(5):726--750, 1997.
\newblock \href {https://doi.org/10.1145/265943.265960}
  {\path{doi:10.1145/265943.265960}}.

\bibitem{CJEF96}
Edmund~M. Clarke, Somesh Jha, Reinhard Enders, and Thomas Filkorn.
\newblock Exploiting symmetry in temporal logic model checking.
\newblock {\em Formal Methods Syst. Des.}, 9(1/2):77--104, 1996.
\newblock \href {https://doi.org/10.1007/BF00625969}
  {\path{doi:10.1007/BF00625969}}.

\bibitem{CTTV04}
Edmund~M. Clarke, Muralidhar Talupur, Tayssir Touili, and Helmut Veith.
\newblock Verification by network decomposition.
\newblock In Philippa Gardner and Nobuko Yoshida, editors, {\em {CONCUR} 2004 -
  Concurrency Theory, 15th International Conference, London, UK, August 31 -
  September 3, 2004, Proceedings}, volume 3170 of {\em Lecture Notes in
  Computer Science}, pages 276--291. Springer, 2004.
\newblock \href {https://doi.org/10.1007/978-3-540-28644-8\_18}
  {\path{doi:10.1007/978-3-540-28644-8\_18}}.

\bibitem{CTV06}
Edmund~M. Clarke, Muralidhar Talupur, and Helmut Veith.
\newblock Environment abstraction for parameterized verification.
\newblock In E.~Allen Emerson and Kedar~S. Namjoshi, editors, {\em
  Verification, Model Checking, and Abstract Interpretation, 7th International
  Conference, {VMCAI} 2006, Charleston, SC, USA, January 8-10, 2006,
  Proceedings}, volume 3855 of {\em Lecture Notes in Computer Science}, pages
  126--141. Springer, 2006.
\newblock \href {https://doi.org/10.1007/11609773\_9}
  {\path{doi:10.1007/11609773\_9}}.

\bibitem{Eme90}
E.~Allen Emerson.
\newblock Temporal and modal logic.
\newblock In {\em Handbook of Theoretical Computer Science, Volume {B:} Formal
  Models and Sematics {(B)}}, pages 995--1072. Elsevier and {MIT} Press, 1990.

\bibitem{EC82}
E.~Allen Emerson and Edmund~M. Clarke.
\newblock Using branching time temporal logic to synthesize synchronization
  skeletons.
\newblock {\em Sci. Comput. Program.}, 2(3):241--266, 1982.

\bibitem{EHT00}
E.~Allen Emerson, John Havlicek, and Richard~J. Trefler.
\newblock Virtual symmetry reduction.
\newblock In {\em {LICS}}, pages 121--131. {IEEE} Computer Society, 2000.

\bibitem{EK00}
E.~Allen Emerson and Vineet Kahlon.
\newblock Reducing model checking of the many to the few.
\newblock In {\em {CADE}}, volume 1831 of {\em Lecture Notes in Computer
  Science}, pages 236--254. Springer, 2000.

\bibitem{EK02}
E.~Allen Emerson and Vineet Kahlon.
\newblock Model checking large-scale and parameterized resource allocation
  systems.
\newblock In {\em {TACAS}}, volume 2280 of {\em Lecture Notes in Computer
  Science}, pages 251--265. Springer, 2002.

\bibitem{EK03c}
E.~Allen Emerson and Vineet Kahlon.
\newblock Exact and efficient verification of parameterized cache coherence
  protocols.
\newblock In {\em {CHARME}}, volume 2860 of {\em Lecture Notes in Computer
  Science}, pages 247--262. Springer, 2003.

\bibitem{EK03b}
E.~Allen Emerson and Vineet Kahlon.
\newblock Model checking guarded protocols.
\newblock In {\em {LICS}}, pages 361--370. {IEEE} Computer Society, 2003.

\bibitem{EK03}
E.~Allen Emerson and Vineet Kahlon.
\newblock Rapid parameterized model checking of snoopy cache coherence
  protocols.
\newblock In {\em {TACAS}}, volume 2619 of {\em Lecture Notes in Computer
  Science}, pages 144--159. Springer, 2003.

\bibitem{EN95}
E.~Allen Emerson and Kedar~S. Namjoshi.
\newblock Reasoning about rings.
\newblock In {\em {POPL}}, pages 85--94. {ACM} Press, 1995.

\bibitem{EN96}
E.~Allen Emerson and Kedar~S. Namjoshi.
\newblock Automatic verification of parameterized synchronous systems (extended
  abstract).
\newblock In {\em {CAV}}, volume 1102 of {\em Lecture Notes in Computer
  Science}, pages 87--98. Springer, 1996.

\bibitem{EN98}
E.~Allen Emerson and Kedar~S. Namjoshi.
\newblock Verification of parameterized bus arbitration protocol.
\newblock In {\em {CAV}}, volume 1427 of {\em Lecture Notes in Computer
  Science}, pages 452--463. Springer, 1998.

\bibitem{EN03}
E.~Allen Emerson and Kedar~S. Namjoshi.
\newblock On reasoning about rings.
\newblock {\em Int. J. Found. Comput. Sci.}, 14(4):527--550, 2003.

\bibitem{EmSi96}
E~Allen Emerson and A~Prasad Sistla.
\newblock Symmetry and model checking.
\newblock {\em Formal methods in system design}, 9(1):105--131, 1996.

\bibitem{ES97}
E.~Allen Emerson and A.~Prasad Sistla.
\newblock Utilizing symmetry when model-checking under fairness assumptions: An
  automata-theoretic approach.
\newblock {\em {ACM} Trans. Program. Lang. Syst.}, 19(4):617--638, 1997.

\bibitem{ET98}
E.~Allen Emerson and Richard~J. Trefler.
\newblock Model checking real-time properties of symmetric systems.
\newblock In {\em {MFCS}}, volume 1450 of {\em Lecture Notes in Computer
  Science}, pages 427--436. Springer, 1998.

\bibitem{ET99}
E.~Allen Emerson and Richard~J. Trefler.
\newblock From asymmetry to full symmetry: New techniques for symmetry
  reduction in model checking.
\newblock In {\em {CHARME}}, volume 1703 of {\em Lecture Notes in Computer
  Science}, pages 142--156. Springer, 1999.

\bibitem{EW05}
E.~Allen Emerson and Thomas Wahl.
\newblock Dynamic symmetry reduction.
\newblock In {\em {TACAS}}, volume 3440 of {\em Lecture Notes in Computer
  Science}, pages 382--396. Springer, 2005.

\bibitem{EFM99}
Javier Esparza, Alain Finkel, and Richard Mayr.
\newblock On the verification of broadcast protocols.
\newblock In {\em 14th Annual {IEEE} Symposium on Logic in Computer Science,
  Trento, Italy, July 2-5, 1999}, pages 352--359. {IEEE} Computer Society,
  1999.
\newblock \href {https://doi.org/10.1109/LICS.1999.782630}
  {\path{doi:10.1109/LICS.1999.782630}}.

\bibitem{GS92}
Steven~M. German and A.~Prasad Sistla.
\newblock Reasoning about systems with many processes.
\newblock {\em J. {ACM}}, 39(3):675--735, 1992.
\newblock \href {https://doi.org/10.1145/146637.146681}
  {\path{doi:10.1145/146637.146681}}.

\bibitem{JJW0S20}
Nouraldin Jaber, Swen Jacobs, Christopher Wagner, Milind Kulkarni, and Roopsha
  Samanta.
\newblock Parameterized verification of systems with global synchronization and
  guards.
\newblock In Shuvendu~K. Lahiri and Chao Wang, editors, {\em Computer Aided
  Verification - 32nd International Conference, {CAV} 2020, Los Angeles, CA,
  USA, July 21-24, 2020, Proceedings, Part {I}}, volume 12224 of {\em Lecture
  Notes in Computer Science}, pages 299--323. Springer, 2020.
\newblock \href {https://doi.org/10.1007/978-3-030-53288-8\_15}
  {\path{doi:10.1007/978-3-030-53288-8\_15}}.

\bibitem{JGB05}
B.~Jobstmann, A.~Griesmayer, and R.~Bloem.
\newblock Program repair as a game.
\newblock In {\em CAV}, pages 226--238, Berlin, Heidelberg, 2005.
  Springer-Verlag.

\bibitem{SJB05}
S.~Staber, B.~Jobstmann, and R.~Bloem.
\newblock Finding and fixing faults.
\newblock In {\em CHARME '05}, pages 35--49, Berlin, Heidelberg, 2005.
  Springer-Verlag.
\newblock Springer LNCS no. 3725.

\bibitem{EJ15}
Christian von Essen and Barbara Jobstmann.
\newblock Program repair without regret.
\newblock {\em Formal Methods Syst. Des.}, 47(1):26--50, 2015.
\newblock \href {https://doi.org/10.1007/s10703-015-0223-6}
  {\path{doi:10.1007/s10703-015-0223-6}}.

\end{thebibliography}

\newpage
\appendix
\section{CTL Semantics}
\label{app:ctl}

\begin{definition}
\label{def:kripke_semantics}
\label{defn:kripkeSemantics}
$\M,s \sat \vphi$ means that formula $\vphi$ is true  in state $s$ of structure $M$, and
$\M,s \not\sat \vphi$ means that $\vphi$ is false in $s$.
\begin{enumerate}
\item \label{clause:kripkeSemantics:true} $\M,s \sat \true$ and $\M,s \not\sat \false$
\item \label{clause:kripkeSemantics:lab} $\M,s \sat p  \text{ iff } p \in L(s)$ where atomic proposition $p \in \AP$

\item \label{clause:kripkeSemantics:neg} $\M,s \sat \neg \vphi$  \text{ iff } $\M,s \not\sat \vphi$
\item \label{clause:kripkeSemantics:and} $M,s \sat \vphi \land \psi
           \text{ iff } \M,s \sat \vphi \text{ and } M,s \sat \psi$
\item \label{clause:kripkeSemantics:or} $M,s \sat \vphi \lor \psi
           \text{ iff } \M,s \sat \vphi \text{ or } M,s \sat \psi$

\item \label{clause:kripkeSemantics:AX} $\M,s \sat \AX \vphi \text{ iff }
            \text{ for all $t$ such that } (s,t) \in T: (\M,t) \sat \vphi$
\item \label{clause:kripkeSemantics:EX} $M,s \sat \EX \vphi \text{ iff } 
            \text{ there exists $t$ such that } (s,t) \in T \text{ and }  (\M,t) \sat \vphi$

\item \label{clause:kripkeSemantics:AV} $\M,s \sat \A[\vphi \V \psi] \text{ iff }   
           \text{ for all fullpaths $\pi = s_0,s_1,\ldots$  starting from $s=s_0$: }$\\
\ind $\forall k \geq 0:   (\forall j < k: \M,s_j \not\sat \vphi ) \text{ implies } \M,s_k \sat \psi$

\item \label{clause:kripkeSemantics:EV} $\M,s \sat \E[\vphi \V \psi] \text{ iff } 
           \text{ for some fullpath $\pi = s_0,s_1,\ldots$  starting from $s=s_0$: }$\\
\ind $\forall k \geq 0: (\forall j < k: \M,s_j \not\sat \vphi) \text{ implies } \M,s_k \sat \psi$
\end{enumerate}
\end{definition}

\end{document}